\documentclass[11pt]{article}
\usepackage{amsmath}
\usepackage{graphicx,psfrag,epsf}
\usepackage{enumerate}
\usepackage{psfrag,epsf}
\usepackage{url} 
\usepackage{amsmath,amssymb,amsfonts,amsthm,mathtools}
\usepackage{bm,bbm,mathrsfs}
\usepackage{color,xcolor}
\usepackage{subfigure}
\usepackage{algorithm,algpseudocode}

\usepackage{scalefnt,bm}
\usepackage{multirow}
\usepackage{authblk} 
\usepackage[colorlinks=true, citecolor=blue]{hyperref}

\usepackage[arrow]{xy}


\sloppy

\usepackage{natbib}
\usepackage{dsfont}
\usepackage[title]{appendix}
\urlstyle{same}
\usepackage{lmodern}
\usepackage[title]{appendix}




\usepackage[left=1in,top=1.1in,right=0.5in,bottom=1in]{geometry}



\usepackage{letltxmacro}
\makeatletter
\let\oldr@@t\r@@t
\def\r@@t#1#2{%
	\setbox0=\hbox{$\oldr@@t#1{#2\,}$}\dimen0=\ht0
	\advance\dimen0-0.2\ht0
	\setbox2=\hbox{\vrule height\ht0 depth -\dimen0}%
	{\box0\lower0.4pt\box2}}
\LetLtxMacro{\oldsqrt}{\sqrt}
\renewcommand*{\sqrt}[2][\ ]{\oldsqrt[#1]{#2}}
\makeatother

\theoremstyle{definition}

\newtheorem{theorem}{Theorem}[section]

\newtheorem{corollary}[theorem]{Corollary}

\newtheorem{definition}{Definition}[section]

\newtheorem{proposition}[theorem]{Proposition}
\newtheorem{remark}[theorem]{Remark}
\numberwithin{equation}{section} 

\makeatletter
\def\@seccntformat#1{\@ifundefined{#1@cntformat}%
	{\csname the#1\endcsname\quad}
	{\csname #1@cntformat\endcsname}
}
\makeatother

\markright{{\scriptsize RWprewetting-13; version from \today
}}
\newif\ifShowComments
\ShowCommentstrue
\def\strutdepth{\dp\strutbox}
\def\druk#1{\strut\vadjust{\kern-\strutdepth
        {\vtop to \strutdepth{%
                \baselineskip\strutdepth\vss
                        \llap{\hbox{#1}\quad}\null}}}}





\title{\bf
\huge
%
Family of multivariate extended skew-elliptical distributions: Statistical
properties, inference and application
}
\author{
\text{Roberto Vila}$^{1}$\thanks{Corresponding author: Roberto Vila, email: {rovig161@gmail.com}
\newline
}\,\,,
\text{Helton Saulo}$^{1,2}$,
\text{Leonardo Santos}$^{1}$,
\text{João Monteiros}$^{1}$
and
\text{Felipe Quintino}$^{1}$\\
{\small $^{1}$ Department of Statistics, University of Brasilia, Brasilia, Brazil}\\
{\small $^{2}$ Department of Economics, Federal University of Pelotas, Pelotas, Brazil}\\
}

\setcounter{Maxaffil}{0}

\begin{document}
\maketitle
\begin{abstract}
In this paper we propose a family of multivariate asymmetric distributions over an arbitrary subset of set of real numbers  which is defined in terms of the well-known elliptically symmetric distributions. We explore essential properties, including the characterization of the density function for various distribution types, as well as other key aspects such as identifiability, quantiles, stochastic representation, conditional and marginal distributions, moments, Kullback-Leibler Divergence, and parameter estimation.  A Monte Carlo simulation study is performed for examining the performance of the developed parameter estimation method.  Finally, the proposed models are used to analyze socioeconomic data.
\end{abstract}
\smallskip
\noindent
{\small {\bfseries Keywords.} {
Multivariate extended $G$-skew-elliptical distribution $\cdot$ 
\text{EGSE}$_n$ model $\cdot$
Multivariate extended $G$-skew-Student-$t$ $\cdot$
Multivariate extended $G$-skew-normal.}}
\\
{\small{\bfseries Mathematics Subject Classification (2010).} {MSC 60E05 $\cdot$ MSC 62Exx $\cdot$ MSC 62Fxx.}}


\section{Introduction}

Understanding the relationships among multiple jointly observed variables presents a significant challenge in modeling real-world applications.
Data reduction, Grouping, Investigation of the dependence among variables, Prediction, and Hypothesis testing are some of the usual methods. Many of these multivariate methods are based on the multivariate normal distribution. 
There are several applications of multivariate models such as in: 
body composition of athletes \citep{azzalini1996multivariate}; climatology \citep{Genton2010}; outpatient expense and investment in education \citep{svcl:22};
fatigue data \citep{Vila:22}; 
soccer data \citep{Vila:23};
income and consumption data \citep{lima2024assessing}.
We refer the reader to \cite{johnson2002applied} for further details on multivariate analysis.

General families of multivariate distributions have garnered significant attention over the past few decades. 
Bivariate symmetric Heckman models, their mathematical properties, and real data applications were studied by \cite{svcl:22}. 
\cite{Vila:22} extended the definition of univariate log-symmetric distributions to the bivariate case.
\cite{Vila:23} introduced the bivariate unit-log-symmetric model based on the
bivariate log-symmetric distribution.
\cite{fkn:90} extensively presents more general symmetric multivariate models beyond the multivariate normal distribution. In particular, the well-known elliptical symmetric distributions are studied in detail in their book.

However, to better characterize real-world phenomena, studying asymmetric distributions is of great interest.
Furthermore, asymmetry in distributions is common in a wide range of phenomena, including 
the distribution of money 
and the strength of carbon fibers when subjected to tension efforts  \citep[see, for example,][and the references therein]{lima2024assessing,quintino2024estimation}.
Natural extensions of univariate asymmetric models to multivariate ones are widely discussed in the literature. Several authors have made significant advances in the well-known multivariate skew-symmetric and skew-elliptical distributions, which have the multivariate normal distribution as a particular case.
Multivariate versions of the skew-normal distribution were introduced in \cite{azzalini1996multivariate} and \cite{Branco01}.
\cite{arellano2006unified} presented a unified view on skewed distributions arising from selections.
\cite{Genton2010} introduced a family of multivariate log-skew-elliptical distributions, extending several multivariate distributions with positive support.
\cite{Avg:10} introduced
a class of multivariate extended skew-t distributions.

In this paper, we study a new extended family of multivariate skew-elliptical distributions. Our model is based on a multivariate elliptical (symmetric) distribution and in a sequence of real functions $G_1,\ldots, G_n$ appropriately chosen.
In addition, our framework generalizes the multivariate models of \cite{Avg:10} when \( G_i \) are all identity functions, and \cite{Genton2010} when \( G_i \) are all logarithm functions.

Our main contributions are
\begin{itemize}
    \item to derive a new extended family of multivariate skew-elliptical distributions;
    \item to derive analytically several statistical properties of the new distribution;
    \item to propose an estimation procedure for the parameters of the new distribution and validate such procedure via a simulation study and
    \item to apply the proposed models to a real data set on socioeconomic indicators of Switzerland's 47 French-speaking provinces.
\end{itemize}

The paper is organized as follows: in Section \ref{sec:asym_models}, we present a general procedure to construct multivariate asymmetric distributions. Section \ref{Multivariate extended unit} deals with the derivation of the new family of multivariate distributions.
Statistical properties of the new family of distributions are presented in Section \ref{sec:statistical_properties}.
In Section \ref{sec:simulations}, we discuss a simulation study and in Section \ref{sec:applications} the proposed models are applied to a data set on socioeconomic indicators for demonstrating the practical utility of the multivariate asymmetric models introduced here. The last section presents the conclusions.

\section{Multivariate asymmetric distributions}\label{sec:asym_models}

Let $G_1,\ldots,G_n: D\to \mathbb{R}$, $n\in\mathbb{N}$, be a sequence of continuous strictly monotonic functions (which for simplicity of presentation we will assume that they are increasing), where $D\neq\emptyset$ is an arbitrary subset of the set of real numbers. Let $\boldsymbol{X}=(X_1,\ldots,X_n)^\top$ 
denote a $n$-dimensional, absolutely continuous random vector   with support $\mathbb{R}^n$ and let $Z$ be a continuous univariate random variable. Based on $G_1^{-1},\ldots,G_n^{-1}$ (the inverse functions of $G_1,\ldots,G_n$, respectively), $\boldsymbol{X}$ and $Z$, we define a new $n$-dimensional random vector  $\boldsymbol{Y}=(Y_1,\ldots,Y_n)^\top$, with support $D^n$ (the Cartesian product of $n$ sets $D,\ldots,D$), as follows
\begin{align}\label{stochastics-rep}
\boldsymbol{Y}
=
\boldsymbol{T}\,\vert\, \boldsymbol{\lambda}^\top (\boldsymbol{X}-\boldsymbol{\mu})+\tau>Z,
\end{align}
where $\boldsymbol{T}=(T_1,\ldots,T_n)^\top$,  $T_i=G_i^{-1}(X_i), i=1,\ldots,n$, 
$\tau\in\mathbb{R}$ is the extension parameter, $\boldsymbol{\lambda}=(\lambda_1,\ldots,\lambda_n)^\top\in\mathbb{R}^n$ is the skewness parameter vector
and
$\boldsymbol{\mu}=(\mu_1,\ldots,\mu_n)^{\top}\in\mathbb{R}^n$ is a location parameter. 
That is, $\boldsymbol{Y}$ is the conditional random vector for $\boldsymbol{T}$ given  $\boldsymbol{\lambda}^\top (\boldsymbol{X}-\boldsymbol{\mu})+\tau>Z$.


Let $f_{\boldsymbol{Y}}$ be the joint probability density function (PDF) of $\boldsymbol{Y}$.
Bayes' rule provides 
\begin{align}
	f_{\boldsymbol{Y}}(\boldsymbol{y})
	&=\displaystyle
	\dfrac{\displaystyle\int_0^\infty f_{\boldsymbol{T},\boldsymbol{\lambda}^\top (\boldsymbol{X}-\boldsymbol{\mu})-Z+\tau}(\boldsymbol{y},s){\rm d}s}{\mathbb{P}(\boldsymbol{\lambda}^\top (\boldsymbol{X}-\boldsymbol{\mu})+\tau>Z)}, 
	\quad \boldsymbol{y}=(y_1,\ldots,y_n)^\top\in D^n, 
	\nonumber
	\\[0,2cm]
	&= \displaystyle
	f_{\boldsymbol{T}}(\boldsymbol{y}) \,
	\dfrac{\displaystyle\int_0^\infty f_{\boldsymbol{\lambda}^\top (\boldsymbol{X}-\boldsymbol{\mu})-Z+\tau\,\vert\,\boldsymbol{T}=\boldsymbol{y}}(s){\rm d}s}{\mathbb{P}(Z-\boldsymbol{\lambda}^\top (\boldsymbol{X}-\boldsymbol{\mu})<\tau)} 
	\label{pdf-Y}
	\\[0,2cm]
	&=\displaystyle
	f_{\boldsymbol{T}}(\boldsymbol{y}) \,
	{F_Z(\boldsymbol{\lambda}^\top (\boldsymbol{y}_{G}-\boldsymbol{\mu})+\tau\,\vert\, \boldsymbol{X}=\boldsymbol{y}_{G})\over F_{Z-\boldsymbol{\lambda}^\top (\boldsymbol{X}-\boldsymbol{\mu})}(\tau)},  \quad
	\boldsymbol{y}_{G}
	\equiv
	(G_1(y_1),\ldots,G_n(y_n))^\top\in\mathbb{R}^n. \label{pdf-Y-1}
\end{align}

Chain rule gives 
$f_{\boldsymbol{T}}(\boldsymbol{y})
=
f_{\boldsymbol{X}}
(\boldsymbol{y}_{G}) \prod_{i=1}^n G_i'(y_i)$. So, from \eqref{pdf-Y-1}
we have
\begin{align}\label{def-pdf-T}
f_{\boldsymbol{Y}}(\boldsymbol{y})
=
f_{\boldsymbol{X}}(\boldsymbol{y}_{G}) \,
{F_Z(\boldsymbol{\lambda}^\top (\boldsymbol{y}_{G}-\boldsymbol{\mu})+\tau\,\vert\, \boldsymbol{X}=\boldsymbol{y}_{G})\over F_{Z-\boldsymbol{\lambda}^\top (\boldsymbol{X}-\boldsymbol{\mu})}(\tau)}
\, \prod_{i=1}^n G_i'(y_i),
\quad 
\boldsymbol{y}\in D^n,
\end{align}
where $\boldsymbol{y}_{G}$ is as given in \eqref{pdf-Y-1}. 
\begin{remark}
Given the joint distribution of $\boldsymbol{X}$ and $Z$, for each choice of functions $G_1,\ldots,G_n$, $f_{\boldsymbol{Y}}$ represents a large family of asymmetric distributions on the hypercube $D^n$. In this work, for simplicity of presentation, we will assume that $(Z,\boldsymbol{X})^\top$ has  a multivariate elliptical (symmetric) (ELL$_{n+1}$) distribution \citep{fkn:90}; see Section \ref{Multivariate extended unit}.
\end{remark}

Table \ref{table:2-1} presents some examples of functions $G_i$'s for use in \eqref{def-pdf-T}.
\begin{table}[htb!]
	\caption{Some functions $G_i$'s with domain $D$ and its respective inverses and derivatives.}
	\vspace*{0.15cm}
	\centering 
	\begin{tabular}{lclll} 
		\hline
		$G_i(x)$  & $D$ & $G_i^{-1}(x)$ & $G_i'(x)$ & Parameters  
		\\ [0.5ex] 
		\noalign{\hrule height 0.5pt}
		$\tan((x-{1\over 2})\pi)$ & $(0,1)$ & ${1\over 2} + {{\rm arctan}(x)\over \pi}$ & ${\pi\over\sin^2(\pi x)}$  & $-$
		\\ [1ex]
      $- \log(1-x)$ &(0,1)& $1-\exp(-x)$  &$\frac{1}{1-x}$ & $-$
   \\ [1ex] 
      $1- \log(-\log(x))$ &(0,1) &$ \exp(-\exp(-x+1))$ & $\frac{-1}{x \log(x)}$ & $-$
    \\ [1ex] 	
     $ \log(\log(\frac{1}{-x+1})+1)$ & (0,1) &$ 1- \exp(- \exp(x)+1)$ & $ \frac{(-x+1)^{-1}}{ \log( \frac{1}{-x+1})+1}$ & $-$
    \\ [1ex] 
				$\log(\frac{x}{1-x})$ & $(0,1)$ & ${\exp(x)\over1+\exp(x)}$ & ${1\over x(1-x)}$  & $-$
		\\ [1ex]
		$\log(-\log(1-x))$ & $(0,1)$ & $1-\exp(-\exp(x))$ & ${1\over (1-x)\log({1\over 1-x})}$  & $-$
		\\ [1ex]
		$\log(\frac{x^{3}}{1-x^{3}})$& (0,1) & $ \big[\frac{ \exp(x)}{1 + \exp(x)} \big]^{\frac{1}{3}}$ & $\frac{3}{x(1-x^3)}$ & $-$
		\\ [1ex] 
        $ \log(\frac{x^5}{1-x^5})$ & (0,1)& $ \Big[ \frac{ \exp(x)}{1 + \exp(x)} \Big]^{\frac{1}{5}}$ & $\frac{5}{x(1-x^5)}$  & $-$
		\\ [1ex] 	
				$\log(x)$ & $(0,\infty)$ & $\exp(x)$  & ${1\over x}$ & $-$  
		\\ [1ex]
		$x-{1\over x}$ & $(0,\infty)$ & ${1\over 2} (x + \sqrt{x^2 + 4}\,)$ & $1+{1\over x^2}$ & $-$
		\\ [1ex] 
		${1\over \alpha}
		\left(
		\sqrt{x\over \beta}-\sqrt{\beta\over x}
		\right)$ & $(0,\infty)$  & 
		$\beta\left[{\alpha\over 2}x+\sqrt{({\alpha\over 2}x)^2+1}\,\right]^2$
		& ${1\over 2\alpha x}\left(\sqrt{x\over \beta}+\sqrt{\beta\over x}\right)$ & $\alpha,\beta>0$
		\\ [2.5ex] 
		{${2H_i(x)-1\over H_i(x)[1-H_i(x)]}$} & $(0,\infty)$ & { $H^{-1}_i\big({x+\sqrt{x^2+4}\over 2+x+\sqrt{x^2+4}}\big)$}& {${H'_i(x)\over [1-H_i(x)]^2}+{H'_i(x)\over H_i^2(x)}$} & $-$
		\\ [1ex] 
				$ax^p+b$ & $(-\infty,\infty)$ & $({x-b\over a})^{1/p}$  & $apx^{p-1}$ & $a>0, b\in\mathbb{R}$, $p$ odd
	\\ [1ex]
	$\sinh(x)$ & $(-\infty,\infty)$ & $\sinh^{-1}(x)$ & $\cosh(x)$ & $-$
	\\ [1ex] 
	$-\log({1\over F_i(x)}-1)$ & $(-\infty,\infty)$  & $F_i^{-1}({1\over \exp(-x)+1})$ & ${F'_i(x)\over F_i(x)[1-F_i(x)]}$ & $-$
	\\ [1ex] 	
	\hline	
	\end{tabular}
	\label{table:2-1} 
\end{table}
\noindent
	In Table \ref{table:2-1}, $F_i$ (respectively, $H_i$) represents the CDF of a continuous random variable with support on the whole real line (respectively, with positive support). By way of example, we can take $F_i$ as being the CDF of the normal, Gumbel, Student-$t$, logistic, skew normal or symmetric random variable. On the other hand, we can consider $H_i$ as being the CDF of the exponential, Weibull, Gamma, Birnbaum-Saunders (BS) or log-symmetric random variable.

\section{Multivariate extended $G$-skew-elliptical distributions}\label{Multivariate extended unit}

In this section, we provide a formal definition  
of the family of distributions that are the object of study in this work, we refer to the  family of multivariate extended $G$-skew-elliptical (EGSE$_n$) distributions.  In other words, we will obtain the PDF of $\boldsymbol{Y}$ defined in \eqref{stochastics-rep} where $Z$ and $\boldsymbol{X}$ have a probabilistic dependency relationship.

%
Indeed, from now on we assume that the $(n+1)$-dimensional vector $\boldsymbol{V}$, defined as $\boldsymbol{V}=(Z,\boldsymbol{X})^\top$,
%
has  a multivariate elliptical (symmetric) (ELL$_{n+1}$) distribution \citep{fkn:90} with location vector
$\boldsymbol{\mu}_{\boldsymbol{V}}=(0,\boldsymbol{\mu})^\top$, for
$\boldsymbol{\mu}=(\mu_1,\ldots,\mu_n)^{\top}\in\mathbb{R}^n$,  
positive definite $(n+1)\times (n+1)$ dispersion matrix
\begin{align*}
\boldsymbol{\Sigma}_{\boldsymbol{V}}
=
\begin{pmatrix}
1 & \boldsymbol{0}_{n\times 1}^\top \\
\boldsymbol{0}_{n\times 1} & \boldsymbol{\Sigma}
\end{pmatrix},
\quad 
\boldsymbol{\Sigma}
\equiv
\boldsymbol{\Sigma}_{\boldsymbol{X}}=(\sigma_{ij})_{n\times n},
\
\sigma_{ij}={\rm Cov}(X_i,X_j), \ i,j=1,\ldots,n,
\end{align*}
and density generator $g^{(n+1)}$. For simplicity  we use the notation
$\boldsymbol{V}\sim {\rm ELL}_{n+1}(\boldsymbol{\mu}_{\boldsymbol{V}},\boldsymbol{\Sigma}_{\boldsymbol{V}},g^{(n+1)})$.
The density function of $\boldsymbol{V}\sim {\rm ELL}_{n+1}(\boldsymbol{\mu}_{\boldsymbol{V}},\boldsymbol{\Sigma}_{\boldsymbol{V}},g^{(n+1)})$ at $\boldsymbol{x}=(x_1,\ldots,x_{n+1})^\top \in \mathbb{R}^{n+1}$ is given by
\begin{equation}\label{eq:pdf:sym}
f_{\boldsymbol{V}}(\boldsymbol{x})
=
f_{\boldsymbol{V}}(\boldsymbol{x}; \boldsymbol{\mu}_{\boldsymbol{V}},\boldsymbol{\Sigma}_{\boldsymbol{V}},g^{(n+1)}) 
= 
\frac{1}{|\boldsymbol{\Sigma}_{\boldsymbol{V}}|^{1/2} { Z_{g^{(n+1)}}}}\, 
g^{(n+1)} ((\boldsymbol{x} - \boldsymbol{\mu}_{\boldsymbol{V}})^\top \boldsymbol{\Sigma}_{\boldsymbol{V}}^{-1} (\boldsymbol{x} - \boldsymbol{\mu}_{\boldsymbol{V}})), 
\end{equation}
where 
\begin{align*}
Z_{g^{(n+1)}}={\pi^{(n+1)/2}\over \Gamma((n+1)/2)} \, \int_{0}^{\infty} u^{(n+1)/2-1}g^{(n+1)}(u){\rm d}u
\end{align*}
is a normalization constant.

Table \ref{table:1} presents some examples of generators for use in \eqref{eq:pdf:sym}.
\begin{table}[H]
	\caption{Normalization functions $(Z_{g^{(n)}})$ and density generators $(g^{(n)})$.}
	\vspace*{0.15cm}
	\centering 
	\begin{tabular}{llll} 
		\hline
		Multivariate distribution 
		& $Z_{g^{(n)}}$ & $g^{(n)}(x)$ & Parameter 
		\\ [0.5ex] 
		\noalign{\hrule height 0.5pt}
		Extended $G$-skew-Student-$t$
		& ${{\Gamma({\nu/ 2})}(\nu\pi)^{n/2}\over{\Gamma({(\nu+n)/ 2})}}$  
		& $(1+{x\over\nu})^{-(\nu+n)/ 2}$  &  $\nu>0$
		\\ [1ex]		
		Extended $G$-skew-normal
		& $(2\pi)^{n/2}$ & $\exp(-x/2)$ & $-$ 
		\\ [1ex] 	
		\hline	
	\end{tabular}
	\label{table:1} 
\end{table}


%
It is well-known that
all elliptic distributions are invariant to linear transformations \cite[see][]{fkn:90}, that is, if $\boldsymbol{S}\sim {\rm ELL}_n(\boldsymbol{\mu},\boldsymbol{\Omega},g^{(n)})$, for some positive definite dispersion matrix $\boldsymbol{\Omega}$, then $\boldsymbol{c}+\boldsymbol{A}\boldsymbol{S}\sim {\rm ELL}_n(\boldsymbol{c}+\boldsymbol{A}\boldsymbol{\mu},\boldsymbol{A}\boldsymbol{\Omega}\boldsymbol{A}^\top,g^{(n)})$, where $\boldsymbol{A}$ is a square matrix and $\boldsymbol{c}\in\mathbb{R}^n$ is a constant vector.  In particular, this implies that a linear combination of the components of $\boldsymbol{X}$ is again elliptically distributed.  
%
%
%
More precisely, we have
\begin{align}\label{id-1}
Z-\boldsymbol{\lambda}^\top (\boldsymbol{X}-\boldsymbol{\mu})
\sim {\rm ELL}_1
\big(0,
1+\boldsymbol{\lambda}^\top\boldsymbol{\Sigma}\boldsymbol{\lambda}
,g^{(1)}\big).
\end{align}
As a consequence of the last statement, we have that marginals of an elliptic distribution are elliptic. Hence,
\begin{align}\label{id-2}
\boldsymbol{X}\sim {\rm ELL}_n(\boldsymbol{\mu},\boldsymbol{\Sigma},g^{(n)}).
\end{align}

On the other hand,
it is well-known that conditionals of an elliptic distribution are again elliptic \cite[see Theorem 2.18 of][]{fkn:90}. This provides that
\begin{align}\label{id-3}
Z\,\vert\, \boldsymbol{X}=\boldsymbol{x} \sim {\rm ELL}_{1}(0,1,g_{q(\boldsymbol{x})}),
\end{align}
where
\begin{align}\label{def-q}
q(\boldsymbol{x})
=
(\boldsymbol{x}-\boldsymbol{\mu})^\top \boldsymbol{\Sigma}^{-1}(\boldsymbol{x}-\boldsymbol{\mu})
\quad \text{and} \quad 
g_{q(\boldsymbol{x})}(s)
=
{g^{(2)}(s+q(\boldsymbol{x}))\over g^{(1)}(q(\boldsymbol{x})) }.
\end{align}
%
Let $F_{{\rm ELL}_{1}}(\cdot;\, 0,1,g)$ be the CDF corresponding to  ${\rm ELL}_{1}(0,1,g)$ distribution with generator function $g$.
So, from \eqref{id-1}, \eqref{id-2} and \eqref{id-3}, the PDF \eqref{def-pdf-T} of $\boldsymbol{Y}
=
\boldsymbol{T}\,\vert\, \boldsymbol{\lambda}^\top (\boldsymbol{X}-\boldsymbol{\mu})+\tau>Z$ can be written as
\begin{align*}
f_{\boldsymbol{Y}}(\boldsymbol{y})
=
f_{\boldsymbol{X}}(\boldsymbol{y}_{G}) \,
{
	F_{{\rm ELL}_1}(\boldsymbol{\lambda}^\top (\boldsymbol{y}_{G}-\boldsymbol{\mu})+\tau;\, 0,1,g_{q(\boldsymbol{y}_G)})
\over 
F_{{\rm ELL}_1}(\tau;\, 0, 1+\boldsymbol{\lambda}^\top\boldsymbol{\Sigma}\boldsymbol{\lambda},g^{(1)})
}
\, \prod_{i=1}^n G_i'(y_i),
\quad 
\boldsymbol{y}\in D^n,
\end{align*}
with $\boldsymbol{y}_{G}$ being as in \eqref{pdf-Y} and $\boldsymbol{X}\sim {\rm ELL}_n(\boldsymbol{\mu},\boldsymbol{\Sigma},g^{(n)})$.

Note that $F_{{\rm ELL}_1}(\tau=0;\, 0, 1+\boldsymbol{\lambda}^\top\boldsymbol{\Sigma}\boldsymbol{\lambda},g^{(1)})=1/2$ because
$Z-\boldsymbol{\lambda}^\top (\boldsymbol{X}-\boldsymbol{\mu})$ is symmetric about $0$.

\begin{definition}
	We say that a random vector $\boldsymbol{Y}=(Y_1,\ldots,Y_n)^\top$ has a multivariate extended $G$-skew-elliptical (\text{EGSE}$_n$) distribution if $\boldsymbol{Y}$ has PDF given by
\begin{align}\label{def-pdf-T*}
f_{\boldsymbol{Y}}(\boldsymbol{y})
\equiv
f_{\boldsymbol{Y}}(\boldsymbol{y};\boldsymbol{\mu},\boldsymbol{\Sigma},\boldsymbol{\lambda},\tau)
=
f_{\boldsymbol{X}}(\boldsymbol{y}_{G};\boldsymbol{\mu},\boldsymbol{\Sigma}) \,
{
	F_{{\rm ELL}_1}(\boldsymbol{\lambda}^\top (\boldsymbol{y}_{G}-\boldsymbol{\mu})+\tau;\, 0,1,g_{q(\boldsymbol{y}_G)})
	\over 
	F_{{\rm ELL}_1}(\tau;\, 0, 1+\boldsymbol{\lambda}^\top\boldsymbol{\Sigma}\boldsymbol{\lambda},g^{(1)})
}
\, \prod_{i=1}^n G_i'(y_i),
\quad 
\boldsymbol{y}\in D^n,
\end{align}	
	where
	$\boldsymbol{X}\sim \text{ ELL}_n(\boldsymbol{\mu},\boldsymbol{\Sigma},g^{(n)})$.
	For simplicity of notation, we write $\boldsymbol{Y}\sim\text{ EGSE}_n(\boldsymbol{\mu},\boldsymbol{\Sigma},\boldsymbol{\lambda},\tau,g^{(n)})$ and we commonly say that $\boldsymbol{Y}$ is an  EGSE$_n$ random vector.
\end{definition}

\begin{remark}
Standardizing the corresponding random variable of $F_{{\rm ELL}_1}(\cdot;\, 0, 1+\boldsymbol{\lambda}^\top\boldsymbol{\Sigma}\boldsymbol{\lambda},g^{(1)})$, we get 
\begin{align}
F_{{\rm ELL}_1}(\tau;\, 0, 1+\boldsymbol{\lambda}^\top\boldsymbol{\Sigma}\boldsymbol{\lambda},g^{(1)})
&=
F_{{\rm ELL}_1}\left({\tau\over \sqrt{1+\boldsymbol{\lambda}^\top\boldsymbol{\Sigma}\boldsymbol{\lambda}}};\, 0,1,g^{(1)}\right)
\nonumber
\\[0,2cm]
&=
		{1\over Z_{g^{(1)}}}\,
	\int_{-\infty}^{{\tau\over \sqrt{1+\boldsymbol{\lambda}^\top\boldsymbol{\Sigma}\boldsymbol{\lambda}}}} 
	g^{(1)}(s^2)
	{\rm d}s
	\label{id-main0}
	\\[0,2cm]
&=
		{1\over Z_{g^{(1)}}} \,
\int_{-\infty}^{\tau} 
	{1\over \sqrt{1+\boldsymbol{\lambda}^\top\boldsymbol{\Sigma}\boldsymbol{\lambda}}}\, 
g^{(1)}\left({s^2\over 1+\boldsymbol{\lambda}^\top\boldsymbol{\Sigma}\boldsymbol{\lambda}}\right)
{\rm d}s. \label{id-main00}
\end{align}
On the other hand, since $F_{{\rm ELL}_{1}}(\cdot;\, 0,1,g_{q(\boldsymbol{y}_G)})$ is the CDF of  ${\rm ELL}_{1}(0,1,g_{q(\boldsymbol{y}_G)})$ with generator function $g_{q(\boldsymbol{y}_G)}$, as given in \eqref{def-q}, we have
\begin{align}\label{id-main10}
	F_{{\rm ELL}_1}(\boldsymbol{\lambda}^\top (\boldsymbol{y}_{G}-\boldsymbol{\mu})+\tau;\, 0,1,g_{q(\boldsymbol{y}_G)})
	=
	{1\over Z_{g_{q(\boldsymbol{y}_G)}}}
	\int_{-\infty}^{\boldsymbol{\lambda}^\top (\boldsymbol{y}_{G}-\boldsymbol{\mu})+\tau}
	{{g^{(2)}(s^2+q(\boldsymbol{y}_G))}\over g^{(1)}(q(\boldsymbol{y}_G))}
	{\rm d}s,
\end{align}	
where $Z_{g_{q(\boldsymbol{y}_G)}}=\pi\int_{0}^{\infty} g_{q(\boldsymbol{y}_G)}(u){\rm d}u$.
By using \eqref{eq:pdf:sym}, \eqref{id-main0} and \eqref{id-main10} in formula \eqref{def-pdf-T*}, we obtain
\begin{align}\label{pdf-densenv}
	f_{\boldsymbol{Y}}(\boldsymbol{y})
	=
	\frac{1}{|\boldsymbol{\Sigma}|^{1/2} { Z_{g^{(n)}}}}\, 
	g^{(n)} ((\boldsymbol{y}_{G} - \boldsymbol{\mu})^\top \boldsymbol{\Sigma}^{-1} (\boldsymbol{y}_{G} - \boldsymbol{\mu})) \,
	\dfrac{\displaystyle
				{1\over Z_{g_{q(\boldsymbol{y}_G)}}}\,
		\int_{-\infty}^{\boldsymbol{\lambda}^\top (\boldsymbol{y}_{G}-\boldsymbol{\mu})+\tau}
		{g^{(2)}(s^2+q(\boldsymbol{y}_G))\over g^{(1)}(q(\boldsymbol{y}_G))}
		{\rm d}s}{	
		\displaystyle
				{1\over Z_{g^{(1)}}}\,
		\int_{-\infty}^{{\tau\over \sqrt{1+\boldsymbol{\lambda}^\top\boldsymbol{\Sigma}\boldsymbol{\lambda}}}} 
		g^{(1)}(s^2)
		{\rm d}s}.
\end{align}
\end{remark}

\bigskip 
Explicit formulas for the PDF of $\boldsymbol{Y}\sim\text{ EGSE}_n(\boldsymbol{\mu},\boldsymbol{\Sigma},\boldsymbol{\lambda},\tau,g^{(n)})$ corresponding to multivariate extended $G$-skew-Student-$t$
and
multivariate extended $G$-skew-normal 
models (see Table \ref{table:3}), are provided in Subsection \ref{Special cases}.

The \text{EGSE}$_n$ distribution provides a very flexible class of statistical models. Depending on the choice of the functions $G_1,\ldots,G_n$ we have a family of multivariate extended  distributions with presence of asymmetry.
For example, 
for $\boldsymbol{\lambda}=\boldsymbol{0}$, $\tau=0$, $G_1(x)=G_2(x)=\log(-\log(1-x))$, $x\in D=(0,1)$, and $n=2$, we obtain the bivariate unit model studied in reference \cite{Vila:23},
for $\tau=0$ and $G_i(x)=x$, $x\in D=(-\infty,\infty)$, $i=1,\ldots,n$, we obtain the general class of multivariate
skew-elliptical distributions of \cite{Branco01}, and
for $\tau=0$ and $G_i(x)=\log(x)$, $x\in D=(0,\infty)$, $i=1,\ldots,n$, we obtain the multivariate log-skew-elliptical model studied in \cite{Genton2010}. 
In general, for the EGSE$_n$ model, it is not necessary to consider all $G_i$'s equal as in \cite{Vila:23} and \cite{Genton2010}. 
For $g^{(n)}(x)=(1+{x/\nu})^{-(\nu+n)/ 2}$, $\nu>0$, we get the multivariate extended $G$-skew-Student-$t$, which reduces to the
multivariate extended $G$-skew-Cauchy
and multivariate extended $G$-skew-normal
distributions by letting $\nu=1$ and $\nu\to\infty$, respectively.


\section{Statistical properties}\label{sec:statistical_properties}

In this section, we present some special cases of multivariate \text{EGSE}$_n$ PDFs \eqref{def-pdf-T*} and its statistical properties such as reparameterization for to enforce identifiability, invariance properties, stochastic representations, marginal quantiles, conditional and marginal
distributions, closed-forms for the expected value of a function, marginal moments, cross-moments, existence of marginal moments when $D=(0,\infty)$, 
and Kullback-Leibler Divergence, as well as inferential properties.

\subsection{Special cases}\label{Special cases}
In this subsection, we develop some examples of multivariate \text{EGSE}$_n$ PDFs as special cases.

\begin{proposition}[Multivariate extended $G$-skew-Student-$t$]\label{skew-Student}
	Let $g^{(n)}(x)=(1+{x/\nu})^{-(\nu+n)/ 2}$, $x\in\mathbb{R}$, be the PDF generator of the multivariate Student-$t$ distribution with $\nu>0$ degrees of freedom. Then, the PDF of $\boldsymbol{Y}\sim\text{ EGSE}_n(\boldsymbol{\mu},\boldsymbol{\Sigma},\boldsymbol{\lambda},\tau,g^{(n)})$ is given by
	\begin{align}\label{def-pdf-T**}
	f_{\boldsymbol{Y}}(\boldsymbol{y})
	=
	t_n(\boldsymbol{y}_{G};\,\boldsymbol{\mu},\boldsymbol{\Sigma},\nu) \,
	{
		F_{\nu+1}\left( [\boldsymbol{\lambda}^\top (\boldsymbol{y}_{G}-\boldsymbol{\mu})+\tau]
		\sqrt{{\nu+1\over\nu+q(\boldsymbol{y}_G)}}\,\right)
		\over 
		F_{\nu}\Big(
		{\tau
			\over \sqrt{1+\boldsymbol{\lambda}^\top\boldsymbol{\Sigma}\boldsymbol{\lambda}}
		}
		\Big)
	}
	\, \prod_{i=1}^n G_i'(y_i),
	\quad \boldsymbol{y}\in D^n,
	\end{align}
	where $\boldsymbol{y}_{G}$ and $q(\boldsymbol{y}_{G})$  are as given in \eqref{pdf-Y} and \eqref{def-q}, respectively. Moreover,
	$
	t_n(\boldsymbol{y}_{G};\,\boldsymbol{\mu},\boldsymbol{\Sigma},\nu)
	=
	g^{(n)}(q(\boldsymbol{y}_{G}))/(|\boldsymbol{\Sigma}|^{1/2} Z_{g^{(n)}})
	$,
	with $Z_{g^{(n)}}$ being as in Table \ref{table:1},
	denotes the PDF of the usual $n$-dimensional Student-$t$ distribution
	with location $\boldsymbol{\mu}\in\mathbb{R}^n$, positive definite $n\times n$ dispersion matrix $\boldsymbol{\Sigma}$, 
	and degrees of freedom $\nu > 0$, 
	and $F_\nu$ denotes the univariate standard
	Student-$t$ CDF with degrees of freedom $\nu > 0$.
\end{proposition}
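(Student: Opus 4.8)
The plan is to specialize the explicit $\text{EGSE}_n$ density \eqref{pdf-densenv} to the Student-$t$ generator $g^{(n)}(x)=(1+x/\nu)^{-(\nu+n)/2}$ and to show that each of its three ingredients — the elliptical marginal $f_{\boldsymbol{X}}(\boldsymbol{y}_G)$, the denominator, and the numerator — collapses to the corresponding Student-$t$ object in \eqref{def-pdf-T**}. For the marginal factor, $g^{(n)}$ is by definition the generator of the $n$-variate Student-$t$, so \eqref{eq:pdf:sym} immediately gives $f_{\boldsymbol{X}}(\boldsymbol{y}_G;\boldsymbol{\mu},\boldsymbol{\Sigma})=t_n(\boldsymbol{y}_G;\boldsymbol{\mu},\boldsymbol{\Sigma},\nu)$, with $t_n$ as in the statement. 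For the denominator I would start from the standardized expression \eqref{id-main0}, where $g^{(1)}(x)=(1+x/\nu)^{-(\nu+1)/2}$; the quantity $\tfrac{1}{Z_{g^{(1)}}}\int_{-\infty}^{\tau/\sqrt{1+\boldsymbol{\lambda}^\top\boldsymbol{\Sigma}\boldsymbol{\lambda}}} g^{(1)}(s^2)\,{\rm d}s$ is by definition the univariate standard Student-$t$ CDF with $\nu$ degrees of freedom, hence equals $F_\nu\big(\tau/\sqrt{1+\boldsymbol{\lambda}^\top\boldsymbol{\Sigma}\boldsymbol{\lambda}}\big)$, matching the denominator in \eqref{def-pdf-T**}.

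The crux is the numerator, namely the conditional CDF \eqref{id-main10} built from the conditional generator $g_{q(\boldsymbol{y}_G)}$ of \eqref{def-q}. Writing $q\equiv q(\boldsymbol{y}_G)$, I would first evaluate
\[
g_q(s^2)=\frac{g^{(2)}(s^2+q)}{g^{(1)}(q)}=(1+q/\nu)^{(\nu+1)/2}\left(1+\frac{s^2+q}{\nu}\right)^{-(\nu+2)/2},
\]
and then invoke the key algebraic identity $1+\tfrac{s^2+q}{\nu}=(1+\tfrac{q}{\nu})\big(1+\tfrac{s^2}{\nu+q}\big)$. This converts the $\nu$-scaled quadratic form into a $(\nu+q)$-scaled one and yields
\[
g_q(s^2)=(1+q/\nu)^{-1/2}\,\big(1+\tfrac{s^2}{\nu+q}\big)^{-(\nu+2)/2},
\]
so that, the $s$-independent factor $(1+q/\nu)^{-1/2}$ cancelling against $Z_{g_q}$, the conditional CDF \eqref{id-main10} reduces to the normalized integral of $\big(1+\tfrac{s^2}{\nu+q}\big)^{-(\nu+2)/2}$ up to the limit $\boldsymbol{\lambda}^\top(\boldsymbol{y}_G-\boldsymbol{\mu})+\tau$.

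Finally I would recognize this kernel as a rescaled $t_{\nu+1}$ density: the substitution $s=\sqrt{(\nu+q)/(\nu+1)}\,w$ turns $1+\tfrac{s^2}{\nu+q}$ into $1+\tfrac{w^2}{\nu+1}$, the Jacobian cancels between the integral and its normalizing constant, and the upper limit is multiplied by $\sqrt{(\nu+1)/(\nu+q)}$. This produces exactly $F_{\nu+1}\big([\boldsymbol{\lambda}^\top(\boldsymbol{y}_G-\boldsymbol{\mu})+\tau]\sqrt{(\nu+1)/(\nu+q(\boldsymbol{y}_G))}\big)$. Assembling the three pieces and reinstating $\prod_{i=1}^n G_i'(y_i)$ then gives \eqref{def-pdf-T**}. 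I expect the only delicate point to be bookkeeping the normalization constants $Z_{g_q}$ and $Z_{g^{(1)}}$ through the two rescalings so that the degrees of freedom shift cleanly from $\nu$ to $\nu+1$; that shift is produced precisely by the factorization identity above, and no separate evaluation of the constants is actually needed since they enter only as normalizers of the corresponding one-dimensional Student-$t$ CDFs.
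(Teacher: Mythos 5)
Your proof is correct and follows essentially the same route as the paper's: the same decomposition into the marginal factor, numerator and denominator, the same factorization identity $1+\tfrac{s^2+q}{\nu}=(1+\tfrac{q}{\nu})(1+\tfrac{s^2}{\nu+q})$, and the same change of variable $s=\sqrt{(\nu+q)/(\nu+1)}\,w$. Your way of dispatching the normalization constants (they cancel because the conditional CDF has total mass one) is just a compact restatement of the paper's $x\to\infty$ argument leading to \eqref{const-new}, so there is no substantive difference.
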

\begin{proof}
	By using formula in \eqref{def-pdf-T*},  it is enough to verify that
	\begin{align}\label{ide-1}
	F_{{\rm ELL}_1}(\boldsymbol{\lambda}^\top (\boldsymbol{y}_{G}-\boldsymbol{\mu})+\tau;\, 0,1,g_{q(\boldsymbol{y}_G)})
	=
	F_{\nu+1}\left( [\boldsymbol{\lambda}^\top (\boldsymbol{y}_{G}-\boldsymbol{\mu})+\tau]
	\sqrt{{\nu+1\over\nu+q(\boldsymbol{y}_G)}}\,\right)
	\end{align}
	and
	\begin{align}\label{ide-2}
	F_{{\rm ELL}_1}(\tau;\, 0, 1+\boldsymbol{\lambda}^\top\boldsymbol{\Sigma}\boldsymbol{\lambda},g^{(1)})
	=
	F_{\nu}\left(
	{\tau
		\over \sqrt{1+\boldsymbol{\lambda}^\top\boldsymbol{\Sigma}\boldsymbol{\lambda}}
	}
	\right).
	\end{align}
	The identity \eqref{ide-2} follows directly from identity \eqref{id-main0}. Therefore, it remains to verify \eqref{ide-1}. Indeed, by using identity \eqref{id-main10} and by simple algebraic manipulations, we have
	\begin{align*}
	F_{{\rm ELL}_1}(
	x;\, 0,1,g_{q(\boldsymbol{y}_G)})
	&=
	{1\over Z_{g_{q(\boldsymbol{y}_G)}}}
	\int_{-\infty}^{x}
	{{g^{(2)}(s^2+q(\boldsymbol{y}_G))}\over g^{(1)}(q(\boldsymbol{y}_G))}
	{\rm d}s,
\\[0,2cm]
&=
	{1\over Z_{g_{q(\boldsymbol{y}_G)}}}
	\int_{-\infty}^{x}
	{
		(1+{s^2+q(\boldsymbol{y}_G)\over\nu})^{-(\nu+2)/ 2}
		\over 
		(1+{q(\boldsymbol{y}_G)\over\nu})^{-(\nu+1)/ 2}
	}
	{\rm d}s
	\\[0,2cm]
	&=
	{1\over Z_{g_{q(\boldsymbol{y}_G)}}}
	\int_{-\infty}^{x}
	{
		\left(1+{1\over\nu+1} \left[s\,\sqrt{{\nu+1\over \nu+{q(\boldsymbol{y}_G)}}}\right]^2\right)^{-(\nu+2)/ 2}
		\over 
		\sqrt{1+{q(\boldsymbol{y}_G)\over\nu}}
	}
	{\rm d}s.
	\end{align*}
	By making the change of variable $t=s \sqrt{{(\nu+1)/(\nu+{q(\boldsymbol{y}_G)})}}$, the above identities are briefly written as 
	\begin{align}\label{int-pre}
			F_{{\rm ELL}_1}(
		x;\, 0,1,g_{q(\boldsymbol{y}_G)})
	=
	{1\over Z_{g_{q(\boldsymbol{y}_G)}}}
	\sqrt{{\nu\over \nu+1}}
	\int_{-\infty}^{x\,
		\sqrt{{\nu+1\over \nu+{q(\boldsymbol{y}_G)}}}}
	{
		\left(1+{t^2\over\nu+1}\right)^{-(\nu+2)/ 2}
	}
	{\rm d}t.
	\end{align}
Letting $x\to\infty$ in \eqref{int-pre} we get
\begin{align*}
{1\over Z_{g_{q(\boldsymbol{y}_G)}}} 	\sqrt{{\nu\over \nu+1}} Z_{g^{(1)}_{\nu+1}}
=
			F_{{\rm ELL}_1}(
\infty;\, 0,1,g_{q(\boldsymbol{y}_G)})
=
1,
\end{align*}
where $Z_{g^{(1)}_{\nu+1}}\equiv	\int_{-\infty}^{\infty}
{
	\left(1+{t^2/(\nu+1)}\right)^{-(\nu+2)/ 2}
}
{\rm d}t$ denotes the normalization constant of a student-$t$ distribution with $\nu+1$ degrees of freedom. That is,
	\begin{align}\label{const-new}
	{1\over Z_{g_{q(\boldsymbol{y}_G)}}} 	\sqrt{{\nu\over \nu+1}}
	=
	{1\over Z_{g^{(1)}_{\nu+1}}}
	=
	\left[
	{
	 \Gamma({(\nu+1)/ 2}) 	((\nu+1)\pi)^{1/2} 	
		\over	
		\Gamma({(\nu+2)/ 2})
	}
	\right]^{-1}.
	\end{align}
	
	So, from \eqref{int-pre} and \eqref{const-new}, we have 
	\begin{align*}
					F_{{\rm ELL}_1}(
			\boldsymbol{\lambda}^\top (\boldsymbol{y}_{G}-\boldsymbol{\mu})+\tau
		;\, 0,1,g_{q(\boldsymbol{y}_G)})
	&=
	{1\over Z_{g^{(1)}_{\nu+1}}}
	\int_{-\infty}^{[\boldsymbol{\lambda}^\top (\boldsymbol{y}_{G}-\boldsymbol{\mu})+\tau]
		\sqrt{{\nu+1\over \nu+{q(\boldsymbol{y}_G)}}}}
	{
		\left(1+{t^2\over\nu+1}\right)^{-(\nu+2)/ 2}
	}
	{\rm d}t
	\\[0,2cm]
	&=
	F_{\nu+1}\left( [\boldsymbol{\lambda}^\top (\boldsymbol{y}_{G}-\boldsymbol{\mu})+\tau]
	\sqrt{{\nu+1\over\nu+q(\boldsymbol{y}_G)}}\,\right).
	\end{align*}
	Then, the required formula in \eqref{ide-1} follows.
\end{proof}

By letting $\nu\to\infty$ in Proposition \ref{skew-Student}, the following result follows.
\begin{proposition}[Multivariate extended $G$-skew-normal]\label{prop-3}
	Let $\boldsymbol{Y}\sim\text{ EGSE}_n(\boldsymbol{\mu},\boldsymbol{\Sigma},\boldsymbol{\lambda},\tau,g^{(n)})$, where $g^{(n)}(x)=\exp(-x/2)$, $x\in\mathbb{R}$, is the PDF generator of the multivariate Gaussian distribution. Then, the PDF of $\boldsymbol{Y}$ at $\boldsymbol{y}\in D^n$ is given by
	\begin{align}\label{def-pdf-T****}
	f_{\boldsymbol{Y}}(\boldsymbol{y})
	=
	\phi_n(\boldsymbol{y}_{G};\,\boldsymbol{\mu},\boldsymbol{\Sigma}) \,
	{
		\Phi\left(\boldsymbol{\lambda}^\top (\boldsymbol{y}_{G}-\boldsymbol{\mu})+\tau\right)
		\over 
		\Phi\Big(
		{\tau
			\over \sqrt{1+\boldsymbol{\lambda}^\top\boldsymbol{\Sigma}\boldsymbol{\lambda}}
		}
		\Big)
	}
	\, \prod_{i=1}^n G_i'(y_i),
	\end{align}
	where $\boldsymbol{y}_{G}$ is as given in \eqref{pdf-Y}. Here,
	$
	\phi_n(\boldsymbol{y}_{G};\,\boldsymbol{\mu},\boldsymbol{\Sigma},\nu)
	=
	g^{(n)}((\boldsymbol{y}_{G}-\boldsymbol{\mu})^\top \boldsymbol{\Sigma}^{-1}(\boldsymbol{y}_{G}-\boldsymbol{\mu}))/(|\boldsymbol{\Sigma}|^{1/2} Z_{g^{(n)}})
	$,
	with $Z_{g^{(n)}}$ being as in Table \ref{table:1},
	denotes the PDF of the usual $n$-dimensional Gaussian distribution
	with location $\boldsymbol{\mu}\in\mathbb{R}^n$ and positive definite $n\times n$ dispersion matrix $\boldsymbol{\Sigma}$, 
	and $\Phi$ denotes the univariate standard
	Gaussian CDF.
\end{proposition}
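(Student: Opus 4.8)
The plan is to obtain the stated density as the $\nu\to\infty$ limit of the multivariate extended $G$-skew-Student-$t$ density of Proposition~\ref{skew-Student}, exploiting the familiar degeneration of the Student-$t$ law to the Gaussian law. Viewing the Student-$t$ density \eqref{def-pdf-T**} as a product of three $\nu$-dependent factors --- the prefactor $t_n(\boldsymbol{y}_G;\boldsymbol{\mu},\boldsymbol{\Sigma},\nu)$, the numerator Student-$t$ CDF, and the denominator Student-$t$ CDF --- I would analyze the limit of each factor in turn; the Jacobian term $\prod_{i=1}^n G_i'(y_i)$ carries no dependence on $\nu$ and is left untouched.

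First I would treat the prefactor. Using the $\Gamma$-ratio asymptotics $\Gamma(\nu/2)/\Gamma((\nu+n)/2)\sim(\nu/2)^{-n/2}$ in the constant $Z_{g^{(n)}}$ of Table~\ref{table:1} (whence $Z_{g^{(n)}}\to(2\pi)^{n/2}$) together with the elementary limit $(1+u/\nu)^{-(\nu+n)/2}\to\exp(-u/2)$, one gets the classical pointwise convergence $t_n(\boldsymbol{y}_G;\boldsymbol{\mu},\boldsymbol{\Sigma},\nu)\to\phi_n(\boldsymbol{y}_G;\boldsymbol{\mu},\boldsymbol{\Sigma})$ for each fixed $\boldsymbol{y}_G$. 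Next, since $q(\boldsymbol{y}_G)$ is a fixed finite quantity, the scaling factor inside the numerator CDF satisfies $\sqrt{(\nu+1)/(\nu+q(\boldsymbol{y}_G))}\to 1$, so its argument converges to $\boldsymbol{\lambda}^\top(\boldsymbol{y}_G-\boldsymbol{\mu})+\tau$, while the argument $\tau/\sqrt{1+\boldsymbol{\lambda}^\top\boldsymbol{\Sigma}\boldsymbol{\lambda}}$ of the denominator CDF is already free of $\nu$.

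The delicate step is the passage to the limit in the numerator, because both the CDF $F_{\nu+1}$ and its argument $a_\nu:=[\boldsymbol{\lambda}^\top(\boldsymbol{y}_G-\boldsymbol{\mu})+\tau]\sqrt{(\nu+1)/(\nu+q(\boldsymbol{y}_G))}$ vary with $\nu$. Pointwise convergence $F_{\nu+1}(x)\to\Phi(x)$ alone does not suffice; instead I would invoke the \emph{uniform} convergence of the Student-$t$ CDF to the continuous Gaussian CDF (a consequence of P\'olya's theorem), combined with the continuity of $\Phi$. Writing $a:=\boldsymbol{\lambda}^\top(\boldsymbol{y}_G-\boldsymbol{\mu})+\tau$ and bounding
\begin{align*}
\big|F_{\nu+1}(a_\nu)-\Phi(a)\big|
\le
\sup_{x\in\mathbb{R}}\big|F_{\nu+1}(x)-\Phi(x)\big|
+\big|\Phi(a_\nu)-\Phi(a)\big|,
\end{align*}
both terms vanish as $\nu\to\infty$, so $F_{\nu+1}(a_\nu)\to\Phi(a)$. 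For the denominator the argument is constant in $\nu$, hence plain pointwise convergence gives $F_\nu\big(\tau/\sqrt{1+\boldsymbol{\lambda}^\top\boldsymbol{\Sigma}\boldsymbol{\lambda}}\big)\to\Phi\big(\tau/\sqrt{1+\boldsymbol{\lambda}^\top\boldsymbol{\Sigma}\boldsymbol{\lambda}}\big)$, which is strictly positive and therefore lets the limit pass through the quotient.

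Multiplying the three limits and reattaching $\prod_{i=1}^n G_i'(y_i)$ yields exactly \eqref{def-pdf-T****}. I expect the uniform-convergence control of the numerator to be the only genuinely non-routine point. As a sanity check (and a shortcut), one may bypass the limit altogether: for the Gaussian generator $g^{(2)}(s+q)/g^{(1)}(q)=\exp(-s/2)$, so the conditional generator $g_{q(\boldsymbol{y}_G)}$ in \eqref{def-q} is again standard Gaussian, giving $F_{{\rm ELL}_1}(\cdot;0,1,g_{q(\boldsymbol{y}_G)})=\Phi(\cdot)$ directly in \eqref{def-pdf-T*}.
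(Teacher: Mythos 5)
Your proposal is correct, and its main line is exactly the paper's route: the paper proves Proposition~\ref{prop-3} with the single sentence ``by letting $\nu\to\infty$ in Proposition~\ref{skew-Student}.'' You fill in the one detail that sentence glosses over --- that the numerator involves a CDF $F_{\nu+1}$ and an argument $a_\nu$ that \emph{both} move with $\nu$, so pointwise convergence $F_{\nu+1}\to\Phi$ must be upgraded (via P\'olya's theorem and continuity of $\Phi$) to justify $F_{\nu+1}(a_\nu)\to\Phi(a)$ --- which is a genuine improvement in rigor over the paper's one-liner. Worth noting: your closing ``sanity check'' is in fact a complete, self-contained proof, and arguably the cleanest one available --- since $g^{(2)}(s+q)/g^{(1)}(q)=\exp(-s/2)$ for the Gaussian generator, the conditional generator $g_{q(\boldsymbol{y}_G)}$ is standard Gaussian independently of $q(\boldsymbol{y}_G)$, so \eqref{def-pdf-T*} yields \eqref{def-pdf-T****} directly with no limiting argument at all; this mirrors how the paper proves the Student-$t$ case (Proposition~\ref{skew-Student}) from the general formula, and it avoids any appeal to asymptotics.
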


Table \ref{table:3} summarizes the results found in Propositions \ref{skew-Student} and \ref{prop-3}.
\begin{table}[H]
	\caption{Densities $f_{\boldsymbol{Y}}$ of the EGSE$_n$ distributions of Table \ref{table:1}.}
	\vspace*{0.15cm}
	\centering 
	\begin{tabular}{llll} 
		\hline
		Multivariate distribution 
		& $f_{\boldsymbol{Y}}(\boldsymbol{y})$ 
		\\ [0.5ex] 
		\noalign{\hrule height 0.5pt}
		Extended $G$-skew-Student-$t$
		& $	t_n(\boldsymbol{y}_{G};\,\boldsymbol{\mu},\boldsymbol{\Sigma},\nu) \,
		{
			F_{\nu+1}\left( [\boldsymbol{\lambda}^\top (\boldsymbol{y}_{G}-\boldsymbol{\mu})+\tau]
			\sqrt{{\nu+1\over\nu+q(\boldsymbol{y}_G)}}\,\right)
			\over 
			F_{\nu}\big(
			{\tau
				\over \sqrt{1+\boldsymbol{\lambda}^\top\boldsymbol{\Sigma}\boldsymbol{\lambda}}
			}
			\big)
		}
		\, \prod_{i=1}^n G_i'(y_i)$  
		\\ [3ex]		
		Extended $G$-skew-normal
		& $	\phi_n(\boldsymbol{y}_{G};\,\boldsymbol{\mu},\boldsymbol{\Sigma}) \,
		{
			\Phi\left(\boldsymbol{\lambda}^\top (\boldsymbol{y}_{G}-\boldsymbol{\mu})+\tau\right)
			\over 
			\Phi\big(
			{\tau
				\over \sqrt{1+\boldsymbol{\lambda}^\top\boldsymbol{\Sigma}\boldsymbol{\lambda}}
			}
			\big)
		}
		\, \prod_{i=1}^n G_i'(y_i)$ 
		\\ [1ex] 	
		\hline	
	\end{tabular}
	\label{table:3} 
\end{table}

\subsection{Reparameterization for to enforce identifiability}

In general, identifiability is lost when a multivariate normal distribution is reduced by conditioning \citep{Florens:90}. This leads us to believe that for any choices of density generators $(g^{(n)})$ the EGSE$_n$ model \eqref{def-pdf-T*} loses identifiability.
%
It is natural to ask whether through reparameterization the model gains the property of identifiability. At least for the extended $G$-skew-normal distribution (see Table \ref{table:3}) the answer is positive.
To verify this statement we consider the reparameterization $(\boldsymbol{\mu},\boldsymbol{\Sigma},\boldsymbol{\lambda},\tau)^\top\longmapsto\boldsymbol{\psi}=(\boldsymbol{\mu},\boldsymbol{\Sigma}_*,\boldsymbol{\delta},\boldsymbol{\gamma})^\top$, where
\begin{align}\label{reparametrizing}
	&\boldsymbol{\Sigma}_*\equiv\boldsymbol{\omega}^{-1}\boldsymbol{\Sigma}\boldsymbol{\omega}^{-1}
	=
	\begin{pmatrix}
		1 & {\sigma_{12}\over \sqrt{\sigma_{11}\sigma_{22}}} &\ldots & {\sigma_{1n}\over \sqrt{\sigma_{11}\sigma_{nn}}} \\
		{\sigma_{21}\over \sqrt{\sigma_{22}\sigma_{11}}} & 1 & \cdots & {\sigma_{2n}\over \sqrt{\sigma_{22}\sigma_{nn}}} \\
		\vdots& \vdots& \ddots& \vdots\\
		{\sigma_{n1}\over \sqrt{\sigma_{nn}\sigma_{11}}} & {\sigma_{n2}\over \sqrt{\sigma_{nn}\sigma_{22}}} &\cdots& 1
	\end{pmatrix},
\end{align}
with
\begin{align*}
	&\boldsymbol{\omega}\equiv\sqrt{{\rm diag}(\boldsymbol{\Sigma})}
	=
	\begin{pmatrix}
		\sqrt{\sigma_{11}} & 0 &\ldots & 0\\
		0&\sqrt{\sigma_{22}}& \cdots & 0 \\
		\vdots& \vdots& \ddots& \vdots\\
		0&0&\cdots&\sqrt{\sigma_{nn}}
	\end{pmatrix},
	\nonumber
\end{align*}
is the correlation matrix 
and
\begin{align}\label{initial-param}
		\boldsymbol{\delta}\equiv{\boldsymbol{\Sigma}_*\boldsymbol{\lambda}\over \sqrt{1+\boldsymbol{\lambda}^\top\boldsymbol{\Sigma}_*\boldsymbol{\lambda}}},
		\quad 	
	\gamma\equiv{\tau\over \sqrt{1+\boldsymbol{\lambda}^\top\boldsymbol{\Sigma}_*\boldsymbol{\lambda}}}.
\end{align}

In what remains of this subsection we will prove that the parametrization $\boldsymbol{\psi}$ is identifiable. Indeed, note that 
\begin{align}\label{id-implication}
	\boldsymbol{\delta}^\top
	=
	{\boldsymbol{\lambda}^\top\boldsymbol{\Sigma}_*\over \sqrt{1+\boldsymbol{\lambda}^\top\boldsymbol{\Sigma}_*\boldsymbol{\lambda}}}
	\quad \Longrightarrow \quad 
	\sqrt{1+\boldsymbol{\lambda}^\top\boldsymbol{\Sigma}_*\boldsymbol{\lambda}}
		=
		{
		1\over \sqrt{1-\boldsymbol{\delta}^\top\boldsymbol{\Sigma}_*^{-1}
		\delta}
		}.
\end{align}
By using \eqref{id-implication}, we obtain
\begin{itemize}
	\item 
	\begin{align}\label{lambda-delta}
	\boldsymbol{\lambda}^\top
	=
	\boldsymbol{\delta}^\top\boldsymbol{\Sigma}_*^{-1} \sqrt{1+\boldsymbol{\lambda}^\top\boldsymbol{\Sigma}_*\boldsymbol{\lambda}}
	=
	{\boldsymbol{\delta}^\top\boldsymbol{\Sigma}_*^{-1}\over \sqrt{1-\boldsymbol{\delta}^\top\boldsymbol{\Sigma}_*^{-1}
			\delta}},
	\end{align}
	\item 
	\begin{align}\label{tau-delta}
	\tau=\gamma\sqrt{1+\boldsymbol{\lambda}^\top\boldsymbol{\Sigma}_*\boldsymbol{\lambda}}
	=
	{\gamma\over \sqrt{1-\boldsymbol{\delta}^\top\boldsymbol{\Sigma}_*^{-1}
			\delta}}.
	\end{align}
\end{itemize}
Hence, by \eqref{initial-param}, \eqref{lambda-delta} and \eqref{tau-delta},
the extended $G$-skew-normal PDF (see Table \ref{table:3}) can be written as a function of $\boldsymbol{\psi}$ as follows:
\begin{align}\label{id-skew-gen}
	f_{\boldsymbol{Y}}(\boldsymbol{y};\boldsymbol{\psi})
	=
	\phi_n(\boldsymbol{y}_{G};\,\boldsymbol{\mu},\boldsymbol{\Sigma}_*) \,
	{
		\Phi\left(\displaystyle
		{\boldsymbol{\delta}^\top\boldsymbol{\Sigma}_*^{-1}
			(\boldsymbol{y}_{G}-\boldsymbol{\mu})
			+
			\gamma
			\over \sqrt{1-\boldsymbol{\delta}^\top\boldsymbol{\Sigma}_*^{-1}
				\delta}
			}
			\right)
		\over 
		\Phi(\gamma)
	}
	\, \prod_{i=1}^n G_i'(y_i)
	=
		f_{\rm SN}(\boldsymbol{y}_G;\boldsymbol{\psi})\, \prod_{i=1}^n G_i'(y_i),
\end{align}
where $f_{\rm SN}(\cdot;\boldsymbol{\psi})$ is the skew-normal distribution defined as \citep[see][]{Castro13}
\begin{align}\label{s-n-distribution}
	f_{\rm SN}(\boldsymbol{z};\boldsymbol{\psi})
	\equiv
	\phi_n(\boldsymbol{z};\,\boldsymbol{\mu},\boldsymbol{\Sigma}_*) \,
	{
		\Phi\left(\displaystyle
		{\boldsymbol{\delta}^\top\boldsymbol{\Sigma}_*^{-1}
			(\boldsymbol{z}-\boldsymbol{\mu})
			+
			\gamma
			\over \sqrt{1-\boldsymbol{\delta}^\top\boldsymbol{\Sigma}_*^{-1}
				\delta}
		}
		\right)
		\over 
		\Phi(\gamma)
	},
	\quad \boldsymbol{z}\in\mathbb{R}^n,
\end{align}

By using the $r$th cumulants of random vector corresponding to PDF $f_{\rm SN}(\cdot;\boldsymbol{\psi})$, in Section 2 of \cite{Castro13}, it was proven that the skew-normal distribution  \eqref{s-n-distribution}
is identifiable. In other words, it was shown that
\begin{align*}
	f_{\rm SN}(\boldsymbol{z};\boldsymbol{\psi})
	=
	f_{\rm SN}(\boldsymbol{z};\boldsymbol{\psi}'), \ \forall \boldsymbol{z}\in\mathbb{R}^n
	\quad \Longrightarrow \quad 
	\boldsymbol{\psi}=\boldsymbol{\psi}'.
\end{align*}
As an immediate consequence of the above result, we obtain
\begin{align*}
		f_{\boldsymbol{Y}}(\boldsymbol{y};\boldsymbol{\psi})
		\stackrel{\eqref{id-skew-gen}}{=}
		f_{\rm SN}(\boldsymbol{y}_G;\boldsymbol{\psi})\, \prod_{i=1}^n G_i'(y_i)
		=
		f_{\rm SN}(\boldsymbol{y}_G;\boldsymbol{\psi}')\, \prod_{i=1}^n G_i'(y_i)
		\stackrel{\eqref{id-skew-gen}}{=}
		f_{\boldsymbol{Y}}(\boldsymbol{y};\boldsymbol{\psi}')
			, \ \forall \boldsymbol{y}\in D^n
			\quad \Longrightarrow \quad 
			\boldsymbol{\psi}=\boldsymbol{\psi}'.
\end{align*}
This shows the identifiability of the  extended $G$-skew-normal distribution model when considering reparameterization $\boldsymbol{\psi}=(\boldsymbol{\mu},\boldsymbol{\Sigma}_*,\boldsymbol{\delta},\boldsymbol{\gamma})^\top$.

\subsection{Invariance properties}

In this subsection, we show that 
for any even function  $\vartheta:D^n\to\mathbb{R}$, i.e. a function such that  $\vartheta(-\boldsymbol{y})=\vartheta(\boldsymbol{y})$, $\boldsymbol{y}\in D^n$, and for any odd functions $G_1,\ldots,G_n$, i.e. functions such that  $G_1(-y)=-G_1(y),\ldots,G_n(-y)=-G_n(y)$, $y\in D$,
the joint distribution of the function $\vartheta(\boldsymbol{Y})$ does
not depend on the skewness parameter $\boldsymbol{\lambda}$, for an
\text{EGSE}$_n$ random vector $\boldsymbol{Y}$ centered at $\boldsymbol{\mu}=\boldsymbol{0}$ and with extension parameter $\tau=0$.

\begin{proposition}\label{invariance}
	If $\boldsymbol{Y}\sim\text{ EGSE}_n(\boldsymbol{0},\boldsymbol{\Sigma},\boldsymbol{\lambda},0,g^{(n)})$, then the distribution of $\vartheta(\boldsymbol{Y})$, where $\vartheta$ is an even function and $G_1,\ldots,G_n$ are odd functions, does not depend on the function $F_{{\rm ELL}_1}$.
\end{proposition}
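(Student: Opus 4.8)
The plan is to recognize the $\text{EGSE}_n$ density at $\boldsymbol{\mu}=\boldsymbol{0}$, $\tau=0$ as a \emph{skew-symmetric} density in Azzalini's sense, namely a product of an even base density and a skewing factor whose symmetrized value equals $1$, and then to run the classical invariance argument for such densities. First I would substitute $\boldsymbol{\mu}=\boldsymbol{0}$ and $\tau=0$ into \eqref{def-pdf-T*}. Since $Z-\boldsymbol{\lambda}^\top(\boldsymbol{X}-\boldsymbol{\mu})$ is symmetric about $0$, the denominator $F_{{\rm ELL}_1}(0;\,0,1+\boldsymbol{\lambda}^\top\boldsymbol{\Sigma}\boldsymbol{\lambda},g^{(1)})$ equals $1/2$ (as already noted in the excerpt), yielding the factorization
\[
f_{\boldsymbol{Y}}(\boldsymbol{y}) = 2\, f_0(\boldsymbol{y})\, \pi(\boldsymbol{y}),\qquad f_0(\boldsymbol{y}) := f_{\boldsymbol{X}}(\boldsymbol{y}_G;\boldsymbol{0},\boldsymbol{\Sigma})\prod_{i=1}^n G_i'(y_i),\qquad \pi(\boldsymbol{y}) := F_{{\rm ELL}_1}\!\big(\boldsymbol{\lambda}^\top\boldsymbol{y}_G;\,0,1,g_{q(\boldsymbol{y}_G)}\big).
\]

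Next I would verify the two structural facts that power the argument. \emph{Evenness of $f_0$:} since each $G_i$ is odd, $G_i(-y)=-G_i(y)$ gives $(-\boldsymbol{y})_G=-\boldsymbol{y}_G$, and differentiating the oddness relation shows each $G_i'$ is even, so $\prod_i G_i'(-y_i)=\prod_i G_i'(y_i)$; moreover the elliptical factor $f_{\boldsymbol{X}}(\cdot;\boldsymbol{0},\boldsymbol{\Sigma})$ depends on its argument only through the quadratic form $q$, which satisfies $q(-\boldsymbol{y}_G)=q(\boldsymbol{y}_G)$, so $f_{\boldsymbol{X}}(-\boldsymbol{y}_G)=f_{\boldsymbol{X}}(\boldsymbol{y}_G)$ and hence $f_0(-\boldsymbol{y})=f_0(\boldsymbol{y})$. \emph{Flip identity for $\pi$:} the same quadratic-form invariance gives $g_{q((-\boldsymbol{y})_G)}=g_{q(\boldsymbol{y}_G)}$, while the argument becomes $-\boldsymbol{\lambda}^\top\boldsymbol{y}_G$; because every $\text{ELL}_1(0,1,g)$ law is symmetric about $0$, its CDF obeys $F(-t)=1-F(t)$, whence $\pi(\boldsymbol{y})+\pi(-\boldsymbol{y})=1$.

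With these in hand, the concluding step is the standard symmetrization. For any bounded measurable $h$ I would write $\mathbb{E}[h(\vartheta(\boldsymbol{Y}))]=\int h(\vartheta(\boldsymbol{y}))\,2 f_0(\boldsymbol{y})\pi(\boldsymbol{y})\,\mathrm{d}\boldsymbol{y}$, perform the change of variables $\boldsymbol{y}\mapsto-\boldsymbol{y}$ (using that $\vartheta$ and $f_0$ are even), average the resulting two representations, and apply $\pi(\boldsymbol{y})+\pi(-\boldsymbol{y})=1$ to collapse the skewing factor:
\[
\mathbb{E}[h(\vartheta(\boldsymbol{Y}))] = \int h(\vartheta(\boldsymbol{y}))\, f_0(\boldsymbol{y})\,\mathrm{d}\boldsymbol{y}.
\]
The right-hand side involves only $f_0$, which contains $g^{(n)}$ but is entirely free of the skewing CDF $F_{{\rm ELL}_1}$; since this holds for all bounded measurable $h$, the law of $\vartheta(\boldsymbol{Y})$ does not depend on $F_{{\rm ELL}_1}$ (equivalently, not on $\boldsymbol{\lambda}$).

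The hard part will be the flip identity $\pi(\boldsymbol{y})+\pi(-\boldsymbol{y})=1$: it rests on two points that must be checked carefully, namely that the \emph{conditional} generator $g_{q(\boldsymbol{y}_G)}$ is invariant under $\boldsymbol{y}\mapsto-\boldsymbol{y}$ (true because it depends on $\boldsymbol{y}_G$ only through the even quadratic form $q$, by the definition in \eqref{def-q}), and that each standardized $\text{ELL}_1(0,1,g)$ CDF is antisymmetric in the sense $F(-t)=1-F(t)$. Once these are established, the evenness of $f_0$ and the averaging computation are routine.
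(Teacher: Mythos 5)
Your proposal is correct and is essentially the paper's own argument: the paper (following Proposition 3.1 of Genton--Loperfido) uses exactly the same three ingredients --- evenness of the base factor $f_{\boldsymbol{X}}(\boldsymbol{y}_G)\prod_i G_i'(y_i)$ under odd $G_i$, invariance of the conditional generator $g_{q(\boldsymbol{y}_G)}$ under $\boldsymbol{y}\mapsto-\boldsymbol{y}$, and the skewing property $F_{{\rm ELL}_1}(-t;0,1,g_q)=1-F_{{\rm ELL}_1}(t;0,1,g_q)$ --- to symmetrize away the skewing factor. The only cosmetic difference is that the paper tests against characteristic functions $h(x)=\exp(itx)$ and splits $D^n$ into the half-regions $A^{\pm}=\{\pm y_1\geqslant 0\}$ before flipping, whereas you test against arbitrary bounded measurable $h$ and average two full-domain representations via $\pi(\boldsymbol{y})+\pi(-\boldsymbol{y})=1$; both reductions land on the same $F_{{\rm ELL}_1}$-free integral.
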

\begin{proof}
	The proof of this result follows the same reasoning as the proof of Proposition 3.1 in \cite{gl:05}. For completeness and for the reader's convenience, we present the proof here.
	
	If we show that the characteristic function of $\vartheta(\boldsymbol{Y})$, denoted by $\phi_{ \vartheta(\boldsymbol{Y})}(t)=\mathbb{E}[\exp(it \vartheta(\boldsymbol{Y}))]$, $t\in\mathbb{R}$, does not depend on the function $F_{{\rm ELL}_1}$, the proof ends.
	Indeed, note that $\phi_{ \vartheta(\boldsymbol{Y})}(t)$ can be  written as
	\begin{align}
	\phi_{ \vartheta(\boldsymbol{Y})}(t)
	&=
	2
	\int_{A^-}
	\exp(it \vartheta(\boldsymbol{y}))
	f_{\boldsymbol{X}}(\boldsymbol{y}_{G}) \,
	{
		F_{{\rm ELL}_1}(\boldsymbol{\lambda}^\top \boldsymbol{y}_{G};\, 0,1,g_{q(\boldsymbol{y}_G)})
	}
	\, \prod_{i=1}^n G_i'(y_i)
	{\rm d} 
	\boldsymbol{y}
	\nonumber
	\\
	&+2
	\int_{A^+}
	\exp(it \vartheta(\boldsymbol{y}))
	f_{\boldsymbol{X}}(\boldsymbol{y}_{G}) \,
	{
		F_{{\rm ELL}_1}(\boldsymbol{\lambda}^\top \boldsymbol{y}_{G};\, 0,1,g_{q(\boldsymbol{y}_G)})
	}
	\, \prod_{i=1}^n G_i'(y_i)
	{\rm d} 
	\boldsymbol{y},
	\label{id-cf-0}
	\end{align}
	where $\boldsymbol{y}_{G}$ is as given in \eqref{pdf-Y}, $A^+=\{(y_1,\ldots,y_n)^\top\in D^n:y_1\geqslant 0\}$ and $A^-=\{(y_1,\ldots,y_n)^\top\in D^n:y_1< 0\}$.
	
	Moreover, using the facts that $\vartheta$ is an even function, $G_1,\ldots,G_n$ are odd functions and that $F_{{\rm ELL}_1}$  is a skewing function, i.e. 	$F_{{\rm ELL}_1}(\boldsymbol{\lambda}^\top (G_1(-y_1),\ldots,G_n(-y_n))^\top;\, 0,1,g_{q(\boldsymbol{y}_G)})=1-F_{{\rm ELL}_1}(\boldsymbol{\lambda}^\top \boldsymbol{y}_{G};\, 0,1,g_{q(\boldsymbol{y}_G)})$, we have
	\begin{align}
	2
	\int_{A^-}
	\exp(it \vartheta(\boldsymbol{y}))
	f_{\boldsymbol{X}}(\boldsymbol{y}_{G}) \,
	&
	{
		F_{{\rm ELL}_1}(\boldsymbol{\lambda}^\top \boldsymbol{y}_{G};\, 0,1,g_{q(\boldsymbol{y}_G)})
	}
	\, \prod_{i=1}^n G_i'(y_i)
	{\rm d} 
	\boldsymbol{y}
	\nonumber
	\\[0,2cm]
	&=
	2
	\int_{A^+}
	\exp(it \vartheta(-\boldsymbol{y}))
	f_{\boldsymbol{X}}(G_1(-y_1),\ldots,G_n(-y_n)) \,
		\nonumber
	\\[0,2cm]
	&\times {
		F_{{\rm ELL}_1}(\boldsymbol{\lambda}^\top (G_1(-y_1),\ldots,G_n(-y_n))^\top;\, 0,1,g_{q(\boldsymbol{y}_G)})
	}
	\, \prod_{i=1}^n G_i'(-y_i)
	{\rm d} 
	\boldsymbol{y}
	\nonumber
	\\[0,2cm]
	&=
	2
	\int_{A^+}
	\exp(it \vartheta(\boldsymbol{y}))
	f_{\boldsymbol{X}}(\boldsymbol{y}_{G})
	\, \prod_{i=1}^n G_i'(y_i)
	{\rm d} 
	\boldsymbol{y}
		\nonumber
	\\[0,2cm]
	&
	-
	2
	\int_{A^+}
	\exp(it \vartheta(\boldsymbol{y}))
	f_{\boldsymbol{X}}(\boldsymbol{y}_{G}) \,
	{
		F_{{\rm ELL}_1}(\boldsymbol{\lambda}^\top \boldsymbol{y}_{G};\, 0,1,g_{q(\boldsymbol{y}_G)})
	}
	\, \prod_{i=1}^n G_i'(y_i)
	{\rm d} 
	\boldsymbol{y},
	\label{id-cf}
	\end{align}
	where in the last equality we used the well-known fact that the derivative of an odd function is even.
	
	By combining \eqref{id-cf-0} and \eqref{id-cf}, we get
	\begin{align*}
	\phi_{ \vartheta(\boldsymbol{Y})}(t)
	=
	2
	\int_{A^+}
	\exp(it \vartheta(\boldsymbol{y}))
	f_{\boldsymbol{X}}(\boldsymbol{y}_{G})
	\, \prod_{i=1}^n G_i'(y_i)
	{\rm d} 
	\boldsymbol{y}.
	\end{align*}
	In other words, we have proven that the distribution of $\vartheta(\boldsymbol{Y})$ does not depend on the function $F_{{\rm ELL}_1}$, thus completing the proof.
\end{proof}

\begin{remark}
	Some examples of odd functions $G_i$'s with support on the real line that we can consider in Proposition \ref{invariance} are
	$G_i(x)=ax^p+b$, with $a>0, b=0$, $p$ odd, or
	$G_i(x)=\sinh(x)$ (see Table \ref{table:2-1}).
\end{remark}

Applying Proposition \ref{invariance} we immediately have the following two results.
\begin{corollary}
	If $\boldsymbol{Y}\sim\text{ EGSE}_n(\boldsymbol{0},\boldsymbol{\Sigma},\boldsymbol{\lambda},0,g^{(n)})$, then
	the distribution of $\boldsymbol{Y}\boldsymbol{Y}^\top$ does not
	depend on the function $F_{{\rm ELL}_1}$.
\end{corollary}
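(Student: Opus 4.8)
The plan is to deduce this directly from Proposition \ref{invariance} by reducing the matrix-valued claim to the scalar statement already proved there, passing through the characteristic function of the random matrix. I keep the standing hypotheses of Proposition \ref{invariance} in force; in particular, $G_1,\ldots,G_n$ are odd functions, which is what makes that proposition applicable.

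First I would record that $\boldsymbol{Y}\boldsymbol{Y}^\top$ is a symmetric $n\times n$ random matrix, so its law is completely determined by the values $\mathbb{E}[\exp(i\,{\rm tr}(\boldsymbol{A}\boldsymbol{Y}\boldsymbol{Y}^\top))]$ as $\boldsymbol{A}$ ranges over all symmetric $n\times n$ matrices (the matrix analogue of the characteristic function, whose uniqueness determines the distribution). The key algebraic observation is then that ${\rm tr}(\boldsymbol{A}\boldsymbol{Y}\boldsymbol{Y}^\top)=\boldsymbol{Y}^\top\boldsymbol{A}\boldsymbol{Y}=\vartheta_{\boldsymbol{A}}(\boldsymbol{Y})$, where I set $\vartheta_{\boldsymbol{A}}(\boldsymbol{y})\equiv\boldsymbol{y}^\top\boldsymbol{A}\boldsymbol{y}$. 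For each fixed $\boldsymbol{A}$ this $\vartheta_{\boldsymbol{A}}$ is an even function on $D^n$, since $\vartheta_{\boldsymbol{A}}(-\boldsymbol{y})=(-\boldsymbol{y})^\top\boldsymbol{A}(-\boldsymbol{y})=\boldsymbol{y}^\top\boldsymbol{A}\boldsymbol{y}=\vartheta_{\boldsymbol{A}}(\boldsymbol{y})$.

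Next I would apply Proposition \ref{invariance} to the even function $\vartheta_{\boldsymbol{A}}$: under the assumption that $\boldsymbol{Y}\sim\text{EGSE}_n(\boldsymbol{0},\boldsymbol{\Sigma},\boldsymbol{\lambda},0,g^{(n)})$ with odd $G_i$, the distribution of $\vartheta_{\boldsymbol{A}}(\boldsymbol{Y})$ does not depend on $F_{{\rm ELL}_1}$. In particular its scalar characteristic function $t\mapsto\mathbb{E}[\exp(it\,\vartheta_{\boldsymbol{A}}(\boldsymbol{Y}))]$ is free of $F_{{\rm ELL}_1}$, and evaluating it at $t=1$ gives precisely $\mathbb{E}[\exp(i\,{\rm tr}(\boldsymbol{A}\boldsymbol{Y}\boldsymbol{Y}^\top))]$. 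Since $\boldsymbol{A}$ was an arbitrary symmetric matrix, the entire characteristic function of $\boldsymbol{Y}\boldsymbol{Y}^\top$ is independent of $F_{{\rm ELL}_1}$, and by uniqueness of the characteristic function the law of $\boldsymbol{Y}\boldsymbol{Y}^\top$ is as well, which is the claim.

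The only real content, and hence the main obstacle, is that Proposition \ref{invariance} is stated for scalar-valued even functions $\vartheta:D^n\to\mathbb{R}$, whereas $\boldsymbol{Y}\boldsymbol{Y}^\top$ is matrix-valued. The crux is therefore to isolate the correct scalar family—the quadratic forms $\boldsymbol{y}^\top\boldsymbol{A}\boldsymbol{y}={\rm tr}(\boldsymbol{A}\boldsymbol{y}\boldsymbol{y}^\top)$—which simultaneously (a) are even, so that the proposition applies to each, and (b) jointly determine the distribution of the symmetric random matrix $\boldsymbol{Y}\boldsymbol{Y}^\top$. Once this reduction is identified, the rest is immediate. (An alternative route would be to note that the characteristic-function argument in the proof of Proposition \ref{invariance} goes through verbatim for vector- or matrix-valued even functions, but the quadratic-form reduction above has the advantage of invoking the already-established scalar statement without re-running its proof.)
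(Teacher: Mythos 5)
Your proposal is correct and follows the paper's route: the paper obtains this corollary by directly applying Proposition \ref{invariance}, offering no further details (it simply asserts the result follows immediately). Your scalarization via quadratic forms, ${\rm tr}(\boldsymbol{A}\boldsymbol{Y}\boldsymbol{Y}^\top)=\boldsymbol{Y}^\top\boldsymbol{A}\boldsymbol{Y}$ for symmetric $\boldsymbol{A}$, is exactly the bridge needed to make that step rigorous---since the proposition as stated covers only scalar-valued even functions $\vartheta:D^n\to\mathbb{R}$---so if anything your write-up is more careful than the paper's own treatment.
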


\begin{corollary}
	Let $A_1,\ldots, A_m$ be $n\times n$ real matrices and let $\boldsymbol{Y}\sim\text{ EGSE}_n(\boldsymbol{0},\boldsymbol{\Sigma},\boldsymbol{\lambda},0,g^{(n)})$. Then the joint distribution of the quadratic forms $(\boldsymbol{Y}A_1\boldsymbol{Y}^\top,\ldots, \boldsymbol{Y}A_m\boldsymbol{Y}^\top)^\top$ does not
	depend on the function $F_{{\rm ELL}_1}$.
\end{corollary}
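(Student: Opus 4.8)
The plan is to deduce this joint statement from the scalar invariance already proved in Proposition~\ref{invariance} by passing to the joint characteristic function and exploiting that each quadratic form is an even function of $\boldsymbol{Y}$. The device is essentially Cram\'er--Wold: test the vector of quadratic forms against all linear combinations, observe that each such combination is again an even function, and invoke Proposition~\ref{invariance} for that single scalar.

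First I would note that, for each $j$, the quadratic form $\boldsymbol{Y}A_j\boldsymbol{Y}^\top$ is invariant under $\boldsymbol{Y}\mapsto-\boldsymbol{Y}$, since $(-\boldsymbol{Y})A_j(-\boldsymbol{Y})^\top=\boldsymbol{Y}A_j\boldsymbol{Y}^\top$. Hence, for any fixed $\boldsymbol{t}=(t_1,\ldots,t_m)^\top\in\mathbb{R}^m$, the function
\[
\vartheta_{\boldsymbol{t}}(\boldsymbol{y})\equiv\sum_{j=1}^m t_j\,\boldsymbol{y}A_j\boldsymbol{y}^\top
\]
is an even function $\vartheta_{\boldsymbol{t}}:D^n\to\mathbb{R}$. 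Writing $\boldsymbol{W}=(\boldsymbol{Y}A_1\boldsymbol{Y}^\top,\ldots,\boldsymbol{Y}A_m\boldsymbol{Y}^\top)^\top$, its joint characteristic function factors through this scalar:
\[
\phi_{\boldsymbol{W}}(\boldsymbol{t})
=\mathbb{E}\!\left[\exp\!\left(i\sum_{j=1}^m t_j\,\boldsymbol{Y}A_j\boldsymbol{Y}^\top\right)\right]
=\mathbb{E}\bigl[\exp\bigl(i\,\vartheta_{\boldsymbol{t}}(\boldsymbol{Y})\bigr)\bigr]
=\phi_{\vartheta_{\boldsymbol{t}}(\boldsymbol{Y})}(1).
\]
Since $\vartheta_{\boldsymbol{t}}$ is even and $G_1,\ldots,G_n$ are the odd functions required in Proposition~\ref{invariance}, that proposition ensures that the distribution of the scalar $\vartheta_{\boldsymbol{t}}(\boldsymbol{Y})$—and therefore its characteristic function at the point $1$—does not depend on $F_{{\rm ELL}_1}$.

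Finally, because the displayed identity holds for every $\boldsymbol{t}\in\mathbb{R}^m$, the entire joint characteristic function $\phi_{\boldsymbol{W}}$ is free of $F_{{\rm ELL}_1}$, and the uniqueness theorem for characteristic functions then forces the joint law of $\boldsymbol{W}$ to be independent of $F_{{\rm ELL}_1}$, as claimed. I do not anticipate a genuine obstacle: the single point needing care is the passage from the scalar invariance of Proposition~\ref{invariance} to the joint statement, and this is handled precisely by the observation that each $\vartheta_{\boldsymbol{t}}$ is even, so that every linear combination of the quadratic forms is an admissible test function for the proposition.
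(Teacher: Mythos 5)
Your proof is correct and takes essentially the same route as the paper, which offers no separate argument and simply states that the corollary follows immediately from Proposition~\ref{invariance} applied to the (even) quadratic forms. Your Cram\'er--Wold step is precisely the detail needed to pass from the scalar form of Proposition~\ref{invariance} (stated for $\vartheta:D^n\to\mathbb{R}$) to the joint law of the vector of quadratic forms, so it rigorously fills in what the paper leaves implicit.
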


\subsection{Stochastic representation}\label{Stochastic representation}

Let
$
\boldsymbol{W}
=
(W_{1},\ldots, W_{n})^\top 
=
\boldsymbol{X}\,\vert\, \boldsymbol{\lambda}^\top (\boldsymbol{X}-\boldsymbol{\mu})+\tau>Z
$, 
where  
$\boldsymbol{V}=(Z,\boldsymbol{X})^\top\sim {\rm ELL}_{n+1}(\boldsymbol{\mu}_{\boldsymbol{V}},\boldsymbol{\Sigma}_{\boldsymbol{V}},g^{(n+1)})$, and $\boldsymbol{\mu}_{\boldsymbol{V}}$ and $\boldsymbol{\Sigma}_{\boldsymbol{V}}$ as defined in \eqref{eq:pdf:sym}.
Using the same steps to obtain the density of $\boldsymbol{Y}$ in \eqref{def-pdf-T*}, it can be seen that the PDF of $\boldsymbol{W}$ is given by
\begin{align}\label{def-pdf-W}
f_{\boldsymbol{W}}(\boldsymbol{w})
=
f_{\boldsymbol{X}}(\boldsymbol{w}) \,
{
	F_{{\rm ELL}_1}(\boldsymbol{\lambda}^\top (\boldsymbol{w}-\boldsymbol{\mu})+\tau;\, 0,1,g_{q(\boldsymbol{w})})
	\over 
	F_{{\rm ELL}_1}(\tau;\, 0, 1+\boldsymbol{\lambda}^\top\boldsymbol{\Sigma}\boldsymbol{\lambda},g^{(1)})
},
\quad 
\boldsymbol{w}\in\mathbb{R}^n.
\end{align}
A random vector  $\boldsymbol{W}$ with density given by \eqref{def-pdf-W} is said to have a multivariate extended skew-elliptical (\text{ESE}$_n$) distribution. For simplicity, we write $\boldsymbol{W}\sim\text{ ESE}_n(\boldsymbol{\mu},\boldsymbol{\Sigma},\boldsymbol{\lambda},\tau,g^{(n)})$.	

Table \ref{table:4} presents some examples of density functions for $\boldsymbol{W}$.
\begin{table}[H]
	\caption{Some particular densities for the \text{ESE}$_n$ random vector.}
	\vspace*{0.15cm}
	\centering 
	\begin{tabular}{llll} 
		\hline
		Multivariate distribution 
		& $f_{\boldsymbol{W}}(\boldsymbol{w})$ 
		\\ [0.5ex] 
		\noalign{\hrule height 0.5pt}
		Extended skew-Student-$t$
		& $	t_n(\boldsymbol{w};\,\boldsymbol{\mu},\boldsymbol{\Sigma},\nu) \,
		{
			F_{\nu+1}\left( [\boldsymbol{\lambda}^\top (\boldsymbol{w}-\boldsymbol{\mu})+\tau]
			\sqrt{{\nu+1\over\nu+q(\boldsymbol{w})}}\,\right)
			\over 
			F_{\nu}\big(
			{\tau
				\over \sqrt{1+\boldsymbol{\lambda}^\top\boldsymbol{\Sigma}\boldsymbol{\lambda}}
			}
			\big)
		}$  
		\\ [3ex]		
		Extended skew-normal
		& $	\phi_n(\boldsymbol{w};\,\boldsymbol{\mu},\boldsymbol{\Sigma}) \,
		{
			\Phi\left(\boldsymbol{\lambda}^\top (\boldsymbol{w}-\boldsymbol{\mu})+\tau\right)
			\over 
			\Phi\big(
			{\tau
				\over \sqrt{1+\boldsymbol{\lambda}^\top\boldsymbol{\Sigma}\boldsymbol{\lambda}}
			}
			\big)
		}$ 
		\\ [1ex] 	
		\hline	
	\end{tabular}
	\label{table:4} 
\end{table}

Let $\boldsymbol{Y}=(Y_1,\ldots,Y_n)^\top\sim\text{ EGSE}_n(\boldsymbol{\mu},\boldsymbol{\Sigma},\boldsymbol{\lambda},\tau,g^{(n)})$.
From \eqref{stochastics-rep}, 
$\boldsymbol{Y}
=
\boldsymbol{T}\,\vert\, \boldsymbol{\lambda}^\top (\boldsymbol{X}-\boldsymbol{\mu})+\tau>Z$, with  $\boldsymbol{T}=(G_1^{-1}(X_1),\ldots,G_n^{-1}(X_n))^\top$ and $(Z,\boldsymbol{X})^\top$ as defined in \eqref{def-pdf-W}.
Then, it is clear that their joint distribution can be written as
\begin{align}\label{identity-CDFs}
		\mathbb{P}(Y_1\leqslant y_1, \ldots,Y_n\leqslant y_n)
		&=
		\mathbb{P}(G_1^{-1}(X_1)\leqslant y_1,\ldots,G_n^{-1}(X_n)\leqslant y_n\,\vert\, \boldsymbol{\lambda}^\top (\boldsymbol{X}-\boldsymbol{\mu})+\tau>Z)
		\nonumber
		\\[0,2cm]
		&=
		\mathbb{P}(G_1^{-1}(W_1)\leqslant y_1,\ldots,G_n^{-1}(W_n)\leqslant y_n),
		\quad 
		\forall (y_1,\ldots,y_n).	
\end{align}
That is,
\begin{align}\label{rep-stoch}
	\boldsymbol{Y}
	=
(Y_1,\ldots,Y_n)^\top
\stackrel{d}{=}
(G_1^{-1}(W_1),\ldots,G_n^{-1}(W_n))^\top,
\end{align}
with $\stackrel{d}{=}$ being equality in distribution.

Letting $y_k\to \infty$ in \eqref{identity-CDFs}, in all $y_k$ except the $i$th component, we obtain
\begin{align*}
	\mathbb{P}(Y_i\leqslant y_i)
	=
	\mathbb{P}(G_i^{-1}(W_i)\leqslant y_i),
	\quad \forall i=1,\ldots,n.
\end{align*}
In other words,
\begin{align}\label{identity-CDFs-univariate}
Y_i\stackrel{d}{=}G_i^{-1}(W_i),
\quad \forall i=1,\ldots,n.
\end{align}

\subsection{Marginal quantiles}

Given $p\in(0,1)$, the marginal $p$-quantile of $\boldsymbol{Y}=(Y_1,\ldots,Y_n)^\top\sim\text{ EGSE}_n(\boldsymbol{\mu},\boldsymbol{\Sigma},\boldsymbol{\lambda},\tau,g^{(n)})$ will be denoted by $Q_{Y_i}(p)$. So, from  \eqref{identity-CDFs-univariate} we have
\begin{align*}
	p=\mathbb{P}(Y_i\leqslant Q_{Y_i}(p))
	=
	\mathbb{P}(G_i^{-1}(W_i)\leqslant Q_{Y_i}(p))
	=
	\mathbb{P}(W_i\leqslant G_i(Q_{Y_i}(p))),
	\quad i=1,\ldots,n,
\end{align*}
with $\boldsymbol{W}=(W_1,\ldots,W_n)^\top\sim\text{ ESE}_n(\boldsymbol{\mu},\boldsymbol{\Sigma},\boldsymbol{\lambda},\tau,g^{(n)})$.
Equivalently,
\begin{align*}
Q_{W_i}(p)
=
G_i(Q_{Y_i}(p))
\end{align*}
if and only if
\begin{align*}
Q_{Y_i}(p)
=
G_i^{-1}(Q_{W_i}(p)), \quad i=1,\ldots,n.
\end{align*}
In other words, if the $p$-quantile of $W_i$ is known, then the $p$-quantile of $Y_i$ can be determined explicitly.

\subsection{Conditional and marginal distributions}
In the context of multivariate sample selection models \citep{heckman76}, the interest lies in finding the PDF of $Y_i\,|\, Y_j > \kappa$, $i\neq j\in\{1,\ldots,n\}$, given that $\boldsymbol{Y}=({Y}_1,\ldots, Y_n)^\top\sim\text{ EGSE}_n(\boldsymbol{\mu},\boldsymbol{\Sigma},\boldsymbol{\lambda},\tau,g^{(n)})$, with $\kappa\in D$. For this purpose, let
 $\boldsymbol{W}=(W_{1},\ldots, W_{n})^\top \sim\text{ ESE}_n(\boldsymbol{\mu},\boldsymbol{\Sigma},\boldsymbol{\lambda},\tau,g^{(n)})$ be a multivariate extended skew-elliptical random vector. From Subsection \ref{Stochastic representation} we know that
$
\boldsymbol{W}
=
\boldsymbol{X}\,\vert\, \boldsymbol{\lambda}^\top (\boldsymbol{X}-\boldsymbol{\mu})+\tau>Z
$.

Analogously to the steps developed in \eqref{pdf-Y},
Bayes' rule provides 
\begin{align}\label{main-identity}
f_{Y_i\,|\,Y_j > \kappa}(y)
=
f_{Y_i}(y) \,
\dfrac{\displaystyle\int_{\kappa}^\infty f_{Y_j\,\vert\,Y_i=y}(s){\rm d}s}{\mathbb{P}(Y_j>\kappa)},
\quad y\in D, \ \kappa\in D.	
\end{align}
If $Y_i=y$ then $W_{i} =G_i(y)$. So, the distribution of $Y_j\,\vert\,Y_i=y$ is the same as the distribution of $G_{j}^{-1}(W_{j})\,\vert\, W_{i} =G_i(y)$. Consequently, the PDF of $Y_j$
given $Y_i=y$ is given by
\begin{align}\label{main-identity-1}
f_{Y_j\,\vert\,Y_i=y}(s)
=
f_{W_{j}\,\vert\,W_{i} =G_i(y)}(G_j(s)) \,G_{j}'(s).
\end{align}
Since, by \eqref{identity-CDFs-univariate}, 
\begin{align}\label{marginal-pdf}
f_{{Y}_i}(y)
=
f_{W_{i}}(G_i(y)) G_{i}'(y)
\quad
\text{and} 
\quad
f_{Y_j}(s)
=
f_{W_{j}}(G_j(s)) G_{j}'(s),
\end{align}  
from \eqref{main-identity} and \eqref{main-identity-1} we get
\begin{align*}
f_{Y_i\,|\,Y_j > \kappa}(y)
=
f_{W_{i}}(G_i(y)) G_{i}'(y) \ 
\dfrac{\displaystyle
	\int_\kappa^\infty 
f_{W_{j}\,\vert\,W_{i} =G_i(y)}(G_j(s)) \,G_{j}'(s){\rm d}s  
}{ 
	\displaystyle\int_\kappa^\infty 
f_{W_{j}}(G_j(s)) G_{j}'(s) {\rm d}s}.
\end{align*}
Equivalently,
\begin{align}\label{main-formula}
f_{Y_i\,|\,Y_j > \kappa}(y)
=
f_{W_{i}}(G_i(y)) G_{i}'(y) \ 
\dfrac{\displaystyle
	S_{W_{j}\,\vert\,W_{i} =
		G_i(y)}(G_j(\kappa))  
}{ 
	S_{W_{j}}(G_j(\kappa))},
\quad y\in D, \ \kappa\in D,	
\end{align}
where $S_X$ denotes the survival function (SF) of $X$.
In other words, to determine the distribution of $Y_i\,|\,Y_j>\kappa$ it is sufficient to know the  unconditional and conditional distributions of the multivariate extended skew-elliptical random vector $\boldsymbol{W}$.

%
In what remains of this subsection
we present closed-forms for the PDFs of $Y_i\,|\,Y_j>\kappa$  and $Y_i$ by considering the Student-$t$
and Gaussian generator densities.

\subsubsection{Student-$t$ density generator}
Let $g^{(n)}(x)=(1+x/\nu)^{-(\nu+n)/2}$, $x\in\mathbb{R}$ (see Table \ref{table:1}), 
be the Student-$t$ density generator of the \text{EGSE}$_n$ (multivariate extended $G$-skew-Student-$t$) distribution. 

\begin{definition}\label{def-main-2}
	A random variable $X$ follows a univariate extended skew-Student-$t$ (EST$_1$) distribution, denoted by $X\sim\text{ EST}_1(\mu,\sigma^2,\lambda,\nu,\tau)$, if its PDF is given by \cite[see][]{Avg:10}
	\begin{align*}
	f_{{\rm EST}_1}(x;\mu,\sigma^2,\lambda,\nu,\tau)
	=	
	{1\over\sigma}\,
	f_{\nu}(z)\, 
	\dfrac{ 
		F_{\nu+1}\Big( \left(\lambda z+\tau\right)
		\sqrt{\nu+1\over\nu+z^2}\,\Big) 
	}{
		F_{\nu}\Big({\tau\over\sqrt{1+\lambda^2}}\Big)
	},  
	\quad x\in\mathbb{R}; \  \mu,\lambda,\tau\in\mathbb{R}, \ \sigma,\nu>0,
	\end{align*}
	where $z=(x-\mu)/\sigma$, and $f_{\nu}$ and $F_\nu$ denote the PDF and CDF of the standard Student-$t$ distribution with $\nu>0$ degrees of freedom, respectively. 	Let $S_{{\rm ESN}_1}(x;\mu,\sigma^2,\lambda,\tau)$ be the SF corresponding to EST$_1$ PDF.
\end{definition}

\medskip
From \cite{Avg:10}, the  unconditional and conditional distributions of  $\boldsymbol{W}=\boldsymbol{X}\,\vert\, \boldsymbol{\lambda}^\top (\boldsymbol{X}-\boldsymbol{\mu})+\tau>Z$ are respectively given by
\begin{align}
&W_{i}
\sim
{\rm EST}_1\left(
\mu_{i}, \,
\sigma_{ii}, \,
\dfrac{\lambda_i \sigma_{ii}^{1/2}+\lambda_j \sigma_{jj}^{1/2}\rho_{ij}}{\sigma_{ii}^{1/2}\sqrt{1+\lambda_j^2\sigma_{jj}(1-\rho_{ij}^2)}}, \,
\nu, \,
\dfrac{\tau}{\sqrt{1+\lambda_j^2\sigma_{jj}(1-\rho_{ij}^2)}}
\right), 
\label{eq-1}
\\[0,2cm]
&W_{j}
\sim
{\rm EST}_1\left(
\mu_{j}, \,
\sigma_{jj}, \,
\dfrac{\lambda_j\sigma_{jj}^{1/2}+\lambda_i\sigma_{ii}^{1/2}\rho_{ij}}{\sigma_{jj}^{1/2}\sqrt{1+\lambda_{ii}\sigma_{ii}(1-\rho_{ij}^2)}}, \,
\nu, \,
\dfrac{\tau}{\sqrt{1+\lambda_i^2\sigma_{ii}(1-\rho_{ij}^2)}}
\right),
\label{eq-2}
\end{align}
and
\begin{align}
W_{j}\,\vert\,W_{i}=y
\sim 
{\rm EST}_1\left(
\boldsymbol{\mu}_{y}, \,
\boldsymbol{\sigma}^{\,2}_{y;\nu}, \,
\lambda_j\sigma_{jj}^{1/2}\sqrt{1-\rho_{ij}^2},\,
\nu+1,\,
\boldsymbol{\tau}_{y;\nu}
\right),
\label{eq-2-5}
\end{align}
where we are adopting the following notation:
\begin{align}\label{eq-3}
\begin{array}{lll}
&\displaystyle
\boldsymbol{\mu}_{y}
=
\mu_j+\sigma_{jj}^{1/2}\rho_{ij}\left(y-\mu_{i}\over \sigma_{ii}^{1/2}\right);
\\[0,6cm]
&\displaystyle
\boldsymbol{\sigma}^{\,2}_{y;\nu}
=
\dfrac{
	\nu+{(y-\mu_{1i})^2\over \sigma_{ii}}
}{\nu+1}\, \sigma_{jj} (1-\rho_{ij}^2);
\\[0,6cm]
\displaystyle
&\boldsymbol{\tau}_{y;\nu}
=
\left[(\lambda_i\sigma_{ii}^{1/2}+\lambda_j\sigma_{jj}^{1/2}\rho_{ij})\left(y-\mu_{i}\over \sigma_{ii}^{1/2}\right)+\tau\right]
\sqrt{\nu+1\over \nu+{(y-\mu_{i})^2\over \sigma_{ii}}}.
\end{array} 
\end{align}

Hence, by combining \eqref{main-formula} with \eqref{eq-2}, \eqref{eq-2-5} and \eqref{eq-3},  we obtain
	\begin{align}\label{main-formula-1}
f_{Y_i\,|\,Y_j > \kappa}(y)
	&=
	f_{{\rm EST}_1}\left(
	G_i(y); \,
	\mu_{i},  \,
	\sigma_{ii},  \,
\dfrac{\lambda_i \sigma_{ii}^{1/2}+\lambda_j\sigma_{jj}^{1/2}\rho_{ij}}{\sigma_{ii}^{1/2}\sqrt{1+\lambda_j\sigma_{jj}(1-\rho_{ij}^2)}}, \,
\nu, \,
\dfrac{\tau}{\sqrt{1+\lambda_j^2\sigma_{jj}(1-\rho_{ij}^2)}}
	\right) 
	G_{i}'(y)
	\nonumber
	\\[0,3cm]
	&\times 
	\dfrac{\displaystyle
		S_{{\rm EST}_1}\left(
		G_j(\kappa); \,
\boldsymbol{\mu}_{_{G_i(y)}}, \,
\boldsymbol{\sigma}^{\,2}_{_{G_i(y);\nu}}, \,
\lambda_j\sigma_{jj}^{1/2}\sqrt{1-\rho_{ij}^2},\,
\nu+1,\,
\boldsymbol{\tau}_{_{G_i(y);\nu}}
		\right) 
	}{ 
		S_{{\rm EST}_1}\left(
		G_j(\kappa); \,
		\mu_{j}, \,
		\sigma_{jj}, \,
\dfrac{\lambda_j\sigma_{jj}^{1/2}+\lambda_i\sigma_{ii}^{1/2}\rho_{ij}}{\sigma_{jj}^{1/2}\sqrt{1+\lambda_i^2\sigma_{ii}(1-\rho_{ij}^2)}}, \,
\nu, \,
\dfrac{\tau}{\sqrt{1+\lambda_i^2\sigma_{ii}(1-\rho_{ij}^2)}}
		\right)
	},
	\end{align}
for $y\in D$ and $\kappa\in D$.

On the other hand, from \eqref{marginal-pdf} and \eqref{eq-1}  the marginal PDF of $Y_i$ is obtained.

\subsubsection{Gaussian density generator}
Let $g^{(n)}(x)=\exp(-x/2)$, $x\in\mathbb{R}$ (see Table \ref{table:1}), 
be the Gaussian density generator of the \text{EGSE}$_n$ (multivariate extended $G$-skew-normal) distribution.

\begin{definition}
	A random variable $X$ follows a univariate extended skew-normal (ESN$_1$) distribution, denoted by $X\sim{ ESN}_1(\mu,\sigma^2,\lambda,\tau)$, if its PDF is given by 
	\cite[see][]{Vernic:05, Avg:10}
	\begin{align*}
	f_{{\rm ESN}_1}(x;\mu,\sigma^2,\lambda,\tau)
	=	
	{1\over\sigma}\,
	\phi(z)\, 
	{\Phi(\lambda z+\tau)\over \Phi\big({\tau\over\sqrt{1+\lambda^2}}\big)}, 
	\quad x\in\mathbb{R}; \ \mu,\lambda,\tau\in\mathbb{R}, \ \sigma>0,
	\end{align*}
	where $z=(x-\mu)/\sigma$, and $\phi$ and $\Phi$ denote the PDF and CDF of the standard normal  distribution, respectively.	Let $S_{{\rm ESN}_1}(x;\mu,\sigma^2,\lambda,\tau)$ denote the SF corresponding to ESN$_1$ PDF.
\end{definition}

\medskip 
Since 
\begin{align*}
\lim_{\nu\to\infty}\boldsymbol{\sigma}^{\,2}_{y;\nu}
= 
\sigma_{jj}(1-\rho_{ij}^2),
\quad 
\lim_{\nu\to\infty}\boldsymbol{\tau}_{y;\nu}
=
(\lambda_i\sigma_{ii}^{1/2}+\lambda_j\sigma_{jj}^{1/2}\rho_{ij})\left(y-\mu_{i}\over \sigma_{ii}^{1/2}\right)+\tau,
\end{align*}
and
$\lim_{\nu\to\infty}	f_{{\rm EST}_1}(x;\mu,\sigma^2,\lambda,\nu,\tau)
=
f_{{\rm ESN}_1}(x;\mu,\sigma^2,\lambda,\tau)$, by letting $\nu\to\infty$ in \eqref{main-formula-1}, we obtain
{\small
	\begin{align}\label{main-formula-2}
&f_{Y_i\,|\,Y_j > \kappa}(y)
	=
	f_{{\rm ESN}_1}\left(
	G_i(y); \,
	\mu_{i}, \,
	\sigma_{ii}, \,
\dfrac{\lambda_i \sigma_{ii}^{1/2}+\lambda_j\sigma_{jj}^{1/2}\rho_{ij}}{\sigma_{ii}^{1/2}\sqrt{1+\lambda_j^2\sigma_{jj}(1-\rho_{ij}^2)}}, \,
\nu, \,
\dfrac{\tau}{\sqrt{1+\lambda_j^2\sigma_{jj}(1-\rho_{ij}^2)}}
	\right) 		
	G_{i}'(y) 
	\nonumber
	\\[0,3cm]
	&
	\times 
	\dfrac{
		S_{{\rm ESN}_1}\left(
		G_j(\kappa); \,
		\mu_j+\sigma_{jj}^{1/2}\rho_{ij}\left(G_i(y)-\mu_{i}\over \sigma_{ii}^{1/2}\right), \, 
		\sigma_{jj}(1-\rho_{ij}^2), \,
		\lambda_j\sigma_{jj}^{1/2}\sqrt{1-\rho_{ij}^2}, \,
		(\lambda_i\sigma_{ii}^{1/2}+\lambda_j\sigma_{jj}^{1/2}\rho_{ij})\left(G_i(y)-\mu_{i}\over \sigma_{ii}^{1/2}\right)+\tau
		\right) 
	}{ 
		S_{{\rm ESN}_1}\left(
		G_j(\kappa); \,
		\mu_{j}, \,
		\sigma_{jj}, \,
\dfrac{\lambda_j\sigma_{jj}^{1/2}+\lambda_i\sigma_{ii}^{1/2}\rho_{ij}}{\sigma_{jj}^{1/2}\sqrt{1+\lambda_i^2\sigma_{ii}(1-\rho_{ij}^2)}}, \,
\nu, \,
\dfrac{\tau}{\sqrt{1+\lambda_i^2\sigma_{ii}(1-\rho_{ij}^2)}}
		\right)
	},
	\end{align}
}\noindent
for $y\in D$ and $\kappa\in D$.

On the other hand, from \eqref{marginal-pdf} and \eqref{eq-1} (with $\nu\to\infty$)  the marginal PDF of $Y_i$ is obtained.

%

%

\subsection{Expected value of a function of an EGSE$_n$ random vector
}


Let $\boldsymbol{Y}=(Y_1,\ldots,Y_n)^\top\sim\text{ EGSE}_n(\boldsymbol{\mu},\boldsymbol{\Sigma},\boldsymbol{\lambda},\tau,g^{(n)})$ and let
$\varphi: D^n\to\mathbb{R}$ be a real-valued measurable-analytic function. In this subsection, we provide simple closed formulas for the expected value of $\varphi(\boldsymbol{Y})$ and for the  mixed-moments, marginal moments and cross-moments of the EGSE$_n$ random vector $\boldsymbol{Y}$ for the special case $G_i(x)=\log(x)$, $x\in D=(0,\infty)$, $i=1,\ldots,n$.

Indeed, 
from stochastic representation in \eqref{rep-stoch} it follows that
\begin{align*}
\varphi(\boldsymbol{Y})
\stackrel{d}{=}
\varphi(G_1^{-1}(W_1),\ldots,G_n^{-1}(W_n)),
\end{align*}
where $\boldsymbol{W}\sim\text{ ESE}_n(\boldsymbol{\mu},\boldsymbol{\Sigma},\boldsymbol{\lambda},\tau,g^{(n)})$.
Let $\psi=\varphi\circ (G_1^{-1}\circ\pi_1,\ldots,G_n^{-1}\circ\pi_n)$ denote the composition function of $\varphi$ with $(G_1^{-1}\circ\pi_1,\ldots,G_n^{-1}\circ\pi_n)$, where $\pi_k$ denotes the $k$th projection function. The above representation is written as
\begin{align*}
\varphi(\boldsymbol{Y})
\stackrel{d}{=}
\psi(\boldsymbol{W}),
\end{align*}
which implies that
%
\begin{align}\label{rel-1}
\mathbb{E}[\varphi(\bm Y)]
=
\mathbb{E}[\psi(\boldsymbol{W})]
=
\int_{\mathbb{R}^n}
\psi(\bm w)
f_{\boldsymbol{W}}(\boldsymbol{w})
{\rm d} \boldsymbol{w}.
\end{align}
Consider ${\bm v}=(v_1,\ldots,v_n)^\top\in\mathbb{R}^n$ an $n$-dimensional vector.
Upon using the multivariate Taylor expansion of function $\bm w\longmapsto \psi(\bm w)$ around the point $\bm v$, that is (committing an abuse of notation),
\begin{align}
\psi(\bm w+\bm v)
&=
\left(
\sum_{k=0}^\infty
{1\over k!}\,
\sum_{i_1,\ldots,i_k=1}^{n} w_{i_1}\cdots w_{i_k} \, {\partial^k\over\partial v_{i_1}\cdots v_{i_k}} 
\right)
\psi({\bm v})
\nonumber
\\[0,2cm]
&=
\left(
\sum_{k=0}^\infty
{1\over k!}\,
(\bm w^\top \boldsymbol{\nabla_{\bm v}})^{k} 
\right)
\psi(\bm v),
\quad 
\text{with} \
\boldsymbol{\nabla_{\bm v}}
=
\left({\partial\over\partial v_1},\ldots,{\partial\over\partial v_n}\right)^\top,
\nonumber
\\[0,2cm]
&=
\exp(\bm w^\top \boldsymbol{\nabla_{\bm v}}) \psi(\bm v),
\label{expansion-taylor}
\end{align}
the expectation in \eqref{rel-1} becomes
\begin{align}
\mathbb{E}[\varphi(\bm Y)]
&=
	\int_{\mathbb{R}^n}
	\left[
	\psi(\bm w+\bm v)
	\big\vert_{\bm v=\bm 0}
	\right]
	f_{\boldsymbol{W}}(\boldsymbol{w})
	{\rm d} \boldsymbol{w} \nonumber
		\\[0,2cm]
	&=
	\int_{\mathbb{R}^n}
\left[
\exp(\bm w^\top \boldsymbol{\nabla_{\bm v}}) \psi(\bm v)
\big\vert_{\bm v=\bm 0}
\right]
f_{\boldsymbol{W}}(\boldsymbol{w})
{\rm d} \boldsymbol{w}	
	\nonumber
	\\[0,2cm]
	&=
	\left[
		\int_{\mathbb{R}^n}
		\exp(\bm w^\top \boldsymbol{\nabla_{\bm v}})
			f_{\boldsymbol{W}}(\boldsymbol{w})
			{\rm d} \boldsymbol{w}
			\right]
	\psi(\bm v)\Bigg\vert_{\bm v=\bm 0} 
	=
M_{\boldsymbol{W}}(\boldsymbol{\nabla_{\bm v}})
\psi(\bm v)
	\big\vert_{\bm v=\bm 0},
	\label{formula-exp-function}
\end{align}
where
\begin{align}\label{def-psi}
\psi(\bm v)=\varphi(G_1^{-1}(v_1),\ldots,G_n^{-1}(v_n))
\end{align}
and $M_{\boldsymbol{W}}(\bm s)$ is the moment generating function (MGF) of the multivariate random vector $\boldsymbol{W}$, whenever it exists.

\medskip
In the case that $\boldsymbol{Y}$ has a multivariate extended $G$-skew-normal distribution (see Table \ref{table:1}) case, $\boldsymbol{W}$ follows an multivariate extended skew-normal distribution (see Table \ref{table:4}) with parameter vector $(\boldsymbol{\mu},\boldsymbol{\Sigma},\boldsymbol{\lambda},\tau)^\top$. So, by using the definition of PDF $f_{\boldsymbol{W}}$ given in \eqref{def-pdf-W}, we have
\begin{align*}
	 M_{\boldsymbol{W}}(\bm s)
	 &=
	 \int_{\mathbb{R}^n}
	 \exp(\bm s^\top \boldsymbol{w})
	 f_{\boldsymbol{W}}(\boldsymbol{w})
	 {\rm d} \boldsymbol{w}
	 \\[0,2cm]
	 &=
	 	\int_{\mathbb{R}^n}
	 \exp(\bm s^\top \boldsymbol{w})
	 \phi_n(\boldsymbol{w};\,\boldsymbol{\mu},\boldsymbol{\Sigma}) \,
	 {
	 	\Phi\left(\boldsymbol{\lambda}^\top (\boldsymbol{w}-\boldsymbol{\mu})+\tau\right)
	 	\over 
	 	\Phi\Big(
	 	{\tau
	 		\over \sqrt{1+\boldsymbol{\lambda}^\top\boldsymbol{\Sigma}\boldsymbol{\lambda}}
	 	}
	 	\Big)
	} {\rm d} \boldsymbol{w}.
\end{align*}

A simple observation shows that
\begin{align*}
\exp(\bm s^\top \boldsymbol{w})
\phi_n(\boldsymbol{w};\,\boldsymbol{\mu},\boldsymbol{\Sigma})
=
\exp\left(\bm s^\top \boldsymbol{\mu}+{1\over 2}\,\boldsymbol{s}^\top \boldsymbol{\Sigma}\boldsymbol{s}\right)
\phi_n(\boldsymbol{w};\,\boldsymbol{\mu}^*,\boldsymbol{\Sigma}),
\quad
\boldsymbol{\mu}^* 
=
\boldsymbol{\mu}+\boldsymbol{\Sigma}\boldsymbol{s}.
\end{align*}

Then, upon using the above identity, the MGF of $\boldsymbol{W}$ is 
\begin{align*}
	 M_{\boldsymbol{W}}(\bm s)
	 =
	 \exp\left(\bm s^\top \boldsymbol{\mu}+{1\over 2}\,\boldsymbol{s}^\top \boldsymbol{\Sigma}\boldsymbol{s}\right)
	 \,
	 {
	 		\Phi\big(
	 {\tau^*
	 	\over \sqrt{1+\boldsymbol{\lambda}^\top\boldsymbol{\Sigma}\boldsymbol{\lambda}}
	 }
	 \big)
	 \over 
	 \Phi\big(
	 {\tau
	 	\over \sqrt{1+\boldsymbol{\lambda}^\top\boldsymbol{\Sigma}\boldsymbol{\lambda}}
	 }
	 \big)
	}
		 \int_{\mathbb{R}^n}
\phi_n(\boldsymbol{w};\,\boldsymbol{\mu}^*,\boldsymbol{\Sigma}) \,
	{
		\Phi\left(\boldsymbol{\lambda}^\top (\boldsymbol{w}-\boldsymbol{\mu}^*)+\tau^*\right)
		\over 
		\Phi\Big(
		{\tau^*
			\over \sqrt{1+\boldsymbol{\lambda}^\top\boldsymbol{\Sigma}\boldsymbol{\lambda}}
		}
		\Big)
	} {\rm d} \boldsymbol{w},
\end{align*}
with $\tau^*=\boldsymbol{\lambda}^\top\boldsymbol{\Sigma}\boldsymbol{s}+\tau$. Let
$\boldsymbol{W}^*$ be a random vector following a multivariate extended skew-normal distribution (see Table \ref{table:4}) with parameter vector $(\boldsymbol{\mu}^*,\boldsymbol{\Sigma},\boldsymbol{\lambda},\tau^*)$.
Using this notation, the MGF of $\boldsymbol{W}$ is expressed as
\begin{align*}
M_{\boldsymbol{W}}(\bm s)
&=
\exp\left(\bm s^\top \boldsymbol{\mu}+{1\over 2}\,\boldsymbol{s}^\top \boldsymbol{\Sigma}\boldsymbol{s}\right)
\,
{
	\Phi\big(
	{\tau^*
		\over \sqrt{1+\boldsymbol{\lambda}^\top\boldsymbol{\Sigma}\boldsymbol{\lambda}}
	}
	\big)
	\over 
	\Phi\big(
	{\tau
		\over \sqrt{1+\boldsymbol{\lambda}^\top\boldsymbol{\Sigma}\boldsymbol{\lambda}}
	}
	\big)
}
	 \int_{\mathbb{R}^n}
f_{\boldsymbol{W}^*}(\boldsymbol{w})
{\rm d} \boldsymbol{w}
\\[0,2cm]
&=
{
1
	\over 
	\Phi\big(
	{\tau
		\over \sqrt{1+\boldsymbol{\lambda}^\top\boldsymbol{\Sigma}\boldsymbol{\lambda}}
	}
	\big)
}
\,
\exp\left(\bm s^\top \boldsymbol{\mu}+{1\over 2}\,\boldsymbol{s}^\top \boldsymbol{\Sigma}\boldsymbol{s}\right)
\,
{
	\Phi\left(
	{\boldsymbol{\lambda}^\top\boldsymbol{\Sigma}\boldsymbol{s}+\tau
		\over \sqrt{1+\boldsymbol{\lambda}^\top\boldsymbol{\Sigma}\boldsymbol{\lambda}}
	}
	\right)
}.
\end{align*}

Replacing the above formula in \eqref{formula-exp-function}, we have
\begin{align*}
	\mathbb{E}[\varphi(\bm Y)]
	=
	\left[
	\exp(
	\bm \nabla_{\bm v}^\top \boldsymbol{\mu}
	) \psi(\bm v)	\big\vert_{\bm v=\bm 0}\,
	\right]
	\left[
		\exp\left(
	{1\over 2}\boldsymbol{\nabla_{\bm v}}^\top \boldsymbol{\Sigma}\boldsymbol{\nabla_{\bm v}}
	\right) \psi(\bm v)	\Bigg\vert_{\bm v=\bm 0}\,
	\right]
	\left[
	{
		\Phi\left(
		{\boldsymbol{\lambda}^\top\boldsymbol{\Sigma}\boldsymbol{\nabla_{\bm v}}+\tau
			\over \sqrt{1+\boldsymbol{\lambda}^\top\boldsymbol{\Sigma}\boldsymbol{\lambda}}
		}
		\right)
			\over 
	\Phi\big(
	{\tau
		\over \sqrt{1+\boldsymbol{\lambda}^\top\boldsymbol{\Sigma}\boldsymbol{\lambda}}
	}
	\big)
}\,
\psi(\bm v)
	\Bigg\vert_{\bm v=\bm 0}
	\right].
\end{align*}
By using the multivariate Taylor expansion \eqref{expansion-taylor}, $	\exp(
\bm \nabla_{\bm v}^\top \boldsymbol{\mu}
) \psi(\bm v)=\psi(\bm \mu+\bm v)$. Then, we obtain the following closed formula for the expected value of a function of  $\bm Y$ having a multivariate extended $G$-skew-normal distribution (see Table \ref{table:1}):
\begin{align}
	\mathbb{E}[\varphi(\bm Y)]
=
\psi(\bm \mu)	
\left[
\exp\left(
{1\over 2}\boldsymbol{\nabla_{\bm v}}^\top \boldsymbol{\Sigma}\boldsymbol{\nabla_{\bm v}}
\right) \psi(\bm v)	\Bigg\vert_{\bm v=\bm 0}\,
\right]
\left[
{
	\Phi\left(
	{\boldsymbol{\lambda}^\top\boldsymbol{\Sigma}\boldsymbol{\nabla_{\bm v}}+\tau
		\over \sqrt{1+\boldsymbol{\lambda}^\top\boldsymbol{\Sigma}\boldsymbol{\lambda}}
	}
	\right)
	\over 
	\Phi\big(
	{\tau
		\over \sqrt{1+\boldsymbol{\lambda}^\top\boldsymbol{\Sigma}\boldsymbol{\lambda}}
	}
	\big)
}\,
\psi(\bm v)
\Bigg\vert_{\bm v=\bm 0}
\right],
\label{formula-exp-function-1}
\end{align}
with $\psi$ being as in \eqref{def-psi}.

\begin{remark}
	\begin{itemize}
	\item[(i)]
When the extension parameter is absent, that is, $\tau=0$, we have
\begin{align*}
	\mathbb{E}[\varphi(\bm Y)]
=
2
\psi(\bm \mu)	
\left[
\exp\left(
{1\over 2}\boldsymbol{\nabla_{\bm v}}^\top \boldsymbol{\Sigma}\boldsymbol{\nabla_{\bm v}}
\right) \psi(\bm v)	\Bigg\vert_{\bm v=\bm 0}\,
\right]
\left[
{
	\Phi\left(
	{\boldsymbol{\lambda}^\top\boldsymbol{\Sigma}\boldsymbol{\nabla_{\bm v}}
		\over \sqrt{1+\boldsymbol{\lambda}^\top\boldsymbol{\Sigma}\boldsymbol{\lambda}}
	}
	\right)
}\,
\psi(\bm v)
\Bigg\vert_{\bm v=\bm 0}\,
\right].
\end{align*}
	\item[(ii)]
When the skewness parameter is absent, that is, $\boldsymbol{\lambda}=\boldsymbol{0}$, we have
\begin{align*}
\mathbb{E}[\varphi(\bm Y)]
=
\psi(\bm \mu)	
\left[
\exp\left(
{1\over 2}\boldsymbol{\nabla_{\bm v}}^\top \boldsymbol{\Sigma}\boldsymbol{\nabla_{\bm v}}
\right) \psi(\bm v)	\Bigg\vert_{\bm v=\bm 0}\,
\right].
\end{align*}
\end{itemize}
\end{remark}

\begin{remark}
	\begin{itemize}
	\item[(i)]
The exponential operator $\exp\left(\boldsymbol{\nabla_{\bm v}}^\top \boldsymbol{\Sigma}\boldsymbol{\nabla_{\bm v}}/2\right)$ that appears in \eqref{formula-exp-function-1} can be written as
\begin{align}\label{operator-1}
\exp\left({1\over 2}\,\boldsymbol{\nabla_{\bm v}}^\top \boldsymbol{\Sigma}\boldsymbol{\nabla_{\bm v}}\right)
&=
\sum_{k=0}^{\infty}
{1\over k!}\,
\left(
{1\over 2}\,\boldsymbol{\nabla_{\bm v}}^\top \boldsymbol{\Sigma}\boldsymbol{\nabla_{\bm v}}\right)^{k}
\nonumber
\\[0,2cm]
&=
\sum_{k=0}^{\infty}
{1\over k!}\,
{1\over 2^k}
\sum_{j_1,l_1,\ldots,j_k,l_k=1}^n
\sigma_{j_1l_1}\cdots\sigma_{j_kl_k}\,{\partial^{2k}\over\partial v_{j_1} \partial v_{l_1}\cdots \partial v_{j_k} \partial v_{l_k}}.
\end{align}

	\item[(ii)]
By using the series representation of the Gaussian CDF:
\begin{align*}
\Phi(x)
=
{1\over 2}
+
{1\over\sqrt{\pi}}
\sum_{k=0}^\infty {(-1)^{3k} 2^{-{1\over 2}-k}\over (1+2k)k!}\, x^{2k},
\end{align*}
the operator $	\Phi((
\boldsymbol{\lambda}^\top\boldsymbol{\Sigma}\boldsymbol{\nabla_{\bm v}}+\tau)
	/ \sqrt{1+\boldsymbol{\lambda}^\top\boldsymbol{\Sigma}\boldsymbol{\lambda}}
\,)$ that appears in \eqref{formula-exp-function-1} can be written as
\begin{align}\label{operator-2}
{
	\Phi\left(
	{\boldsymbol{\lambda}^\top\boldsymbol{\Sigma}\boldsymbol{\nabla_{\bm v}}+\tau
		\over \sqrt{1+\boldsymbol{\lambda}^\top\boldsymbol{\Sigma}\boldsymbol{\lambda}}
	}
	\right)
}
&=
{1\over 2}
+
{1\over\sqrt{\pi}}
\sum_{k=0}^\infty {(-1)^{3k} 2^{-{1\over 2}-k}\over (1+2k)k!} \left(
{\boldsymbol{\lambda}^\top\boldsymbol{\Sigma}\boldsymbol{\nabla_{\bm v}}+\tau
	\over \sqrt{1+\boldsymbol{\lambda}^\top\boldsymbol{\Sigma}\boldsymbol{\lambda}}
}
\right)^{2k}
\nonumber
\\[0,2cm]
&=
{1\over 2}
+
{1\over\sqrt{\pi}}
\sum_{k=0}^\infty {(-1)^{3k} 2^{-{1\over 2}-k}\over (1+2k)k!} 
\sum_{r=0}^{2k}
\binom{2k}{r}
\left(
{
	\tau
	\over \sqrt{1+\boldsymbol{\lambda}^\top\boldsymbol{\Sigma}\boldsymbol{\lambda}}
}
\right)^{2k-r}
\nonumber
\\[0,2cm]
&\times
{\displaystyle 
	\sum_{j_1,l_1,\ldots,j_r,l_r=1}^{n} \sigma_{l_1j_1}\cdots\sigma_{l_rj_r}\lambda_{l_1}\cdots\lambda_{l_r}
	{\partial^r\over\partial v_{j_1}\cdots\partial v_{j_r}}
	\over (\sqrt{1+\boldsymbol{\lambda}^\top\boldsymbol{\Sigma}\boldsymbol{\lambda}}\,)^r
},
\end{align}
where in the last equality a binomial expansion was used.
\end{itemize}
\end{remark}

\begin{remark}
Since $	\mathbb{E}[\varphi(\bm Y)]$ in \eqref{formula-exp-function-1} depends on the operator formulas in \eqref{operator-1} and \eqref{operator-2}, these can be used to facilitate its calculation.
\end{remark}

\subsubsection{Mixed-moments}
Let $\varphi(\bm y)=\prod_{i=1}^{n} \pi^m_i(\bm y)=\prod_{i=1}^{n} y_i^{m_i}$, where $\pi_i$ is the  $i$th projection function. From \eqref{formula-exp-function} we have  the next formula for the mixed-moments of $\bm Y$:
\begin{align*}
\mathbb{E}\left(\prod_{i=1}^{n} Y_i^{m_i}\right)
=
M_{\boldsymbol{W}}(\boldsymbol{\nabla_{\bm v}})
\prod_{i=1}^{n}
[G_i^{-1}(v_i)]^{m_i}
\Big\vert_{\bm v=\bm 0}.
\end{align*}

\medskip
In the case that $\boldsymbol{Y}$ has a multivariate extended $G$-skew-normal distribution (see Table \ref{table:1}), 
from \eqref{formula-exp-function-1} we have 
\begin{align}\label{mixed-moments}
\mathbb{E}\left(\prod_{i=1}^{n} Y_i^{m_i}\right)
&=
\prod_{i=1}^{n}
[G_i^{-1}(\mu_i)]^{m_i}
\left[
\exp\left(
{1\over 2}\boldsymbol{\nabla_{\bm v}}^\top \boldsymbol{\Sigma}\boldsymbol{\nabla_{\bm v}}
\right) \prod_{i=1}^{n} [G_i^{-1}(v_i)]^{m_i}	\Bigg\vert_{\bm v=\bm 0}\,
\right]
\nonumber
\\[0,2cm]
&\times
\left[
{
	\Phi\left(
	{\boldsymbol{\lambda}^\top\boldsymbol{\Sigma}\boldsymbol{\nabla_{\bm v}}+\tau
		\over \sqrt{1+\boldsymbol{\lambda}^\top\boldsymbol{\Sigma}\boldsymbol{\lambda}}
	}
	\right)
	\over 
	\Phi\big(
	{\tau
		\over \sqrt{1+\boldsymbol{\lambda}^\top\boldsymbol{\Sigma}\boldsymbol{\lambda}}
	}
	\big)
}
\prod_{i=1}^{n}
[G_i^{-1}(v_i)]^{m_i}
\Bigg\vert_{\bm v=\bm 0}
\right].
\end{align}

\medskip 
It is clear that the above formula is extremely complicated for functions $G_i$s in general such as those in Table \ref{table:2-1}. For illustration purposes, let us consider $G_i(x)=\log(x)$, $x\in D=(0,\infty)$, $i=1,\ldots,n$. So, by using formula in \eqref{operator-1}, we have
\begin{align*}
	\exp\left({1\over 2}\,\boldsymbol{\nabla_{\bm v}}^\top \boldsymbol{\Sigma}\boldsymbol{\nabla_{\bm v}}\right)
	\prod_{i=1}^{n}
	[G_i^{-1}(v_i)]^{m_i}
=
	\exp\left({1\over 2}\,{\bm m}^\top \boldsymbol{\Sigma} {\bm m}+{\bm m}^\top{\bm v}\right).
\end{align*}
On the other hand, by using formula in \eqref{operator-2}, we obtain
\begin{align*}
\Phi\left(
	{\boldsymbol{\lambda}^\top\boldsymbol{\Sigma}\boldsymbol{\nabla_{\bm v}}+\tau
		\over \sqrt{1+\boldsymbol{\lambda}^\top\boldsymbol{\Sigma}\boldsymbol{\lambda}}
	}
	\right)
	\prod_{i=1}^{n}
[G_i^{-1}(v_i)]^{m_i}
=
	\Phi\left(
{\boldsymbol{\lambda}^\top\boldsymbol{\Sigma} {\bm m}+\tau
	\over \sqrt{1+\boldsymbol{\lambda}^\top\boldsymbol{\Sigma}\boldsymbol{\lambda}}
}
\right)
\exp({\bm m}^\top{\bm v}).
\end{align*}
Replacing the last two expressions in \eqref{mixed-moments}, we obtain 
\begin{align*}
	\mathbb{E}\left(\prod_{i=1}^{n} Y_i^{m_i}\right)
	=
	\exp\left({\bm m}^\top{\bm \mu}+{1\over 2}\,{\bm m}^\top \boldsymbol{\Sigma} {\bm m}\right)
	\dfrac{
		\Phi\left(
	{\boldsymbol{\lambda}^\top\boldsymbol{\Sigma} {\bm m}+\tau
		\over \sqrt{1+\boldsymbol{\lambda}^\top\boldsymbol{\Sigma}\boldsymbol{\lambda}}
	}
	\right)
}{		\Phi\left(
{\tau
	\over \sqrt{1+\boldsymbol{\lambda}^\top\boldsymbol{\Sigma}\boldsymbol{\lambda}}
}
\right)}.
\end{align*}
The above formula has appeared in \cite{Genton2010} for the special case $\tau=0$.  In particular,
\begin{align*}
	\mathbb{E}\left(Y_i^{m}\right)
=
\exp\left(m\mu_i+{1\over 2}\ m^2\sigma_{ii}\right)
\dfrac{
	\Phi\left(
	{m\sum_{k=1}^{n}\lambda_k\sigma_{ki}+\tau
		\over \sqrt{1+\boldsymbol{\lambda}^\top\boldsymbol{\Sigma}\boldsymbol{\lambda}}
	}
	\right)
}{		\Phi\left(
	{\tau
		\over \sqrt{1+\boldsymbol{\lambda}^\top\boldsymbol{\Sigma}\boldsymbol{\lambda}}
	}
	\right)},
\quad i=1,\ldots,n.
\end{align*}

\begin{remark}
	In the case that $\boldsymbol{Y}$ has a multivariate extended $G$-skew-Student-$t$ distribution (see Table \ref{table:1}),  we cannot guarantee in general the existence of mixed-moments (in particular, the existence of moments), because in this case, when considering $G_i(x)=\log(x)$, $x\in D=(0,\infty)$, $i=1,\ldots,n$ and $\tau=0$, these moments do not exist (see Proposition 7 of reference \cite{Genton2010}).
\end{remark}

\subsubsection{Marginal moments}\label{Marginal moments}

Let $\varphi$ be the $i$th projection function raised to the  $m$th power, that is, $\varphi(\bm y)=\pi^m_i(\bm y)=y_i^m$, $i=1,\ldots,n$. From \eqref{formula-exp-function} we have  the next formula for the marginal moments of $\bm Y$:
\begin{align*}
	\mathbb{E}(Y_i^m)
	=
M_{\boldsymbol{W}}(\boldsymbol{\nabla_{\bm v}})
[G_i^{-1}(v_i)]^m
\big\vert_{v_i=0}.
\end{align*}

\medskip
In the case that $\boldsymbol{Y}$ has a multivariate extended $G$-skew-normal distribution (see Table \ref{table:1}) case, 
from \eqref{formula-exp-function-1} we have (for $i=1,\ldots,n$) 
\begin{align}\label{marginal-moments}
\mathbb{E}(Y_i^m)
=
[G_i^{-1}(\mu_i)]^m
\left[
\exp\left(
{1\over 2}\boldsymbol{\nabla_{\bm v}}^\top \boldsymbol{\Sigma}\boldsymbol{\nabla_{\bm v}}
\right) [G_i^{-1}(v_i)]^m	\Bigg\vert_{v_i=0}\,
\right]
\left[
{
	\Phi\left(
	{\boldsymbol{\lambda}^\top\boldsymbol{\Sigma}\boldsymbol{\nabla_{\bm v}}+\tau
		\over \sqrt{1+\boldsymbol{\lambda}^\top\boldsymbol{\Sigma}\boldsymbol{\lambda}}
	}
	\right)
	\over 
	\Phi\big(
	{\tau
		\over \sqrt{1+\boldsymbol{\lambda}^\top\boldsymbol{\Sigma}\boldsymbol{\lambda}}
	}
	\big)
}\,
[G_i^{-1}(v_i)]^m
\Bigg\vert_{v_i=0}
\right].
\end{align}
By using formula in \eqref{operator-1}, we have
\begin{align}\label{marginal-moments1}
\exp\left({1\over 2}\,\boldsymbol{\nabla_{\bm v}}^\top \boldsymbol{\Sigma}\boldsymbol{\nabla_{\bm v}}\right)
[G_i^{-1}(v_i)]^m
=
\exp\left({\sigma_{ii}^{2}\over 2}\, {\partial^{2} \over\partial v_{i}^{2}}\right) [G_i^{-1}(v_i)]^m.
\end{align}
On the other hand, by using formula in \eqref{operator-2}, we obtain
\begin{align} \label{marginal-moments2}
&{
	\Phi\left(
	{\boldsymbol{\lambda}^\top\boldsymbol{\Sigma}\boldsymbol{\nabla_{\bm v}}+\tau
		\over \sqrt{1+\boldsymbol{\lambda}^\top\boldsymbol{\Sigma}\boldsymbol{\lambda}}
	}
	\right)
}
[G_i^{-1}(v_i)]^m
=
\Phi
\left(
{
(\sum_{l=1}^{n} \sigma_{li}\lambda_l)
{\partial \over \partial v_i}
	+
	\tau
	\over \sqrt{1+\boldsymbol{\lambda}^\top\boldsymbol{\Sigma}\boldsymbol{\lambda}}
}
\right)
[G_i^{-1}(v_i)]^m.
\end{align}
Replacing the expressions  \eqref{marginal-moments1} and  \eqref{marginal-moments2} in \eqref{marginal-moments}, we obtain the following simple closed formula for the marginal moments of the multivariate extended skew-normal random vector $\boldsymbol{Y}$:
\begin{align}
\mathbb{E}(Y_i^m)
=
[G_i^{-1}(\mu_i)]^m
\left[
\exp\left({\sigma_{ii}^{2}\over 2}\, {\partial^{2} \over\partial v_{i}^{2}}\right) [G_i^{-1}(v_i)]^m	\Bigg\vert_{v_i=0}\,
\right]
\left[
{
	\Phi\left(
	{
(\sum_{l=1}^{n} \sigma_{li}\lambda_l)
{\partial \over \partial v_i}
+
\tau
\over 
\sqrt{1+\boldsymbol{\lambda}^\top\boldsymbol{\Sigma}\boldsymbol{\lambda}}
}
	\right)
	\over 
	\Phi\big(
	{\tau
		\over \sqrt{1+\boldsymbol{\lambda}^\top\boldsymbol{\Sigma}\boldsymbol{\lambda}}
	}
	\big)
}\,
[G_i^{-1}(v_i)]^m
\Bigg\vert_{v_i=0}
\right].
\end{align}

\subsubsection{Cross-moments}
By considering  $\varphi(\bm y)=\pi_i(\bm y)\pi_j(\bm y)=y_iy_j$, $i\neq j=1,\ldots,n$, where $\pi_k$ denotes the $k$th projection function, from \eqref{formula-exp-function} we have the following formula for the cross-moments of $\bm Y$: 
\begin{align*}
\mathbb{E}(Y_iY_j)
=
M_{\boldsymbol{W}}(\boldsymbol{\nabla_{\bm v}})
G_i^{-1}(v_i)
G_j^{-1}(v_j)
\big\vert_{v_i=v_j=0}.
\end{align*}

\medskip
In the case that $\boldsymbol{Y}$ has a multivariate extended $G$-skew-normal distribution (see Table \ref{table:1}) case, 
from \eqref{formula-exp-function-1} we have 
\begin{align}\label{cross-moments}
\mathbb{E}(Y_i Y_j)
&=
G_i^{-1}(\mu_i)G_j^{-1}(\mu_j)\!
\left[
\exp\left({1\over 2}\boldsymbol{\nabla_{\bm v}}^\top \boldsymbol{\Sigma}\boldsymbol{\nabla_{\bm v}}\right)
G_i^{-1}(v_i)G_j^{-1}(v_j)\,
\Bigg\vert_{v_i=v_j= 0}
\right]
\nonumber
\\[0,2cm]
&\times
\left[
{
	\Phi\left(
	{\boldsymbol{\lambda}^\top\boldsymbol{\Sigma}\boldsymbol{\nabla_{\bm v}}+\tau
		\over \sqrt{1+\boldsymbol{\lambda}^\top\boldsymbol{\Sigma}\boldsymbol{\lambda}}
	}
	\right)
	\over 
	\Phi\big(
	{\tau
		\over \sqrt{1+\boldsymbol{\lambda}^\top\boldsymbol{\Sigma}\boldsymbol{\lambda}}
	}
	\big)
}\,
G_i^{-1}(v_i)G_j^{-1}(v_j)\,
\Bigg\vert_{v_i=v_j= 0}
\right].
\end{align}
By using formula in \eqref{operator-1}, we have
\begin{align}\label{marginal-moments3}
\exp\left({1\over 2}\,\boldsymbol{\nabla_{\bm v}}^\top \boldsymbol{\Sigma}\boldsymbol{\nabla_{\bm v}}\right)
G_i^{-1}(v_i)G_j^{-1}(v_j)
=
\exp\left(
{1\over 2}
\sum_{r,s\in\{i,j\}}
\sigma_{rs}\,
{\partial^2\over\partial v_r \partial v_s}
\right)
G_i^{-1}(v_i)G_j^{-1}(v_j).
\end{align}
Furthermore, by using formula in \eqref{operator-2}, we obtain
\begin{align} \label{marginal-moments4}
&{
	\Phi\left(
	{\boldsymbol{\lambda}^\top\boldsymbol{\Sigma}\boldsymbol{\nabla_{\bm v}}+\tau
		\over \sqrt{1+\boldsymbol{\lambda}^\top\boldsymbol{\Sigma}\boldsymbol{\lambda}}
	}
	\right)
}
G_i^{-1}(v_i)G_j^{-1}(v_j)
=
\Phi\left({
	\left(\sum_{l=1}^{n} 
	\sigma_{li}\lambda_l\right)
	{\partial\over \partial v_i}
	+
	\left(\sum_{l=1}^{n} 
	\sigma_{lj}\lambda_l\right)
	{\partial\over \partial v_j}
	+
	\tau
	\over \sqrt{1+\boldsymbol{\lambda}^\top\boldsymbol{\Sigma}\boldsymbol{\lambda}}
}\right).
\end{align}
Replacing the expressions  \eqref{marginal-moments3} and  \eqref{marginal-moments4} in \eqref{cross-moments}, we obtain the following closed formula for the cross-moments  of the multivariate extended skew-normal random vector $\boldsymbol{Y}$:
\begin{align*}
\mathbb{E}(Y_i Y_j)
&=
G_i^{-1}(\mu_i)G_j^{-1}(\mu_j)
\left[
\exp\left(
{1\over 2}
\sum_{r,s\in\{i,j\}}
\sigma_{rs}\,
{\partial^2\over\partial v_r \partial v_s}
\right)
G_i^{-1}(v_i)G_j^{-1}(v_j)\,
\Bigg\vert_{v_i=v_j= 0}
\right]
\\[0,2cm]
&\times
\left[
{
\Phi\left({
	\left(\sum_{l=1}^{n} 
	\sigma_{li}\lambda_l\right)
	{\partial\over \partial v_i}
	+
	\left(\sum_{l=1}^{n} 
	\sigma_{lj}\lambda_l\right)
	{\partial\over \partial v_j}
	+
	\tau
	\over \sqrt{1+\boldsymbol{\lambda}^\top\boldsymbol{\Sigma}\boldsymbol{\lambda}}
}\right)
	\over 
	\Phi\big(
	{\tau
		\over \sqrt{1+\boldsymbol{\lambda}^\top\boldsymbol{\Sigma}\boldsymbol{\lambda}}
	}
	\big)
}\,
G_i^{-1}(v_i)G_j^{-1}(v_j)\,
\Bigg\vert_{v_i=v_j= 0}
\right],
\quad i\neq j=1,\ldots,n.
\end{align*}


\subsection{Existence of marginal moments when $D=(0,\infty)$} 
The objective of this subsection is to provide sufficient conditions to ensure the existence of the real moments of the random variable  
$Y_i
=
T_i\,\vert\, \boldsymbol{\lambda}^\top (\boldsymbol{X}-\boldsymbol{\mu})+\tau>Z$, with $T_i=G_i^{-1}(X_i)$ and $G_i:D=(0,\infty)\to\mathbb{R}$, $i=1,\ldots,n$.
To do this, we will consider the notation 
$
W_i
=
X_i\,\vert\, \boldsymbol{\lambda}^\top (\boldsymbol{X}-\boldsymbol{\mu})+\tau>Z
$, $i=1,\ldots,n$, used in Subsection \ref{Stochastic representation}.

Indeed, by using the well-known identity
\begin{align}\label{identity-known}
	\mathbb{E}(Y^p)=p\int_0^\infty y^{p-1}\mathbb{P}(Y>y){\rm d}y, 
	\quad Y>0, \ p>0,
\end{align}
and by employing the relation given in \eqref{identity-CDFs-univariate}:
\begin{align*}
	Y_i\stackrel{d}{=}G_i^{-1}(W_i), \quad i=1,\ldots,n,
\end{align*}
it follows that
\begin{align*}
	\mathbb{E}(Y_i^p)
	&=
	p\int_0^\infty y^{p-1}\mathbb{P}(W_i>G_i(y)){\rm d}y
	\\[0,2cm]
	&=
	p\int_0^a y^{p-1}\mathbb{P}(W_i>G_i(y)){\rm d}y
	+
	p\int_a^\infty y^{p-1}\mathbb{P}(W_i>G_i(y)){\rm d}y
	\\[0,2cm]
	&\leqslant
	a^p+p\int_a^\infty y^{p-1}\mathbb{P}(W_i>G_i(y)){\rm d}y,
\end{align*}
for some $a\in(0,\infty)$. Therefore, a sufficient condition for the existence of positive order moments of $Y_i$ is that
\begin{align}\label{cond-1}
	I=
	\int_a^\infty y^{p-1}\mathbb{P}(W_i>G_i(y)){\rm d}y<\infty, \quad i=1,\ldots,n.
\end{align}

\smallskip
In what remains of this subsection we will analyze condition in \eqref{cond-1} in the special case that  (see Table \ref{table:2-1})
\begin{align}\label{def-G}
	G_i(x)={2H_i(x)-1\over H_i(x)[1-H_i(x)]}, \quad x>0, \ i=1,\ldots,n,
\end{align}
with $H_i$ being the CDF of a continuous random variable with positive support. Indeed, 
as $\{W_i>G_i(y)\}\subset\{\vert W_i\vert>G_i(y)\}$, the integral in \eqref{cond-1} is
\begin{align*}
	I\leqslant
	\int_a^\infty y^{p-1}\mathbb{P}(\vert W_i\vert>G_i(y)){\rm d}y.
\end{align*}
By Markov's inequality, the above integral is at most
\begin{align*}
	\mathbb{E}(\vert W_i\vert^p)
	\int_a^\infty {y^{p-1}\over G^{p}_i(y)}{\rm d}y
	=
	\mathbb{E}(\vert W_i\vert^p)
	\int_a^\infty {y^{p-1}\over G^{p-1}_i(y)}
	{H_i(y)[1-H_i(y)] \over  [2H_i(y)-1]}{\rm d}y.
\end{align*}
As $G_i$ and $H_i$ are increasing, for $p>1$, the above expression is
\begin{align*}
	&\leqslant
	{\mathbb{E}(\vert W_i\vert^p)\over G^{p-1}_i(a) [2H_i(a)-1]}
	\int_a^\infty y^{p-1}
	{[1-H_i(y)]}{\rm d}y
	\\[0,2cm]
	&\leqslant
	{\mathbb{E}(\vert W_i\vert^p)\over G^{p-1}_i(a) [2H_i(a)-1]}
	\int_0^\infty y^{p-1}
	{[1-H_i(y)]}{\rm d}y,
\end{align*}
provided $H_i(a)\neq 1/2$ and $G_i(a)\in(0,\infty)$. If $S_i>0$ is a continuous random variable such that $S_i\stackrel{d}{=}H_i$, by \eqref{identity-known}, the above integral is
\begin{align*}
	=
	{\mathbb{E}(\vert W_i\vert^p)\mathbb{E}(S_i^p)\over pG^{p-1}_i(a) [2H_i(a)-1]}.
\end{align*}
Therefore, for the choice of $G_i$ as in \eqref{def-G}, we have verified that
\begin{align*}
	I\leqslant {\mathbb{E}(\vert W_i\vert^p)\mathbb{E}(S_i^p)\over pG^{p-1}_i(a) [2H_i(a)-1]}.
\end{align*}

Hence, if $G_i$ as in \eqref{def-G}, $a>0$ is such that $H_i(a)\neq 1/2$ and $G_i(a)\in(0,\infty)$, $\mathbb{E}(\vert W_i\vert^p)<\infty$ and $\mathbb{E}(S_i^p)<\infty$ for some $p>1$, then $\mathbb{E}(Y_i^p)$, $i=1,\ldots,n$, exists.


\begin{remark}
	The arguments given in this subsection can easily be extended to establish sufficient conditions for the existence of marginal moments when $D=(-\infty,\infty)$. 
\end{remark}

\subsection{Kullback-Leibler Divergence}

If $f_{\boldsymbol{Y}_1}$ and $f_{\boldsymbol{Y}_2}$ are the PDFs of $\boldsymbol{Y}_1=(Y_{11},\ldots,Y_{1n})^\top\sim \text{ EGSE}_n(\boldsymbol{\mu}_1,\boldsymbol{\Sigma}_1,\boldsymbol{\lambda}_1,\tau_1,g^{(n)})$  and $\boldsymbol{Y}_2=(Y_{21},\ldots,Y_{2n})^\top\sim \text{ EGSE}_n(\boldsymbol{\mu}_2,\boldsymbol{\Sigma}_2,\boldsymbol{\lambda}_2,\tau_2,g^{(n)})$, respectively, their Kullback-Leibler 
divergence measure is defined by
\begin{align*}
D_{\rm KL}(f_{\boldsymbol{Y}_1}\Vert f_{\boldsymbol{Y}_2})
=
\int_{D^n} 
f_{\boldsymbol{Y}_1}(\boldsymbol{y};\boldsymbol{\mu}_1,\boldsymbol{\Sigma}_1,\boldsymbol{\lambda}_1,\tau_1)
\log\left({f_{\boldsymbol{Y}_1}(\boldsymbol{y};\boldsymbol{\mu}_1,\boldsymbol{\Sigma}_1,\boldsymbol{\lambda}_1,\tau_1)
\over f_{\boldsymbol{Y}_2}(\boldsymbol{y};\boldsymbol{\mu}_2,\boldsymbol{\Sigma}_2,\boldsymbol{\lambda}_2,\tau_2)}\right)
{\rm d}{\bm y}.
\end{align*}
Since this divergence measure is invariant under invertible transforms, from stochastic representation in \eqref{rep-stoch}, we have
\begin{align*}
D_{\rm KL}(f_{\boldsymbol{Y}_1}\Vert f_{\boldsymbol{Y}_2})
=
D_{\rm KL}(f_{G_1^{-1}(W_{11}),\ldots,G_n^{-1}(W_{1n})}\Vert f_{G_1^{-1}(W_{21}),\ldots,G_n^{-1}(W_{2n})})
=
D_{\rm KL}(f_{\boldsymbol{W}_1}\Vert f_{\boldsymbol{W}_2}),
\end{align*}
where $f_{\boldsymbol{W}_1}$ and $f_{\boldsymbol{W}_2}$ are the PDFs of
$\boldsymbol{W}_1=(W_{11},\ldots,W_{1n})^\top\sim\text{ ESE}_n(\boldsymbol{\mu}_1,\boldsymbol{\Sigma}_1,\boldsymbol{\lambda}_1,\tau_1,g^{(n)})$
and
$\boldsymbol{W}_2=(W_{21},\ldots,W_{2n})^\top\sim\text{ ESE}_n(\boldsymbol{\mu}_2,\boldsymbol{\Sigma}_2,\boldsymbol{\lambda}_2,\tau_2,g^{(n)})$, respectively.
The Kullback-Leibler divergence measure $D_{\rm KL}(f_{\boldsymbol{W}_1}\Vert f_{\boldsymbol{W}_2})$ for $\boldsymbol{W}_1$ and $\boldsymbol{W}_2$ following multivariate extended skew-normal distributions, with $\tau=0$, was studied in detail in reference \cite{Contreras12}.

Note that, for $\bm \lambda=0$ and $\tau=0$, the Kullback-Leibler 
divergence for $f_{\boldsymbol{Y}_1}$ and $f_{\boldsymbol{Y}_2}$ reduces to 
\begin{align*}
	D_{\rm KL}(f_{\boldsymbol{Y}_1}\Vert f_{\boldsymbol{Y}_2})
	=
	D_{\rm KL}(f_{\boldsymbol{X}_1}\Vert f_{\boldsymbol{X}_2}),
\end{align*}
where
$\boldsymbol{X}_1=(X_{11},\ldots,X_{1n})^\top\sim \text{ ELL}_n(\boldsymbol{\mu}_1,\boldsymbol{\Sigma}_1,g^{(n)})$
and
$\boldsymbol{X}_2=(X_{21},\ldots,X_{2n})^\top\sim \text{ ELL}_n(\boldsymbol{\mu}_2,\boldsymbol{\Sigma}_2,g^{(n)})$.

%

\subsection{Maximum likelihood estimation}\label{sec:mle}

Let $\{\boldsymbol{Y}_k=(Y_{1k},Y_{2k},\ldots,Y_{nk})^\top: k=1,\ldots,m\}$ be a multivariate random sample of size $m$ from $\boldsymbol{Y}\sim\text{ EGSE}_n(\boldsymbol{\mu},\boldsymbol{\Sigma},\boldsymbol{\lambda},\tau,g^{(n)})$ with joint PDF as given in \eqref{def-pdf-T*}, and let $\boldsymbol{y}_k=(y_{1k},y_{2k},\ldots,y_{nk})^{\top}$ be a realization of $\boldsymbol{Y}_k$. To obtain the maximum likelihood estimates (MLEs) of the model parameters with parameter vector $\boldsymbol{\theta} = (\boldsymbol{\mu},\boldsymbol{\Sigma},\boldsymbol{\lambda},\tau)^{\top}$, we maximize the following log-likelihood function 
\begin{align*}
\ell(\boldsymbol{\theta})
&=
\sum_{k=1}^{m}
\log(f_{\boldsymbol{X}}(\boldsymbol{y}_{G,k}))
	+
	\sum_{k=1}^{m}
	\log(
		F_{{\rm ELL}_1}(\boldsymbol{\lambda}^\top (\boldsymbol{y}_{G,k}-\boldsymbol{\mu})+\tau;\, 0,1,g_{q(\boldsymbol{y}_{G,k})})
		)
		\\[0,2cm]
	&-
	m
	\log(
		F_{{\rm ELL}_1}(\tau;\, 0, 1+\boldsymbol{\lambda}^\top\boldsymbol{\Sigma}\boldsymbol{\lambda},g^{(1)})
		)
+
\sum_{k=1}^{m}
\sum_{i=1}^{n}
\log(G_i'(y_{ik})),
\end{align*}
where 	
$\boldsymbol{y}_{G,k}
=
(G_1(y_{1k}),\ldots,G_n(y_{nk}))^\top$. As $\boldsymbol{X}\sim {\rm ELL}_{n}(\boldsymbol{\mu},\boldsymbol{\Sigma},g^{(n)})$, by using formulas \eqref{eq:pdf:sym}, \eqref{id-main00} and \eqref{id-main10} in the above equation, the log-likelihood function (without the additive constant) is written as
\begin{align*}
\ell(\boldsymbol{\theta})
&=
{m\over 2}
\log(|\boldsymbol{\Sigma}^{-1}|)
+
\sum_{k=1}^{m}
\log(
g^{(n)} ((\boldsymbol{y}_{G,k} - \boldsymbol{\mu})^\top \boldsymbol{\Sigma}^{-1} (\boldsymbol{y}_{G,k} - \boldsymbol{\mu}))
)
\\[0,2cm]
&
+
\sum_{k=1}^{m}
\log\left(
\int_{-\infty}^{\boldsymbol{\lambda}^\top (\boldsymbol{y}_{G,k}-\boldsymbol{\mu})+\tau}
{
g^{(2)}(s^2+(\boldsymbol{y}_{G,k}-\boldsymbol{\mu})^\top \boldsymbol{\Sigma}^{-1}(\boldsymbol{y}_{G,k}-\boldsymbol{\mu}))
}
{\rm d}s
\right)
\\[0,2cm]
&-
\sum_{k=1}^{m}
\log
(
g^{(1)}
( 
(\boldsymbol{y}_{G,k}-\boldsymbol{\mu})^\top \boldsymbol{\Sigma}^{-1}(\boldsymbol{y}_{G,k}-\boldsymbol{\mu})
)
)
\\[0,2cm]
&+
{m\over 2}
\log(
1+\boldsymbol{\lambda}^\top\boldsymbol{\Sigma}\boldsymbol{\lambda})
-
m
\log\left(
\int_{-\infty}^{\tau} 
g^{(1)}\left({s^2\over 1+\boldsymbol{\lambda}^\top\boldsymbol{\Sigma}\boldsymbol{\lambda}}\right){\rm d}s
\right).
\end{align*}

The likelihood equations are given by
\begin{align*}
{\partial \ell(\boldsymbol{\theta})\over\partial\boldsymbol{\mu}}
=
\boldsymbol{0}_{n\times 1},
\quad
{\partial \ell(\boldsymbol{\theta})\over\partial\boldsymbol{\Sigma}^{-1}}
=
\boldsymbol{0}_{n\times n},
\quad 
{\partial \ell(\boldsymbol{\theta})\over\partial\boldsymbol{\lambda}}
=
\boldsymbol{0}_{n\times 1},
\quad 
	{\partial \ell(\boldsymbol{\theta})\over\partial\tau}
=0.
\end{align*}
In what follows we determine 
${\partial \ell(\boldsymbol{\theta})/\partial\boldsymbol{\mu}}$,
$
{\partial \ell(\boldsymbol{\theta})/\partial\boldsymbol{\Sigma}^{-1}}
$,
$
{\partial \ell(\boldsymbol{\theta})/\partial\boldsymbol{\lambda}}
$
and
$
{\partial \ell(\boldsymbol{\theta})/\partial\tau}
$. 
Indeed, by using the identities
\begin{align*}
{\partial \boldsymbol{a}^\top\boldsymbol{x}\over\partial \boldsymbol{x}}
=
\boldsymbol{a}^\top,
\quad
{\partial \boldsymbol{x}^\top\boldsymbol{A}\boldsymbol{x}\over\partial \boldsymbol{x}}
=2\boldsymbol{A}\boldsymbol{x},
\quad
{\partial \boldsymbol{x}^\top\boldsymbol{A}\boldsymbol{x}\over\partial \boldsymbol{A}}
=
\boldsymbol{x}\boldsymbol{x}^\top,
\quad
{\partial \boldsymbol{x}^\top\boldsymbol{A}^{-1}\boldsymbol{x}\over\partial \boldsymbol{A}}
=
-\boldsymbol{A}^{-\top}
\boldsymbol{x}\boldsymbol{x}^\top
\boldsymbol{A}^{-\top},
\quad
{\partial \log(\vert\boldsymbol{A}\vert)\over\partial \boldsymbol{A}}
=
\boldsymbol{A}^{-\top},
\end{align*}
with $\boldsymbol{A}$ being a $n\times n$ invertible matrix and $\boldsymbol{x}$ an $n$-dimensional vector,
we have
\begin{itemize}
\item [(i)]
\begin{align*}
{\partial \ell(\boldsymbol{\theta})\over\partial\boldsymbol{\mu}}
&=
-2\boldsymbol{\Sigma}^{-1} 
\sum_{k=1}^{m}
(\boldsymbol{y}_{G,k} - \boldsymbol{\mu})\,
{[g^{(n)}]' ((\boldsymbol{y}_{G,k} - \boldsymbol{\mu})^\top \boldsymbol{\Sigma}^{-1} (\boldsymbol{y}_{G,k} - \boldsymbol{\mu}))\over 
g^{(n)} ((\boldsymbol{y}_{G,k} - \boldsymbol{\mu})^\top \boldsymbol{\Sigma}^{-1} (\boldsymbol{y}_{G,k} - \boldsymbol{\mu}))
}
\\[0,2cm]
&-\boldsymbol{\lambda}^\top
\sum_{k=1}^{m}
{
	g^{(2)}([\boldsymbol{\lambda}^\top (\boldsymbol{y}_{G,k}-\boldsymbol{\mu})+\tau]^2+(\boldsymbol{y}_{G,k}-\boldsymbol{\mu})^\top \boldsymbol{\Sigma}^{-1}(\boldsymbol{y}_{G,k}-\boldsymbol{\mu}))
	\over 
\int_{-\infty}^{\boldsymbol{\lambda}^\top (\boldsymbol{y}_{G,k}-\boldsymbol{\mu})+\tau}
{
	g^{(2)}(s^2+(\boldsymbol{y}_{G,k}-\boldsymbol{\mu})^\top \boldsymbol{\Sigma}^{-1}(\boldsymbol{y}_{G,k}-\boldsymbol{\mu}))
}
{\rm d}s
}
\\[0,2cm]
&-
2\boldsymbol{\Sigma}^{-1}
\sum_{k=1}^{m}
(\boldsymbol{y}_{G,k}-\boldsymbol{\mu})\, 
{
	\int_{-\infty}^{\boldsymbol{\lambda}^\top (\boldsymbol{y}_{G,k}-\boldsymbol{\mu})+\tau}
{
	[g^{(2)}]'(s^2+(\boldsymbol{y}_{G,k}-\boldsymbol{\mu})^\top \boldsymbol{\Sigma}^{-1}(\boldsymbol{y}_{G,k}-\boldsymbol{\mu}))
}
{\rm d}s
	\over 
	\int_{-\infty}^{\boldsymbol{\lambda}^\top (\boldsymbol{y}_{G,k}-\boldsymbol{\mu})+\tau}
	{
		g^{(2)}(s^2+(\boldsymbol{y}_{G,k}-\boldsymbol{\mu})^\top \boldsymbol{\Sigma}^{-1}(\boldsymbol{y}_{G,k}-\boldsymbol{\mu}))
	}
	{\rm d}s
}
\\[0,2cm]
&+
2\boldsymbol{\Sigma}^{-1}
\sum_{k=1}^{m}
(\boldsymbol{y}_{G,k}-\boldsymbol{\mu})\, 
{
	[g^{(1)}]'
	( 
	(\boldsymbol{y}_{G,k}-\boldsymbol{\mu})^\top \boldsymbol{\Sigma}^{-1}(\boldsymbol{y}_{G,k}-\boldsymbol{\mu})
	)
	\over
g^{(1)}
( 
(\boldsymbol{y}_{G,k}-\boldsymbol{\mu})^\top \boldsymbol{\Sigma}^{-1}(\boldsymbol{y}_{G,k}-\boldsymbol{\mu})
)
},
\end{align*}

\item[(ii)] 
\begin{align*}
{\partial \ell(\boldsymbol{\theta})\over\partial\boldsymbol{\Sigma}^{-1}}
&=
{m\over 2}\, \boldsymbol{\Sigma}
+
\sum_{k=1}^{m}
(\boldsymbol{y}_{G,k} - \boldsymbol{\mu})
(\boldsymbol{y}_{G,k} - \boldsymbol{\mu})^\top \, 
\dfrac{[g^{(n)}]' ((\boldsymbol{y}_{G,k} - \boldsymbol{\mu})^\top \boldsymbol{\Sigma}^{-1} (\boldsymbol{y}_{G,k} - \boldsymbol{\mu}))}{
g^{(n)} ((\boldsymbol{y}_{G,k} - \boldsymbol{\mu})^\top \boldsymbol{\Sigma}^{-1} (\boldsymbol{y}_{G,k} - \boldsymbol{\mu}))
}
\\[0,2cm]
&
+
\sum_{k=1}^{m}
(\boldsymbol{y}_{G,k} - \boldsymbol{\mu})
(\boldsymbol{y}_{G,k} - \boldsymbol{\mu})^\top \, 
\dfrac{
	\int_{-\infty}^{\boldsymbol{\lambda}^\top (\boldsymbol{y}_{G,k}-\boldsymbol{\mu})+\tau}
	[g^{(2)}]'(s^2+(\boldsymbol{y}_{G,k}-\boldsymbol{\mu})^\top \boldsymbol{\Sigma}^{-1}(\boldsymbol{y}_{G,k}-\boldsymbol{\mu}))
	{\rm d}s
}{
\int_{-\infty}^{\boldsymbol{\lambda}^\top (\boldsymbol{y}_{G,k}-\boldsymbol{\mu})+\tau}
{
	g^{(2)}(s^2+(\boldsymbol{y}_{G,k}-\boldsymbol{\mu})^\top \boldsymbol{\Sigma}^{-1}(\boldsymbol{y}_{G,k}-\boldsymbol{\mu}))
}
{\rm d}s
}
\\[0,2cm]
&-
\sum_{k=1}^{m}
(\boldsymbol{y}_{G,k} - \boldsymbol{\mu})
(\boldsymbol{y}_{G,k} - \boldsymbol{\mu})^\top \, 
		{
	[g^{(1)}]'
	( 
	(\boldsymbol{y}_{G,k}-\boldsymbol{\mu})^\top \boldsymbol{\Sigma}^{-1}(\boldsymbol{y}_{G,k}-\boldsymbol{\mu})
	)
	\over 
	g^{(1)}
	( 
	(\boldsymbol{y}_{G,k}-\boldsymbol{\mu})^\top \boldsymbol{\Sigma}^{-1}(\boldsymbol{y}_{G,k}-\boldsymbol{\mu})
	)
}
\\[0,2cm]
&-
{m\over 2}\,
{
	\boldsymbol{\Sigma}\boldsymbol{\lambda} \boldsymbol{\lambda}^\top \boldsymbol{\Sigma}
	\over 
1+\boldsymbol{\lambda}^\top\boldsymbol{\Sigma}\boldsymbol{\lambda}
}
-
m\,
{	\boldsymbol{\Sigma}\boldsymbol{\lambda} \boldsymbol{\lambda}^\top \boldsymbol{\Sigma}\over (1+\boldsymbol{\lambda}^\top\boldsymbol{\Sigma}\boldsymbol{\lambda})^2}
\,
{
\int_{-\infty}^{\tau} 
s^2\,
[g^{(1)}]'\big({s^2\over 1+\boldsymbol{\lambda}^\top\boldsymbol{\Sigma}\boldsymbol{\lambda}} \big){\rm d}s
	\over 
\int_{-\infty}^{\tau} 
g^{(1)}\big({s^2\over 1+\boldsymbol{\lambda}^\top\boldsymbol{\Sigma}\boldsymbol{\lambda}}\big){\rm d}s
},
\end{align*}

\item[(iii)] 
\begin{align*}
{\partial \ell(\boldsymbol{\theta})\over\partial\boldsymbol{\lambda}}
&=
\sum_{k=1}^{m}
(\boldsymbol{y}_{G,k}-\boldsymbol{\mu})\,
\dfrac{
	g^{(2)}([\boldsymbol{\lambda}^\top (\boldsymbol{y}_{G,k}-\boldsymbol{\mu})+\tau]^2+(\boldsymbol{y}_{G,k}-\boldsymbol{\mu})^\top \boldsymbol{\Sigma}^{-1}(\boldsymbol{y}_{G,k}-\boldsymbol{\mu}))}{
\int_{-\infty}^{\boldsymbol{\lambda}^\top (\boldsymbol{y}_{G,k}-\boldsymbol{\mu})+\tau}
{
	g^{(2)}(s^2+(\boldsymbol{y}_{G,k}-\boldsymbol{\mu})^\top \boldsymbol{\Sigma}^{-1}(\boldsymbol{y}_{G,k}-\boldsymbol{\mu}))
}
{\rm d}s
}
\\[0,2cm]
&+
m\,
\dfrac{\boldsymbol{\Sigma}\boldsymbol{\lambda}}{
1+\boldsymbol{\lambda}^\top\boldsymbol{\Sigma}\boldsymbol{\lambda}}
+
2m\, 
{\boldsymbol{\Sigma}\boldsymbol{\lambda}\over (1+\boldsymbol{\lambda}^\top\boldsymbol{\Sigma}\boldsymbol{\lambda})^2}\,
\dfrac{\int_{-\infty}^{\tau} 
	s^2
	[g^{(1)}]'\big({s^2\over 1+\boldsymbol{\lambda}^\top\boldsymbol{\Sigma}\boldsymbol{\lambda}}\big)
	{\rm d}s}{
\int_{-\infty}^{\tau} 
g^{(1)}\big({s^2\over 1+\boldsymbol{\lambda}^\top\boldsymbol{\Sigma}\boldsymbol{\lambda}}\big){\rm d}s
},
\end{align*}

\item[(iv)]
\begin{align*}
	{\partial \ell(\boldsymbol{\theta})\over\partial\tau}
	=
	\sum_{k=1}^{m}
	\dfrac{		g^{(2)}([\boldsymbol{\lambda}^\top (\boldsymbol{y}_{G,k}-\boldsymbol{\mu})+\tau]^2+(\boldsymbol{y}_{G,k}-\boldsymbol{\mu})^\top \boldsymbol{\Sigma}^{-1}(\boldsymbol{y}_{G,k}-\boldsymbol{\mu}))
	}{
		\int_{-\infty}^{\boldsymbol{\lambda}^\top (\boldsymbol{y}_{G,k}-\boldsymbol{\mu})+\tau}
			g^{(2)}(s^2+(\boldsymbol{y}_{G.k}-\boldsymbol{\mu})^\top \boldsymbol{\Sigma}^{-1}(\boldsymbol{y}_{G,k}-\boldsymbol{\mu}))
		{\rm d}s
	}
-
m\,
\dfrac{g^{(1)}\big({\tau^2\over 1+\boldsymbol{\lambda}^\top\boldsymbol{\Sigma}\boldsymbol{\lambda}}\big)}{\int_{-\infty}^{\tau} 
	g^{(1)}\big({s^2\over 1+\boldsymbol{\lambda}^\top\boldsymbol{\Sigma}\boldsymbol{\lambda}}\big){\rm d}s}.
\end{align*}
\end{itemize}

No closed-form solution to the maximization problem is available. As such, the maximum likelihood (ML) estimator of $\bm\theta$, denoted by $\widehat{\boldsymbol{\theta}}$, can only be obtained via numerical optimization.  If $I(\boldsymbol{\theta}_0)$ denotes the expected Fisher information matrix, where $\boldsymbol{\theta}_0$ is the true value of the population parameter vector, then, under well-known regularity conditions \citep{Davison08}, it follows that
\begin{align}\label{asymptotic normality}
\sqrt{m}[I(\boldsymbol{\theta}_0)]^{1/2}(\widehat{\boldsymbol{\theta}}-\boldsymbol{\theta}_0)
\stackrel{d}{\longrightarrow} N(\boldsymbol{0}_{(n+1)^2\times 1},I_{(n+1)^2\times(n+1)^2}),
\quad 
\text{as} \ m\to\infty,
\end{align}
where $\boldsymbol{0}_{(n+1)^2\times 1}$ is the $(n+1)^2\times$ zero vector, and $I_{(n+1)^2\times(n+1)^2}$
is the ${(n+1)^2\times(n+1)^2}$ identity matrix. Since the  expected Fisher information can be approximated by its observed version (obtained from the Hessian matrix), we can use the diagonal elements of this observed version to approximate the standard
errors of the ML estimates.

Note that, for $\bm \lambda=0$ and $\tau=0$, the multivariate extended $G$-skew-normal belongs to the exponential family.
This is easy to verify because, in this case, the \text{EGSE}$_n$ PDF in \eqref{def-pdf-T*},
with $g^{(n)}(x) = \exp(-x/2)$ and $Z_{g^{(n)}}= 2\pi$, can be expressed as
\begin{align*}
	f_{\boldsymbol{Y}}(\boldsymbol{y})
	&=
		\frac{1}{2\pi |\boldsymbol{\Sigma}|^{1/2}}\, 
	\exp\left(
	-{1\over 2}\,
\boldsymbol{y}_{G}^\top\boldsymbol{\Sigma}^{-1}\boldsymbol{y}_{G}
+
\boldsymbol{y}_{G}^\top\boldsymbol{\Sigma}^{-1}\boldsymbol{\mu}
	-{1\over 2}\,
\boldsymbol{\mu}^\top \boldsymbol{\Sigma}^{-1}\boldsymbol{\mu}
	\right)
	\, \prod_{i=1}^n G_i'(y_i)
		\\[0,2cm]
	&=
	H(\boldsymbol{y})
	\exp\left(
	S^\top(\bm\theta) T(\boldsymbol{y})-\psi(\boldsymbol{\theta})\right),
	\quad 
	\boldsymbol{y}\in D^n,
\end{align*}
where $\boldsymbol{\Sigma}^{-1}\equiv (\sigma_{ij}^{-1})_{n\times n}$ is the inverse matrix of $\boldsymbol{\Sigma}$, $H(\boldsymbol{y})=\prod_{i=1}^n G_i'(y_i)$, $\psi(\boldsymbol{\theta})=\boldsymbol{\mu}^\top \boldsymbol{\Sigma}^{-1}\boldsymbol{\mu}/2+\log(2\pi |\boldsymbol{\Sigma}|^{1/2})$,
\begin{align*}
T(\boldsymbol{y})
=
(\{G_i(y_i)\}_{i=1,\ldots,n},\ldots,\{G_i^2(y_i)\}_{i=1,\ldots,n}, \{G_i(y_i)G_j(y_j)\}_{1\leqslant i<j\leqslant n})^\top
\end{align*}
and
\begin{align*}
S(\bm\theta)
=
\left(\left\{\sum_{j=1}^{n}\mu_j\sigma_{ij}^{-1}\right\}_{i=1,\ldots,n},\left\{-{1\over 2}\,\sigma_{ii}^{-1}\right\}_{i=1,\ldots,n}, \{-\sigma_{ij}^{-1}\}_{1\leqslant i<j\leqslant n}\right)^\top.
\end{align*}
For distributions belongs to the exponential family
the asymptotic normality in \eqref{asymptotic normality} follows by applying Theorem 6.1 of \cite{Berk72}.

\section{Simulation study}\label{sec:simulations}

In this section, a simulation study is conducted for evaluating the performance of the maximum likelihood estimators. The simulation study considers the estimation of model parameters in the bivariate case.  For illustrative purposes, we only present the  results for the extended unit-$G$-skew-normal distribution (due to
space limitations we omit the results of the extended unit-$G$-skew-Student-$t$ distribution) with two $G_i$ functions: \( G_i(x) = \tan\left((x - {1}/{2})\pi\right)\) and \(G_i(x) = \log\left({x^3}/{(1 - x^3)} \right)\); see Table \ref{table:2-1}.

The performance and recovery of the maximum likelihood estimators are evaluated by means of the relative bias (RB) and the root mean square error (RMSE), given by
\begin{eqnarray*}
 \widehat{\textrm{RB}}(\widehat{\theta}) &=&  \frac{1}{N} \sum_{i = 1}^{N}\left| \frac{(\widehat{\theta}^{(i)} - \theta)}{\theta}\right| ,\quad
\widehat{\mathrm{RMSE}}(\widehat{\theta}) = {\sqrt{\frac{1}{N} \sum_{i = 1}^{N} (\widehat{\theta}^{(i)} - \theta)^2}},
\end{eqnarray*}
where $\theta$ and $\widehat{\theta}^{(i)}$ are the true parameter value and its $i$-th estimate, and $N$ is the number of Monte Carlo replications. The simulation scenario considered is as follows: the sample size varies between \( n \in \{200, 500, 1000, 2000\} \), with the true parameters defined as $$(\mu_1, \mu_2, \lambda_1, \lambda_2, \tau, \sigma_1, \sigma_2)^\top = (1,1,0.5,0.6,0.5,1,1)^\top,$$ and \(\rho\) assuming values \(\{0.10, 0.25, 0.50, 0.75, 0.90 \}\). In all cases, 100 Monte Carlo replications were performed for each setting.


Figures \ref{Figure 3.1}--\ref{Figure 3.4} show maximum likelihood estimation results.  From these figures, it is possible to observe a clear convergence of the RB towards zero for all parameters as sample sizes increase. This pattern is also evident when analyzing the RMSE, indicating a decrease in the corresponding variance as the sample size increases. From Figure \ref{Figure 3.2}, it is observed that the RMSE of $\widehat{\lambda}_{1}$ does not consistently decrease across all possibilities for $\rho$. Several factors may influence this behavior, such as the sample size, the number of iterations, or the inverse transformation $G^{-1}_i$ used.

\begin{figure}[H]
  \centering
  \label{fig_mc_01}
    \includegraphics[width=0.85\linewidth]{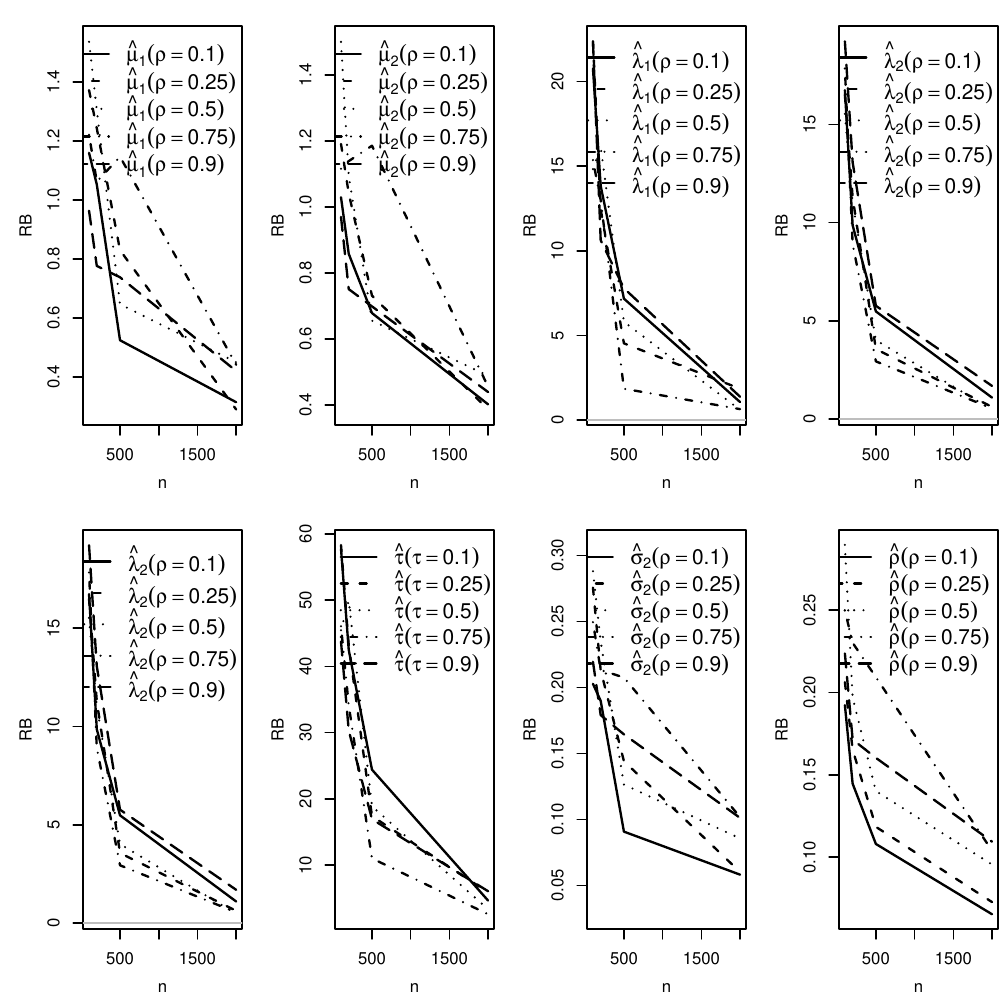}
    \caption{Relative bias for $G_{i}^{-1}(x) = \frac{1}{2} + \frac{ \arctan(x)}{\pi}$.}
    \label{Figure 3.1}
\end{figure}

\newpage
\begin{figure}[H]
    \centering
    \label{fig_mc_02}
    \includegraphics[width=0.85\linewidth]{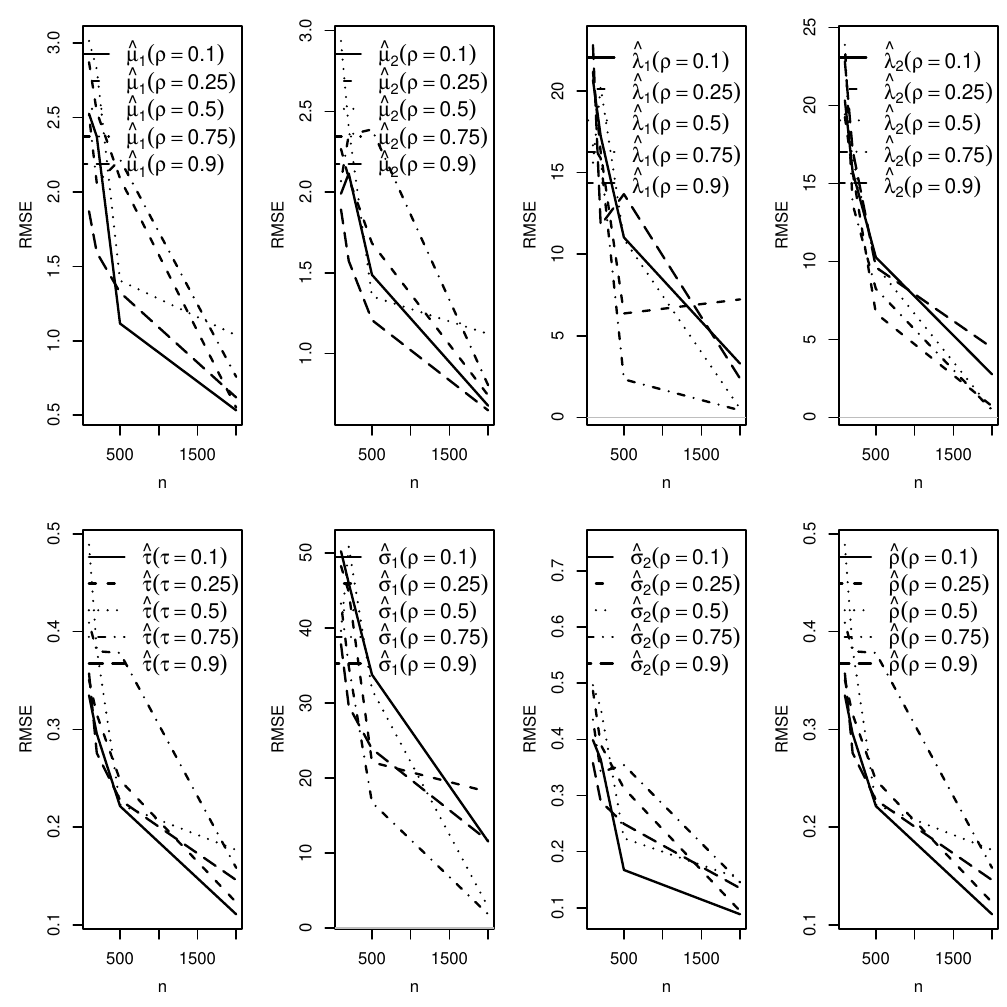}
    \caption{Root mean squared error for $G_{i}^{-1}(x) = \frac{1}{2} + \frac{ \arctan(x)}{\pi}.$}
    \label{Figure 3.2}
\end{figure}


\newpage
\begin{figure}[H]
  \centering
  \label{fig_mc_03}
    \includegraphics[width=0.85\linewidth]{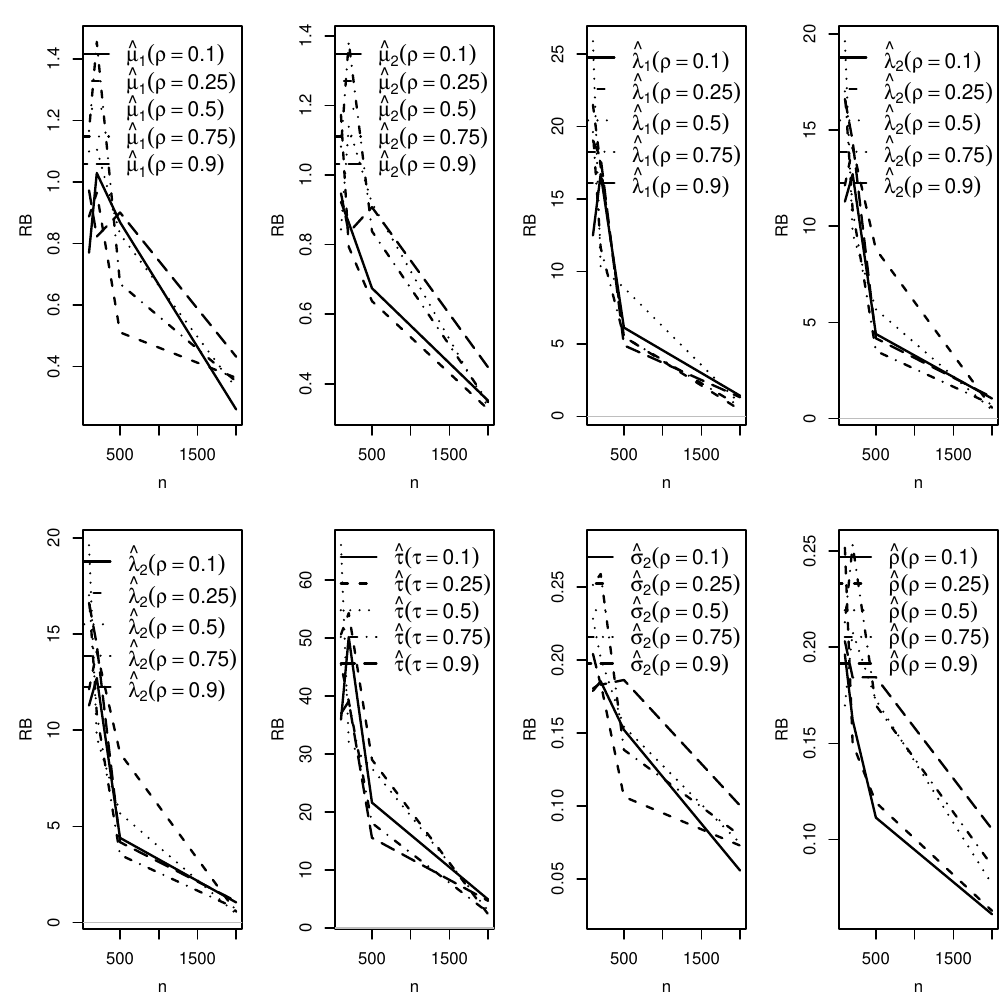}
    \caption{Relative bias for $G^{-1}_{i}(x) = \big[\frac{ \exp(x)}{1 + \exp(x)} \big]^{\frac{1}{3}}$.}
    \label{Figure 3.3}
\end{figure}

\newpage
\begin{figure}[H]
  \centering
  \label{fig_mc_04}
    \includegraphics[width=0.85\linewidth]{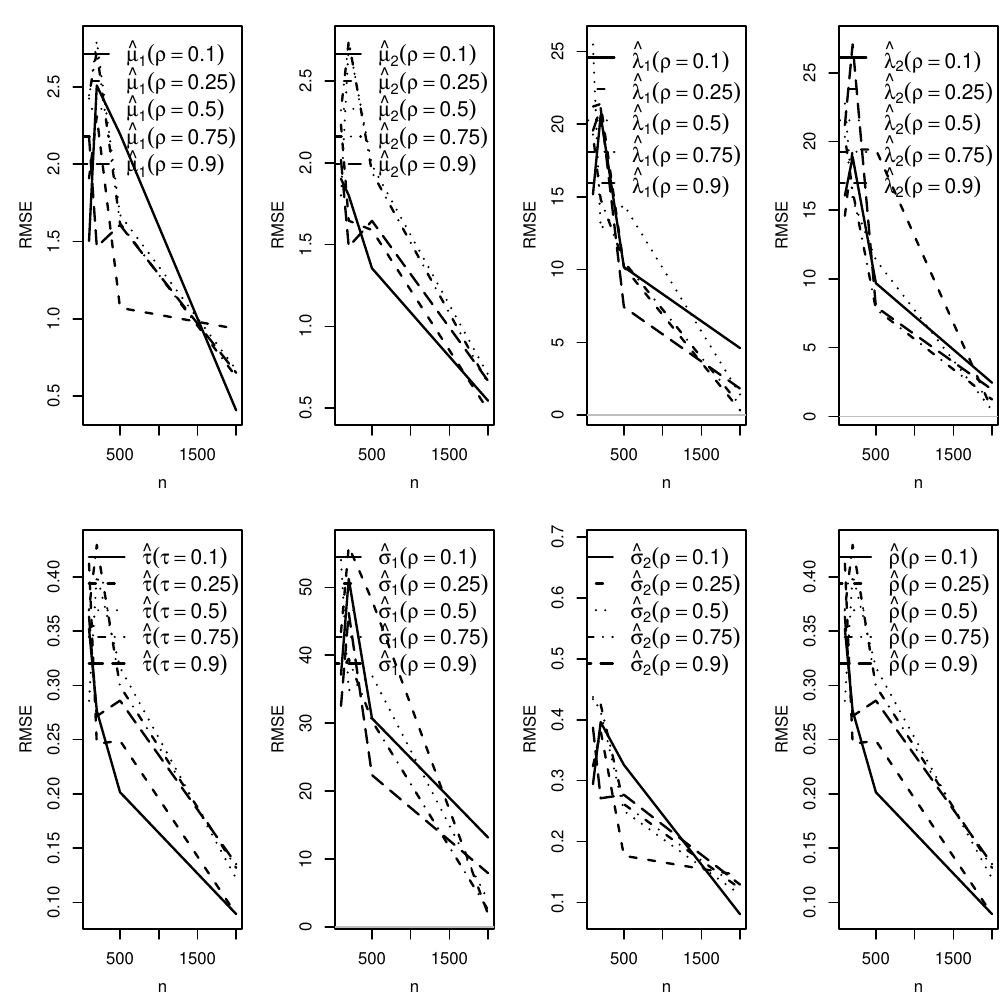}
    \caption{Root mean squared error for $G^{-1}_{i}(x) = \big[\frac{ \exp(x)}{1 + \exp(x)} \big]^{\frac{1}{3}}$.}
    \label{Figure 3.4}
\end{figure}


\newpage
\section{Application to real data}\label{sec:applications}

\quad \, In this section, we illustrate the proposed model and the inferential method using real data on socioeconomic indicators for each of Switzerland's 47 French-speaking provinces in 1888. This data set is called \textit{swiss} and is available in the R software. The aim of the study was to explore the relationships between fertility (measured as the birth rate) and several other socioeconomic variables in 47 districts. The variables contained in the dataset are:
\begin{itemize}
    \item \textit{Fertility}: Fertility rate (average number of births per 1000 women).
    \item \textit{Agriculture}: Percentage of men involved in agricultural activities.
    \item \textit{Examination}: Percentage of military draftees \textit{draftees} who received a high score on aptitude exams.
    \item \textit{Education}: Percentage of men with education beyond primary education.
    \item \textit{Catholic}: Percentage of Catholics (as a measure of religion and tradition).
    \item \textit{Infant.Mortality}: Infant mortality rate (number of baby deaths per 1000 live births).
\end{itemize}

For the application presented here, the variables Education and Agriculture were considered. The data can be found at \href{https://stat.ethz.ch/R-manual/R-devel/library/datasets/html/swiss.html}{Swiss Fertility and Socioeconomic Indicators (1888).}

Table \ref{tab:summary_stats} presents the descriptive statistics of the two variables: Education and Agriculture, both with a set of 47 observations. For the Education variable, it is observed that the minimum value recorded is 0.010, while the maximum reaches 0.530, with a median of 0.080 and an average of 0.1098. The dispersion of the Education data is reflected by the standard deviation (SD) of 0.0962, which suggests considerable variation in relation to the mean. This is further evidenced by the coefficient of variation (CV) of 87.5822, indicating a high relative variability of the data. Positive skewness, with a skewness coefficient (CS) of 2.3428, suggests that the data distribution is skewed to the right, which is reinforced by the kurtosis coefficient (CK) of 6.5414, indicating a more elongated distribution with heavy tails. Considering the Agriculture variable, the minimum value is 0.012 and the maximum is 0.897, with a median of 0.541, very close to the average of 0.5066, which suggests a more balanced distribution. The standard deviation is higher, 0.2271, reflecting greater data dispersion compared to Education. The coefficient of variation is 44.8311, less high than that of Education, suggesting less relative variability. The Agriculture distribution presents negative skewness, with an asymmetry coefficient of -0.3309, indicating a slight leftward bias. The negative kurtosis coefficient (-0.7926) suggests a flatter distribution with lighter tails, in contrast to the more elongated distribution of Education.

\begin{table}[h!]
\centering
\begin{tabular}{lcccccccccc}
\hline
\textbf{Variables} & \textbf{n} & \textbf{Minimum} & \textbf{Median} & \textbf{Mean} & \textbf{Maximum} & \textbf{SD} & \textbf{CV} & \textbf{CS} & \textbf{CK} \\
\hline
\textbf{Education} & 47 & 0.01 & 0.08 & 0.11 & 0.53 & 0.096 & 87.58 & 2.33 & 6.54 \\
\textbf{Agriculture} & 47 & 0.012 & 0.54 & 0.51 & 0.9 & 0.23 & 44.83 & -0.33 & -0.79 \\
\hline
\end{tabular}
\caption{Summary statistics.}
\label{tab:summary_stats}
\end{table}

The extended unit-$G$-skew-normal and extended unit-$G$-skew-Student-$t$ distributions were used to fit the data. We considered the $G_i$ functions with domain $D\in(0,1)$; see Table \ref{table:2-1}. The model parameters were estimated according to the methodology presented in Section \ref{sec:mle} -- for simplification purposes $\tau$ was set to zero. The estimation of the $\nu$ parameter of the extended unit-$G$-skew-Student-$t$ distribution was carried out by using the profile likelihood method. First, an initial grid of values was defined for $\nu\in\{1,2,\ldots,50\}$, then for each fixed value of $\nu$ it is computed the maximum likelihood estimates of the remaining parameters and also the log-likelihood function. The final estimate of $\nu$ is the one that maximizes the log-likelihood function and the associated estimates of the remaining parameters are then the final ones; see \cite{sauloetal:2021}.

Tables \ref{tab: FDIC 1}-\ref{tab: FDIC 4} report the Kolmogorov-Smirnov (KS) and Anderson-Darling (AD) tests, the maximum likelihood estimates, and the standard errors for the extended unit-$G$-skew-normal and extended unit-$G$-skew-Student-$t$ distributions. Moreover, Figures \ref{fig:matriz_imagens1}-\ref{fig:matriz_imagens3} display the quantile versus quantile (QQ) plots of the randomized quantile \citep{sauloetal:2022} residuals for these models. From these results, we observe that the extended unit-$G$-skew-normal
model provides better adjustment compared to the unit-$G$-skew-Student-$t$ model. Note that the results of the QQ plots indicate that $G_{i}(x) = \log({x}/({1-x}))$ shows better agreement with the expected standard normal distribution; note also that the p-values of the KS and AD tests favor the extended unit-$G$-skew-normal with $G_{i}(x) = \log({x}/({1-x}))$.

\begin{table}[!ht]
\small
\centering
\caption{KS and AD test results.}
\begin{tabular}{lccccc}
\hline
\multicolumn{3}{c}{\textbf{Extended unit-$G$-skew-Student-$t$}} \\
\hline
$G_i(x)$  & p-value.KS & p-value.AD \\
  \hline
$\tan(\pi(x -\frac{1}{2}))$  & 0.18 & 0.08  \\
  \hline
$\log(\frac{x^{3}}{1-x^{3}})$ & 0.18 & 0.07  \\
  \hline
 $ \log(\frac{x^5}{1-x^5})$  & 0.18 & 0.02 \\
   \hline
$\log(-\log(1-x))$  & 0.17 & 0.03 \\
  \hline
$- \log(1-x)$  & 0.05 & 0.02 \\
  \hline
$1- \log(-\log(x))$ & 0.18 & 0.04 \\
  \hline
$ \log(\log(\frac{1}{-x+1})+1)$  & 0.00 & 0.00 \\
  \hline
 $ \log( \frac{x}{1-x})$  & 0.16 & 0.03 \\
   \hline
\end{tabular}
\label{tab: FDIC 1}
\end{table}

\begin{table}[!ht]
\small
\centering
\caption{KS and AD test results.}
\begin{tabular}{lccccc}
\hline
\multicolumn{3}{c}{\textbf{Extended unit-$G$-skew-normal}} \\
\hline
$G_i(x)$  & p-value.KS & p-value.AD \\
  \hline
$\tan(\pi(x -\frac{1}{2}))$ & 0.03 & 0.01  \\
  \hline
$\log(\frac{x^{3}}{1-x^{3}})$  & 0.23 & 0.03 \\
  \hline
 $ \log(\frac{x^5}{1-x^5})$  & 0.23 & 0.04 \\
   \hline
$\log(-\log(1-x))$  & 0.35 & 0.03 \\
  \hline
$- \log(1-x)$  & 0.24 & 0.08  \\
  \hline
$1- \log(-\log(x))$  & 0.35 & 0.06 \\
  \hline
$ \log(\log(\frac{1}{-x+1})+1)$  & 0.00 & 0.00 \\
  \hline
 $ \log( \frac{x}{1-x})$ & 0.35 & 0.05  \\
   \hline
\end{tabular}

\label{tab: FDIC 2}
\end{table}

\begin{table}[!ht]
\small
\centering
\caption{ Parameters estimates (with standard errors in parentheses).}
\begin{tabular}{lccccccccc}
\hline
\multicolumn{9}{c}{\textbf{Extended unit-$G$-skew-Student-$t$}} \\
\hline
 $G_{i}(x)$ & $\hat{\mu}_1$ & $\hat{\mu}_{2}$ & $\hat{\lambda}_1$ & $\hat{\lambda}_2$ & $\hat{\sigma}_1$ & $\hat{\sigma}_2$ & $\hat{\rho}$ &  $\hat{\nu}$\\
  \hline
$\tan(\pi(x -\frac{1}{2}))$  & -1.63 & -0.06 & -2.23 & -2.72 & 3.77 & 0.85 & -0.31 & 2\\
   & (0.41) & (0.27) & (0.94) & (1.57) & (0.87) & (0.14) & (0.30) & -\\
   \hline
   $\log(\frac{x^{3}}{1-x^{3}})$
 & -4.68 & -4.10 & -0.65 & -0.10 & 4.53 & 4.01 & -0.88 & 31 \\
    & (1.04) & (1.67) & (0.29) & (0.28) & (1.96) & (2.31) & (0.13) & - \\
  \hline
$ \log(\frac{x^5}{1-x^5})$  & -5.21 & -8.19 & -0.92 & -0.20 & 10.45 & 6.89 & -0.92 & 16 \\
  & (1.56) & (1.85) & (0.87) & (0.20) & (3.28) & (2.84) & (0.06) & -\\
  \hline
$\log(-\log(1-x))$ & -1.46 & -0.61 & -5.51 & -3.22 & 1.34 & 0.90 & -0.49 & 46 \\
 & (0.22) & (0.39) & (3.05) & (1.61) & (0.11) & (0.05) & (0.28) & - \\
  \hline
$- \log(1-x)$  & 0.12 & 0.62 & 1.51 & 1.47 & 0.08 & 0.50 & -0.55 & 8 \\
   & (0.02) & (0.26) & (7.39) & (1.52) & (0.01) & (0.08) & (0.14) & - \\
    \hline
$1- \log(-\log(x))$  & 0.08 & 1.52 & 0.39 & -0.08 & 0.32 & 0.71 & -0.67 & 15 \\
   & (0.25) & (0.51) & (3.46) & (1.91) & (0.03) & (0.08) & (0.10) & -\\
 \hline
$ \log(\log(\frac{1}{-x+1})+1)$ & 0.04 & 0.93 & 0.73 & 0.19 & -0.10 & 0.46 & 0.76 & 23\\
    & (0.02) & (0.06) & (2.25) & (0.32) & (0.01) & (0.01) & (0.03) & - \\
    \hline
$ \log( \frac{x}{1-x})$ & -3.12 & 1.20 & 0.26 & -1.06 & 1.18 & 1.72 & -0.84 & 24 \\
   & (0.40) & (0.34) & (1.15) & (0.91) & (0.34) & (0.37) & (0.10) & - \\
   \hline
\end{tabular}
\label{tab: FDIC 3}
\end{table}

\begin{table}[!ht]
\small
\centering
\caption{ Parameters estimates (with standard errors in parentheses).}
\begin{tabular}{lccccccccc}
\hline
\multicolumn{9}{c}{\textbf{Extended unit-$G$-skew-normal}} \\
\hline
 $G_{i}(x)$ & $\hat{\mu}_1$ & $\hat{\mu}_{2}$ & $\hat{\lambda}_1$ & $\hat{\lambda}_2$ & $\hat{\sigma}_1$ & $\hat{\sigma}_2$ & $\hat{\rho}$ \\
  \hline
$\tan(\pi(x -\frac{1}{2}))$  & -1.26 & 0.32 & -2.75 & -3.02 & 6.67 & 3.75 & -0.14 \\
  & (0.39) & (0.51) & (2.41) & (3.80) & (0.63) & (0.35) & (0.14) \\
   \hline
   $\log(\frac{x^{3}}{1-x^{3}})$ & -3.88 & -4.36 & -1.12 & -0.42 & 6.18 & 4.90 & -0.91   \\
    & (0.12) & (0.69) & (0.44) & (0.27) & (1.47) & (1.57) & (0.06)  \\
  \hline
$ \log(\frac{x^5}{1-x^5})$  & -5.40 & -6.97 & -2.02 & -0.43 & 9.66 & 4.76 & -0.78   \\
  & (0.52) & (0.96) & (1.75) & (0.33) & (1.51) & (0.78) & (0.10)  \\
  \hline
$\log(-\log(1-x))$ & -2.59 & 0.14 & -0.62 & -1.57 & 0.79 & 1.08 & -0.58   \\
 &( 0.70) & (1.27) & (0.90) & (3.60) & (0.07) & (0.65) & (0.06)   \\
  \hline
$- \log(1-x)$  & 0.14 & 0.67 & -0.05 & 0.71 & 0.13 & 0.55 & -0.55   \\
   & (0.05) & (0.20) & (3.88) & (1.09) & (0.02) & (0.01) & (0.13)    \\
    \hline
$1- \log(-\log(x))$  & 0.34 & 1.01 & -0.75 & 0.58 & 0.42 & 0.91 & -0.78    \\
   & (0.12) & (0.49) & (1.57) & (1.61) & (0.07) & (0.23) & (0.02)  \\
 \hline
$ \log(\log(\frac{1}{-x+1})+1)$ & 0.06 & 0.93 & -0.23 & 0.30 & -0.17 & 0.87 & 0.88   \\
    & (0.15) & (0.79) & (1.72) & (5.11) & (0.52) & (3.58) & (0.80)   \\
    \hline
$ \log( \frac{x}{1-x})$ & -2.36 & 0.02 & -0.14 & -0.12 & 0.89 & 1.21 & -0.71  \\
   & (1.05) & (1.02) & (3.02) & (1.89) & (0.09) & (0.12) & (0.02)   &\\
   \hline
\end{tabular}
\label{tab: FDIC 4}
\end{table}

\begin{figure}[!ht]
\centering
\subfigure[Extended unit-$G$-skew-Student-$t$ with $G_{i}(x) = \tan ((x-{1\over 2})\pi)$.]{\includegraphics[scale=0.5]{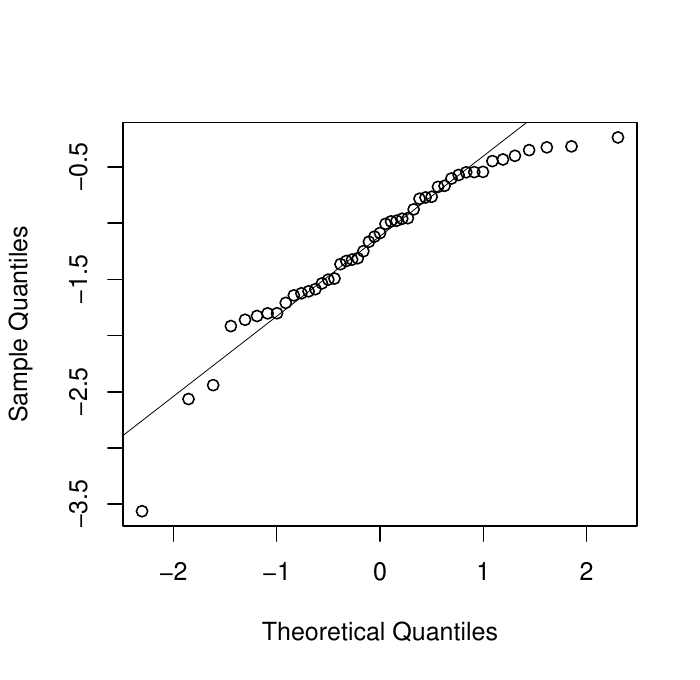}}\hspace{5mm}
\subfigure[Extended unit-$G$-skew-normal with $G_{i}(x) = \tan ((x-{1\over 2})\pi)$.]
{\includegraphics[scale=0.5]{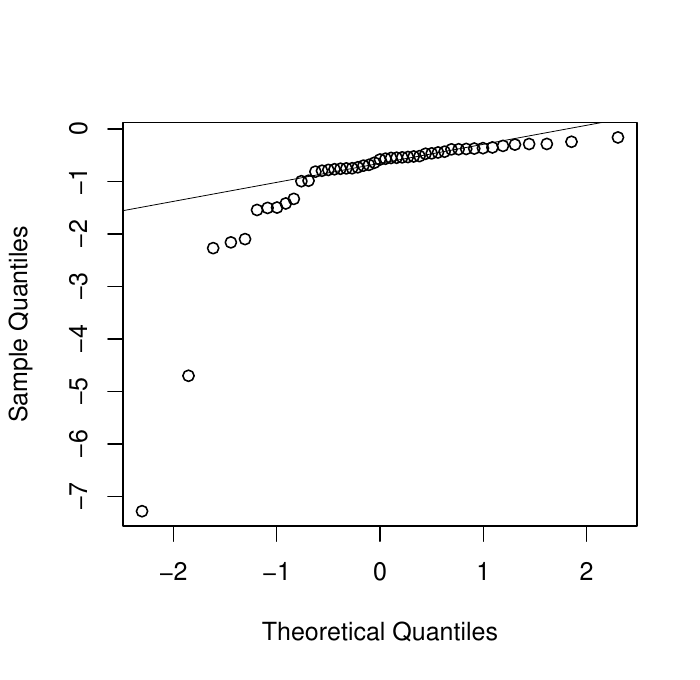}}\\
\subfigure[Extended unit-$G$-skew-Student-$t$ with $G_i(x) = \log (\frac{x^{3}}{1-x^{3}} )$.]{\includegraphics[scale=0.5]{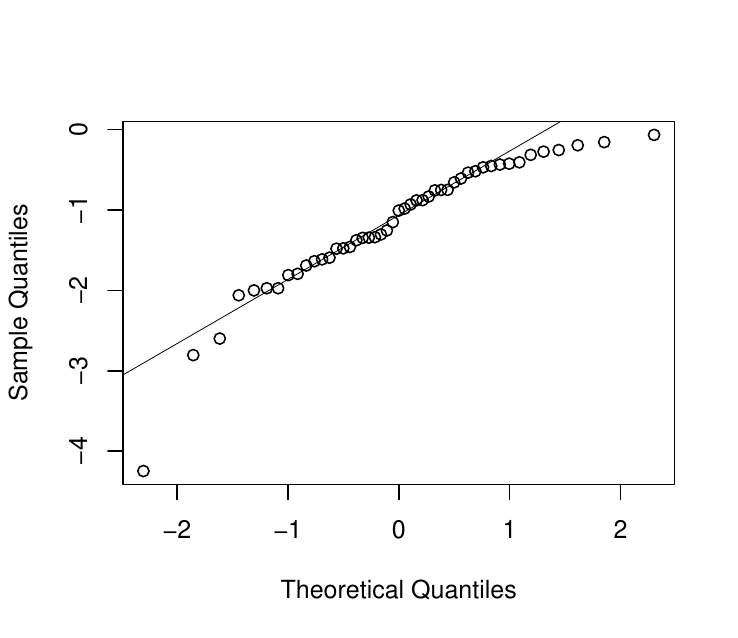}}\hspace{5mm}
\subfigure[Extended unit-$G$-skew-normal with $G_i(x) =\log (\frac{x^{3}}{1-x^{3}} )$.]{\includegraphics[scale=0.5]{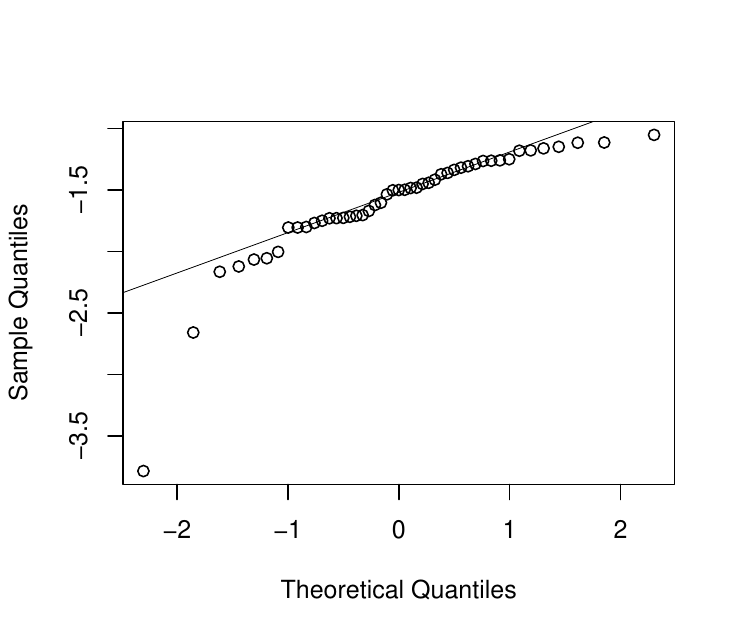}}
 \caption{{QQ plot of randomized quantile residuals for the indicated models.}}
\label{fig:matriz_imagens1}
\end{figure}

\begin{figure}[!ht]
\centering
\subfigure[Extended unit-$G$-skew-Student-$t$ with $G_{i}(x) =  \log (\frac{x^5}{1-x^5})$.]{\includegraphics[scale=0.5]{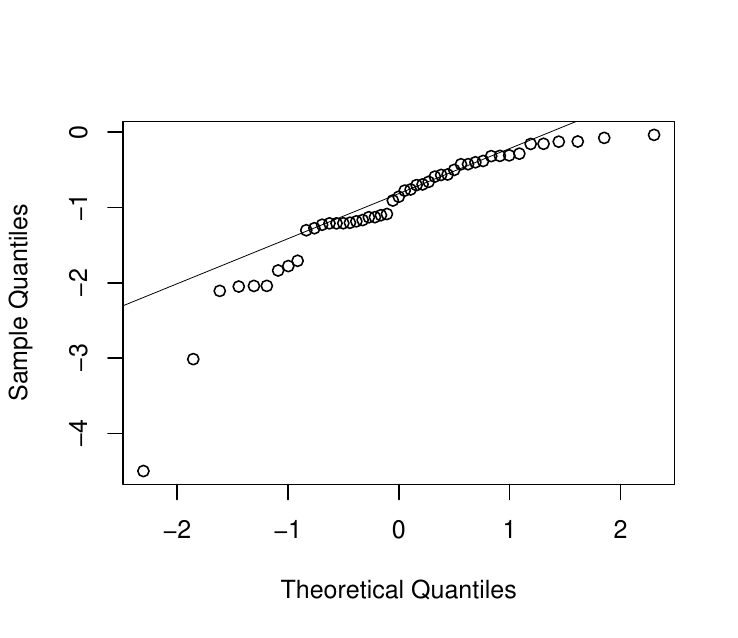}}\hspace{5mm}
\subfigure[Extended unit-$G$-skew-normal  with $G_{i}(x) = \log (\frac{x^5}{1-x^5})$.]{\includegraphics[scale=0.5]{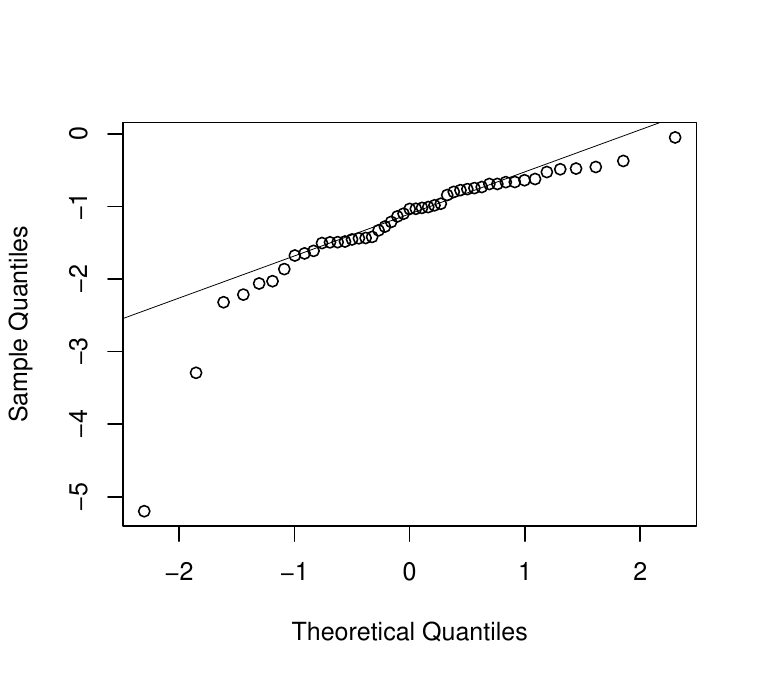}}\\
\subfigure[Extended unit-$G$-skew-Student-$t$  with $G_{i}(x) =\log(\log(\frac{1}{-x+1})+1)$.]
{\includegraphics[scale=0.5]{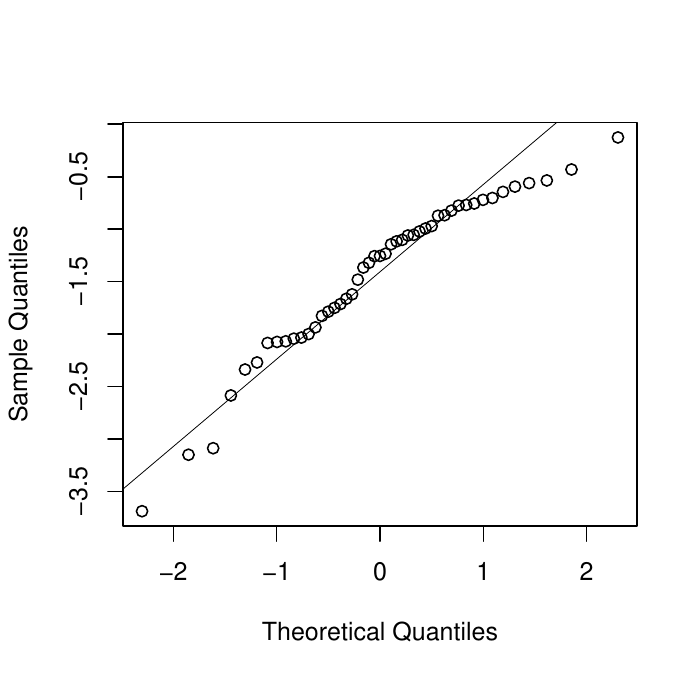}}\hspace{5mm}
\subfigure[Extended unit-$G$-skew-normal with $G_{i}(x) =\log(\log(\frac{1}{-x+1})+1)$.]
{\includegraphics[scale=0.5]{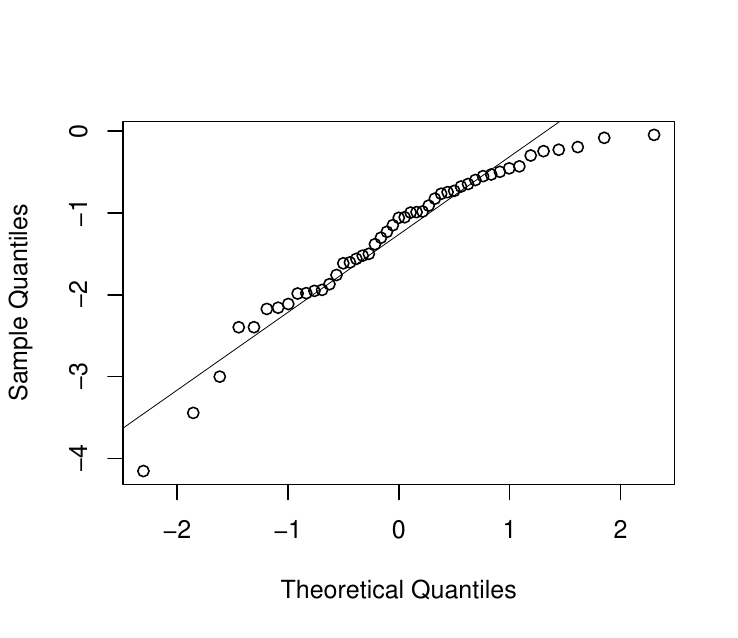}}\\
\subfigure[Extended unit-$G$-skew-Student-$t$ with $G_{i}(x) = - \log(1-x)$.]
{\includegraphics[scale=0.5]{Images/qq/f5_student.pdf}}\hspace{5mm}
\subfigure[Extended unit-$G$-skew-normal with $G_{i}(x) = - \log(1-x)$.]
{\includegraphics[scale=0.5]{Images/qq/f5_normal.pdf}}
 \caption{{QQ plot of randomized quantile residuals for the indicated models.}}
\label{fig:matriz_imagens2}
\end{figure}

\begin{figure}[!ht]
\centering
\subfigure[Extended unit-$G$-skew-Student-$t$  with $G_{i}(x) = 1-\log(-\log(x))$.]
{\includegraphics[scale=0.5]{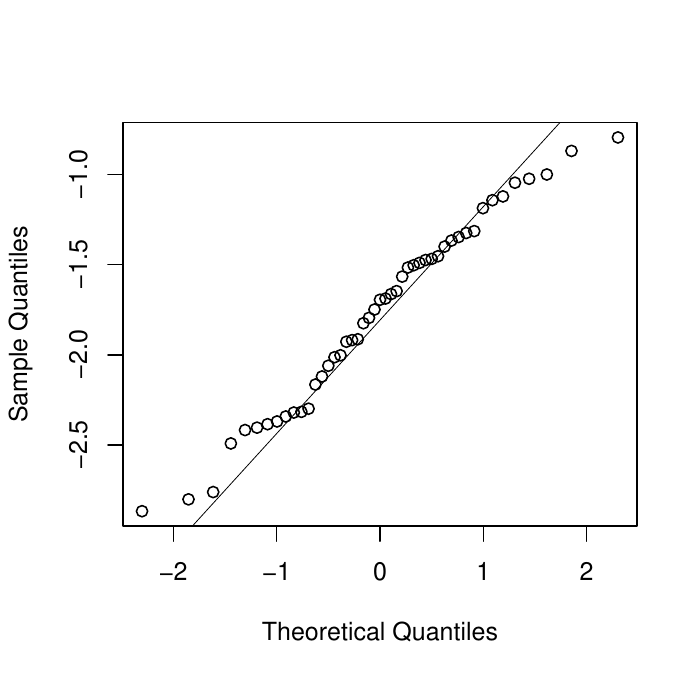}}\hspace{5mm}
\subfigure[Extended unit-$G$-skew-normal  with $G_{i}(x)= 1-\log(-\log(x))$.]
{\includegraphics[scale=0.5]{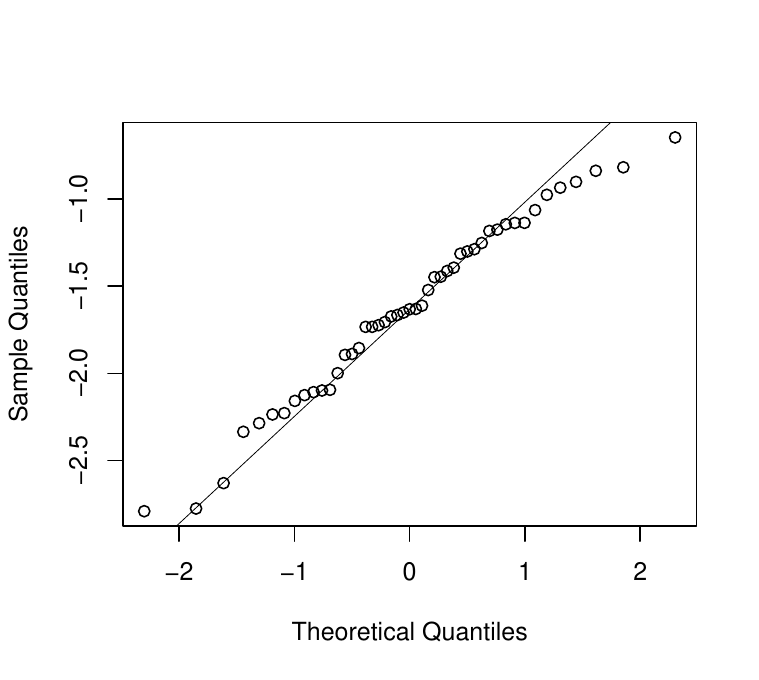}}\\
\subfigure[Extended unit-$G$-skew-Student-$t$ with $G_{i}(x) = \log(-\log(1-x))$.]
{\includegraphics[scale=0.5]{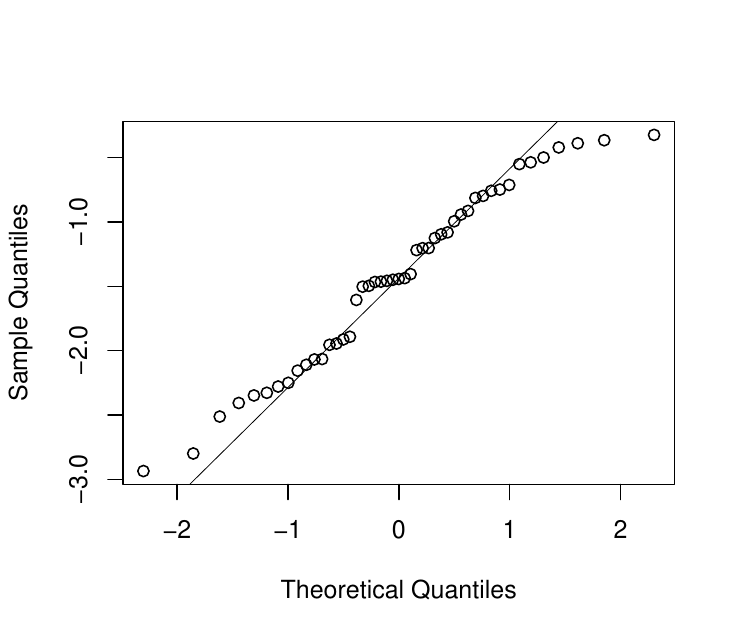}}\hspace{5mm}
\subfigure[Extended unit-$G$-skew-normal  with $G_{i}(x) =\log(-\log(1-x))$.]
{\includegraphics[scale=0.5]{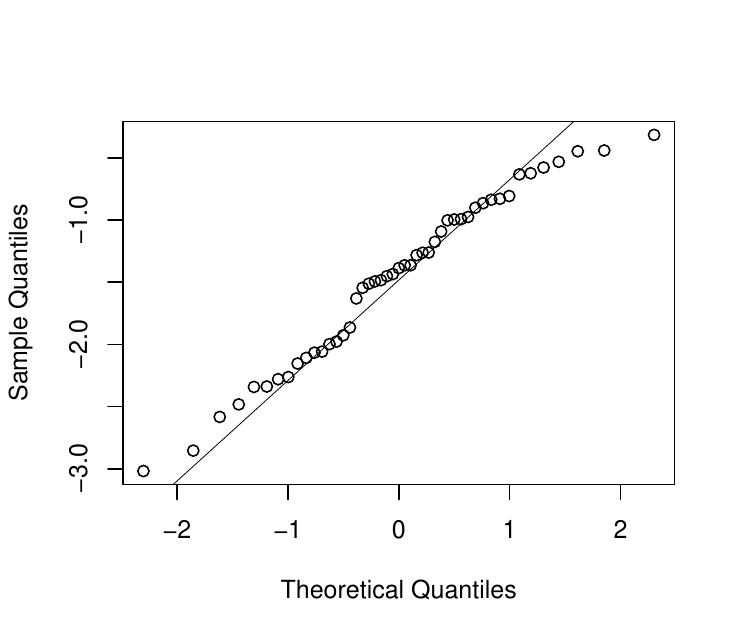}}\\
\subfigure[Extended unit-$G$-skew-Student-$t$ with $G_{i}(x) =  \log( \frac{x}{1-x})$.]
{\includegraphics[scale=0.5]{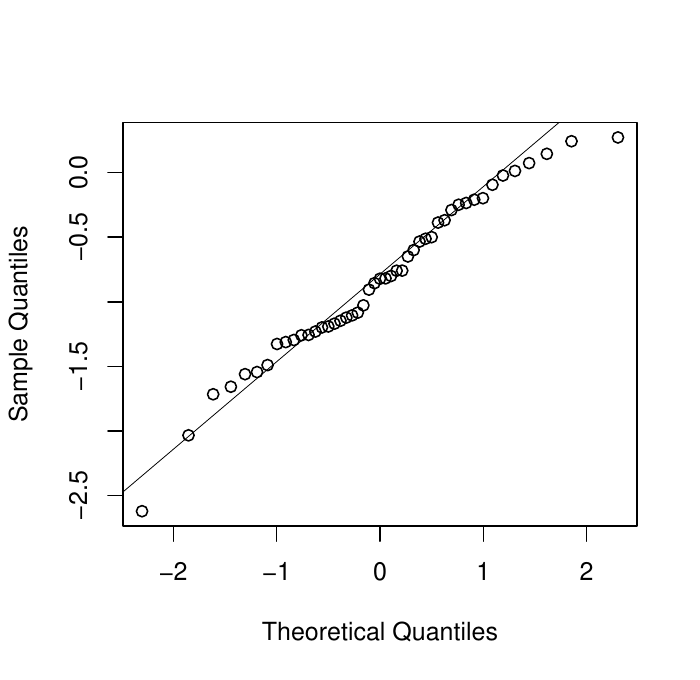}}\hspace{5mm}
\subfigure[Extended unit-$G$-skew-normal with $G_{i}(x) = \log( \frac{x}{1-x})$.]
{\includegraphics[scale=0.5]{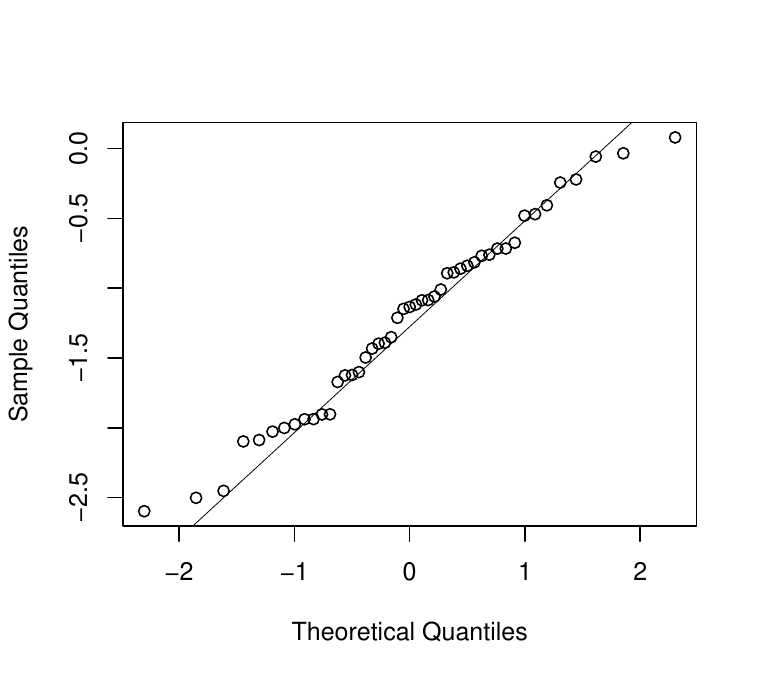}}
 \caption{{QQ plot of randomized quantile residuals for the indicated models.}}
\label{fig:matriz_imagens3}
\end{figure}

\clearpage

\section{Concluding Remarks}\label{sec:conclusions}
\noindent

In this paper, we introduced a family of multivariate asymmetric distributions over an arbitrary subset of set of real numbers, based on commonly used elliptically symmetric distributions. We have discussed several theoretical properties such as (non-)identifiability, quantiles, stochastic representation, conditional and marginal distributions, moments, and parameter estimation.  A Monte Carlo simulation study has been carried out for evaluating the performance of the maximum likelihood estimates. The simulation results show that the estimators perform very well, with relative bias and root mean square error being close to zero. We have applied the proposed models to a real socioeconomic data set,  and the results has favored the use of the extended unit-$G$-skew-normal model over the unit-$G$-skew-Student-$t$ model.

\paragraph{Acknowledgements}
The authors gratefully acknowledge financial support from CNPq, CAPES and FAP-DF, Brazil.

\paragraph{Disclosure statement}
There are no conflicts of interest to disclose.

\normalsize


%
%
%
%

\end{document}